\newcommand{\bx}{{\bf x}}
\newcommand{\bff}{{\bf f}}
\newcommand{\bR}{{\bf R}}
\newcommand{\bN}{{\bf N}}
\newcommand{\half}{\ensuremath{\frac{1}{2}}}
\newcommand{\bv}{\mathbf{v}}
\newcommand{\bw}{\mathbf{w}}
\newcommand{\ee}{\eta_n}
\newcommand{\lh}{{L^2 ({\bf R}, H)}}
\newcommand{\uml}{u_{m(l)}}
\newcommand{\llh}{L^2_{\rm loc}({\bf R},H)}
\newcommand{\lch}{L^2_{\rm caus}({\bf R},H)}
\newcommand{\lph}{L^2_{+}({\bf R},H)}
\newtheorem{theorem}{Theorem}
\newtheorem{lemma}{Lemma}
\newtheorem{corollary}{Corollary}
\newtheorem{proposition}{Proposition}
\title{A Mathematical Framework for Inverse Wave Problems in Heterogeneous Media}
\author{Kirk D. Blazek\footnotemark[3], Christiaan Stolk\footnotemark[2], and William W. Symes\footnotemark[1]}
\address{
\footnotemark[1]
The Rice Inversion Project,
Department of Computational and Applied Mathematics, Rice University,
Houston TX 77251-1892 USA, email {\tt symes@caam.rice.edu}\\
\footnotemark[2]
KdV Institute for Mathematics,
University of Amsterdam, P. O. Box 94216, 1090 GE Amsterdam, The Netherlands,
email {\tt c.c.stolk@uva.nl}\\
\footnotemark[3]
Rochester Community and Technical College,
Rochester, MN 55904 USA, email {\tt kirk.blazek@gmail.com}

}
\begin{document}

\setlength{\baselineskip}{20pt plus1pt}

\maketitle

\begin{abstract}
This paper provides a theoretical foundation for some common formulations of inverse problems in wave propagation, based on hyperbolic systems of linear integro-differential equations with bounded and measurable coefficients. 
The coefficients of these time-dependent partial differential equations respresent parametrically the spatially varying mechanical properties of materials. 
Rocks, manufactured materials, and other wave propagation environments often exhibit spatial heterogeneity in mechanical properties at a wide variety of scales, and coefficient functions representing these properties must mimic this heterogeneity. We show how to choose domains (classes of nonsmooth coefficient functions) and data definitions (traces of weak solutions) so that optimization formulations of inverse wave problems satisfy some of the prerequisites for application of Newton's method and its relatives. These results follow from the properties of a class of abstract first-order evolution systems, of which various physical wave systems appear as concrete instances. Finite speed of propagation for linear waves with bounded, measurable mechanical parameter fields is one of the by-products of this theory.

\end{abstract}

\section{Introduction}

Inverse problems for waves in heterogeneous materials occur in seismology, ultrasonic nondestructive evaluation and biomedical imaging, some electromagnetic imaging technologies, and elsewhere. The data of such problems are often idealized as traces of wave fields on space-time hypersurfaces (or surfaces of even lower dimension), and nonlinear least-squares (or other data-fitting) methods developed for their numerical solution, that is, for finding the coefficients in the systems of partial differential equations chosen to model the wave physics. Newton's method (or one of its relatives) is a natural choice of solution method for these optimization formulations, as its rapid convergence compensates to some extent for the very large computational size of reasonable discretizations - see for example \cite{Ghattas:IP25}. 

Natural and man-made materials may be so heterogeneous that few limits can be placed on the spatial regularity of coefficients representing material parameters (for the case of sedimentary rocks, see  for instance \cite[]{WaldenHosken:86,BouCouZin:87,WhiteShengNair:90}). However, traces are well-defined only for fields possessing some space-time regularity, and optimal convergence of Newton-like methods requires that objective functions and constraints be regular functions of the model parameters, in some sense. Since nonsmoothness of the material parameter (coefficient) fields implies nonsmoothness of dynamical (solution) fields, it is {\em a priori} unclear that the required traces exist, or that the fields depend sufficiently smoothly on the material parameter (coefficient) functions to permit Newton-like optimization methods to be applied. While discretized models may {\em a fortiori} produce nominally well-defined data simulations and smooth objective (data misfit) functions, accurate approximation of the continuum fields implies that the continuum limit will determine the behavior of solution algorithms for the corresponding discretized inverse problems. Also, numerical solutions to discretized inverse problems respect the underlying continuum physics only insofar as they converge under refinement of the discretization (at least in principle), and such convergence generally requires that the continuum problems have well-behaved solutions.  

The theory presented here provides a mathematical foundation for some common idealizations of inverse problems in wave propagation and approaches to their solution, for material models of bounded and measurable dependence on space variables. Measurability means roughly the existence of averages over arbitrary volumes, which seems a reasonable requirement for physical parameter fields. Most such fields are bounded as a matter of principle (positivity of density) or observation (p-wave velocity range in metals,...).

The dynamical laws underlying such problems are symmetric hyperbolic systems, either differential or integrodifferential, for a vector-valued fields $u$ defined on a domain $\Omega \subset \bR^d$ (of course, typically $d=1,2$ or $3$). Denoting the dimension of the dynamical state vector $u$ by $k \in \bN$, the systems treated in this paper take the form
\begin{equation}
\label{symmhyp}
a \frac{\partial u}{\partial t} + p(\nabla) u + bu + q \ast u = f\mbox{ in }\Omega \times \bR; \, u = 0 \mbox{ for } t < 0,
\end{equation}
in which $a, b$ are $k \times k$ matrix-valued functions on $\Omega$, $q$ is a $k \times k$ matrix-valued function on $\Omega \times \bR$ with $q(\bx,t)=0$ for $t<0$, and $p(\nabla)$ is a $k \times k$ matrix of constant-coefficient first-order differential operators in the space variables:
\begin{equation}\label{def-D}
p(\nabla)u=\sum_{j=1}^{d} p_j \frac{\partial u}{\partial x_j}, \,\,p_j \in \bR^{k \times k}, \,i=1,...,d.
\end{equation}
The right-hand side vector $f$ is a $k$-vector valued function on $\Omega \times \bR$, vanishing for $t<0$ and representing energy input to the system.

Such a system is {\em symmetric hyperbolic} if $a(\bx), q(\bx,t)$ and $p_j$ are symmetric for all $\bx \in \Omega, t \in \bR$, respectively $j=1,...,d$, and $a(\bx)$ is uniformly positive-definite: $C_*, C^* \in \bR$ exist so that $0 < C_* \le C^*$, and
\begin{equation}
\label{elliptic}
C_*I \le a(\bx) \le C^*I, \,\, \mbox{a. e. }\bx \in \Omega
\end{equation}
 
This paper has three central objectives. First, we describe constraints on data (coefficients $a,b,p,q$ and right-hand side $f$) and solution under which \eqref{symmhyp} is well-posed, even when the coefficient functions $a,b,q$ are permitted to be quite irregular in their dependence on $\bx \in \Omega$ - in fact, merely bounded and measureable. Second, we determine a sense in which the solution of \eqref{symmhyp} is regular ($C^k$, $k \ge 0$) as a function of its coefficients. Third, we characterize conditions under which traces of solutions on time-like hypersurfaces are well-defined, and regular as functions of the coefficients.

As in the similar study of \cite{stolk} for second-order hyperbolic systems, we represent \eqref{symmhyp} as an instance of a class of abstract autonomous integro-differential systems.  We study these abstract systems within a framework modeled after the duality approach introduced by J.-L. Lions and his collaborators in the 1960's \cite[]{Lions:71,LionsMagenes:72}. A natural notion of {\em weak solution} is central to this approach, and complements that of strong (pointwise) solution. The class of first-order systems studied here possesses a remarkable property: smoothing solutions in time converts weak solutions to strong (pointwise in $t$) solutions. Specialized to systems such as acoustics or elastodynamics, this property implies that appropriate traces of solutions are well-defined, provided that the right-hand-side or source term is minimally smooth in time (only!). Existence, uniqueness, and regularity of the solution as function of problem data - both right-hand side and (operator) coefficients - also follows from this fact. 

The  abstract theory accomplishes even more than that, as it applies to a much wider range of dynamics than those commonly occurring in continuum mechanics, accommodating first-order systems with operator coefficients. This added generality creates no additional difficulty for the basic theory. It is in fact very useful: as will be explained in the Discussion section, it justifies certain infeasible-model methods for inverse problems in wave propagation.

Another useful by-product of the theory is the finite speed of propagation property for hyperbolic systems with bounded, measurable coefficients, a result which so far as we can tell is new. This follows from the similar property for systems with smooth coefficients via the continuous dependence of solutions on the coefficients in the sense of convergence in measure. 

Precise statements of the main results for symmetric hyperbolic systems \eqref{symmhyp} appear in the next section, followed by a brief discussion of the related literature. The third section contains the definition of the class of abstract systems studied in this paper, and statements of the main results to be established concerning it. Proofs of these results follow in the next three sections. The seventh section shows how symmetric hyperbolic systems fit into the abstract framework, and contains proofs of the main theorems stated in the second section. The eighth section treats the hyperbolic case of linear viscoelasticity as an instance of the theory developed earlier. We end with a discussion of the potential importance of the general (operator coefficient) case of the abstract theory in formulating certain inversion algorithms, representation of energy sources, and several other matters not addressed in the body of the paper. Two appendices deal with the skew-adjointness of the acoustic grad-div operator, and the case of systems with no memory term ($q=0$ in \eqref{symmhyp}), for which initial value problems make sense.

\section{Main Results for Symmetric Hyperbolic Systems}
Our results hinge on an assumption concerning the skew-symmetric differential operator $p(\nabla)$, which to begin with may be viewed as a densely-defined operator on $L^2(\Omega)^k$, whose domain is a subset of $ C^{\infty}(\Omega^{\rm int})^k$  that contains $C^{\infty}_0(\Omega^{\rm int})^k$. We assume that $p(\nabla)$ extends to a skew-adjoint operator operator $P$ with dense domain $V \subset L^2(\Omega)^k$, and that $V$ is metrized with the graph norm of $p(\nabla)$ or its equivalent, which we denote by $\| \cdot \|_V$.

With this assumption, the first of our two major sets of results on symmetric hyperbolic systems addresses well-posedness. The conclusion with the most direct implications for the formulation of inverse problems is 

\begin{theorem}\label{symmhyp:wellposed}
Suppose that, in addition to the hypotheses explained above, the right-hand side $f$ has a square-integrable $t$-derivative: $f \in H^1_{\rm loc}(\bR,L^2(\Omega)^k)$, and is {\em causal}, that is, $f = 0, t < 0$. Then there exists a unique causal $u \in C^1(\bR,L^2(\Omega)^k) \cap C^0(\bR,V)$ satisfying \eqref{symmhyp} a. e. in $\Omega$ for each $t \in \bR$. Moreover, there exists an increasing function $C: \bR_+ \rightarrow \bR_+$, depending on $C_*, C^*$, and bounds for $b,q$, so that for every $t \in \bR$,
\begin{equation}
\label{symmhyp:energy}
\left\|\frac{\partial u}{\partial t}(\cdot,t)\right\|^2_{L^2(\Omega)^k} + \|u(\cdot,t)\|^2_V \le C(t) \int_0^t d\tau 
\left[ 
\|f(\cdot,\tau)\|^2_{L^2(\Omega)^k}
+\left\|\frac{\partial u}{\partial t}(\cdot,\tau)\right\|^2_{L^2(\Omega)^k} 
\right]
\end{equation}
\end{theorem}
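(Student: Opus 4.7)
The plan is to recast \eqref{symmhyp} as an instance of the abstract first-order integro-differential system announced in the introduction, and then apply the abstract well-posedness and regularity theorem to be proven later, finally translating the abstract conclusions into the concrete energy estimate \eqref{symmhyp:energy}.

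First, I would multiply \eqref{symmhyp} by $a^{-1}$, which is well-defined and bounded on $L^2(\Omega)^k$ by \eqref{elliptic}, to obtain
\begin{equation*}
\frac{\partial u}{\partial t} + \tilde P u + \tilde B u + \tilde Q \ast u = \tilde f,
\end{equation*}
with $\tilde P = a^{-1}P$, $\tilde B = a^{-1}b$, $\tilde Q = a^{-1}q$, and $\tilde f = a^{-1}f$. The crucial observation is that if $H := L^2(\Omega)^k$ is equipped with the weighted inner product $\langle u,v\rangle_a := \int_\Omega (au)\cdot v\,d\bx$, which by \eqref{elliptic} is equivalent to the standard inner product, then $\tilde P$ with domain $V$ is skew-adjoint on $(H,\langle\cdot,\cdot\rangle_a)$: for $u,v \in V$ one checks $\langle \tilde Pu,v\rangle_a = \langle Pu,v\rangle_{L^2} = -\langle u,Pv\rangle_{L^2} = -\langle u,\tilde Pv\rangle_a$, and skew-adjointness (not just skew-symmetry) is inherited from that of $P$. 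The operators $\tilde B$ and convolution by $\tilde Q$ are bounded on $H$ with norms controlled by $C_*^{-1}$ and pointwise bounds on $b$ and $q$. This places \eqref{symmhyp} squarely within the abstract framework of the next section.

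Next, the abstract existence-uniqueness-regularity theorem yields a unique causal weak solution $u$, and the abstract ``smoothing in time converts weak to strong solutions'' principle, together with $\partial_t f \in L^2_{\mathrm{loc}}(\bR,H)$, upgrades $u$ to $C^1(\bR,H)\cap C^0(\bR,V)$ satisfying \eqref{symmhyp} a.e. in $\bx$ for each $t$. For the energy estimate, I would differentiate the PDE in $t$, denote $v = \partial_t u$, and pair with $v$ in the $a$-weighted inner product:
\begin{equation*}
\tfrac{1}{2}\tfrac{d}{dt}\langle av,v\rangle + \langle Pv,v\rangle + \langle bv,v\rangle + \langle \partial_t(q\ast u),v\rangle = \langle \partial_t f,v\rangle.
\end{equation*}
Skew-adjointness of $P$ kills the principal-part term. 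The term $\partial_t(q\ast u)(t) = q(\cdot,0)u(t) + (\partial_t q)\ast u$ is bounded in $L^2$ by $C(\|u(t)\| + \int_0^t \|u(\tau)\|\,d\tau)$, and $\|u(t)\|^2 \le t\int_0^t \|v(\tau)\|^2 d\tau$ by causality and Cauchy--Schwarz. Cauchy--Schwarz on the remaining terms and Gronwall's inequality then yield the claimed bound on $\|\partial_t u(\cdot,t)\|_{L^2}^2$. To recover $\|u(\cdot,t)\|_V^2$, I would use the equation itself to write $Pu = f - a\partial_t u - bu - q\ast u$, so that $\|Pu\|$, and hence the graph norm $\|u\|_V$, is bounded by $\|f\|$, $\|\partial_t u\|$, and the already-controlled $\|u\|$.

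The main obstacles I anticipate are twofold: first, verifying rigorously that the weighted operator $\tilde P = a^{-1}P$ is genuinely skew-adjoint (not merely skew-symmetric) on $(H,\langle\cdot,\cdot\rangle_a)$, since merely measurable $a$ does not obviously preserve the domain structure of $P$ beyond the equivalence of inner products; second, handling the time derivative of the convolution $q\ast u$ without assuming differentiability of $q$ in $t$, which I would circumvent by writing $\partial_t(q\ast u) = q(\cdot,0)u + (\partial_t q)\ast u$ in the distributional sense and noting that only the $L^2$-in-time norm of $\partial_t q$ enters the Gronwall argument, so bounded-measurable regularity of $q$ (and a trace-in-time of $q$ at $0$) suffices after the abstract solution has been obtained.
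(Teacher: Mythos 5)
Your overall strategy -- recast \eqref{symmhyp} as an instance of the abstract evolution problem \eqref{formeq} and invoke the abstract existence/regularity results (Theorems \ref{existence} and \ref{reghyp}) -- is exactly the paper's route, and the recovery of the $V$-norm bound from the equation itself ($Pu = f - a\partial_t u - bu - q\ast u$) matches the argument in the proof of Theorem \ref{reghyp}. Two remarks on the details. First, the renormalization by $a^{-1}$ and the passage to the weighted inner product $\langle\cdot,\cdot\rangle_a$ is an unnecessary detour: the abstract framework was built to carry a bounded, self-adjoint, uniformly positive-definite operator $A$, so one simply takes $A,B,Q(t)$ to be multiplication by $a,b,q(\cdot,t)$ on $H=L^2(\Omega)^k$ with the standard inner product and $P$ the skew-adjoint extension of $p(\nabla)$. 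Your detour is not wrong -- the obstacle you flag (skew-adjointness, not merely skew-symmetry, of $a^{-1}P$ in the weighted space) does resolve, since the adjoint domain is characterized by boundedness of $u\mapsto\langle a^{-1}Pu,v\rangle_a=\langle Pu,v\rangle$ and the norms are equivalent, so it coincides with ${\cal D}(P^*)=V$ -- but it buys nothing.

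The genuine gap is in your derivation of the energy estimate. You differentiate the equation in $t$ and split $\partial_t(q\ast u)=q(\cdot,0)u+(\partial_t q)\ast u$, and your proposed fix -- that ``only the $L^2$-in-time norm of $\partial_t q$ enters'' -- does not work: the hypotheses place $q$ only in $L^1(\bR_+,L^\infty(\Omega,\bR^{k\times k}_{\rm symm}))$ (with continuity on $\bR_+$ at the abstract level), so $\partial_t q$ need not exist in any function space and no Gronwall argument can absorb it. The correct move puts the derivative on the solution rather than the kernel: since $u$ is causal and continuous with $u(0)=0$, one has $\partial_t(q\ast u)=q\ast\partial_t u$, which needs no temporal regularity of $q$ at all. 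This is precisely how the paper avoids the issue, in a slightly disguised form: by time-translation invariance of \eqref{formeq}, the difference quotients $\delta t^{-1}(u(\cdot+\delta t)-u)$ are weak solutions with right-hand sides $\delta t^{-1}(f(\cdot+\delta t)-f)$, so $u'$ is itself the causal weak solution with right-hand side $f'$ (Corollary \ref{ck}, Theorem \ref{reghyp}), and the energy estimate \eqref{eqn:reghyp} applies to $u'$ directly, with no derivative ever falling on $Q$. With that substitution your Gronwall argument goes through and yields \eqref{symmhyp:energy}; as written, the step would fail for the coefficient class the theorem actually allows.
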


The second main result addresses the existence of traces and their regular dependence on the coefficients.
Suppose that $\Gamma \subset \Omega$ is a smoothly embedded $m$-dimensional submanifold, $m<d$. Denote by $\Sigma_{\Gamma}$ the trace map  $C^{\infty}_0(\Omega^{\rm int})^k \rightarrow L^2(\Gamma)^k$. Suppose that $l \in \bN$ and $m \in C_0^{\infty}(\Gamma,\bR^{l\times k})$. Define $S[m]:C^{\infty}_0(\Omega^{\rm int})^k \rightarrow L^2(\Gamma)^l$ by
\[
S[m]u = m \Sigma_{\Gamma}u.
\]
Note that $S[m]$ induces a map $L^2([0,T],C^{\infty}_0(\Omega^{\rm int})^k) \rightarrow L^2([0,T],L^2(\Gamma)^l)$, for any $T>0$. Abusing notation, call this induced map $S[m]$ as well.

Having parametrized the various possible trace maps, we define the data prediction, or forward, operator, ${\cal F}_{f,m}$, by
\[
{\cal F}_{f,m}(a,b,q) = S[m]u
\]
in which $u$ is the causal solution of \eqref{symmhyp} provided by Theorem \ref{symmhyp:wellposed}.
A suitable domain for ${\cal F}_{f,m}$ is $M \subset L^{\infty}(\Omega,\bR^{k \times k}_{\rm symm}) \times L^{\infty}(\Omega, \bR^{k\times k} ) \times L^1(\bR_+,L^{\infty}(\Omega,\bR^{k \times k}_{\rm symm} ))$, defined by 
\[
M(C_*,C^*,C_B,C_Q) = \{(a,b,q): C_*I < a(\bx) < C^*I \mbox{ for all }\bx \in \Omega, \|b\|_{L^{\infty}(\Omega, \bR^{k\times k} )} < C_B, 
\]
\[
\|q\|_{L^1(\bR_+,L^{\infty}(\Omega,\bR^{k \times k}_{\rm symm} ))} < C_Q, q(t) = 0 \mbox{ for }t<0\},
\]
\[
M = \bigcup \{ M(C_*,C^*,C_B,C_Q): 0 < C_* \le C^*, C_B, C_Q \in \bR_+\}.
\]

\begin{theorem}\label{symmhyp:tracereg} Suppose that (i) $S[m]$ extends continuously to $V$, and (ii) $s \ge 2$ and $f \in H^s(\bR,L^2(\Omega)^k)$. Then for any $T > 0$,  ${\cal F}_{f,m}:  M \rightarrow C^0([0,T],L^2(\Gamma)^l)$ is well-defined and of class $C^{s-1}$.
\end{theorem}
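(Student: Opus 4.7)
The proof proceeds in two stages: first establishing that $\mathcal{F}_{f,m}$ is well-defined as a map into $C^0([0,T],L^2(\Gamma)^l)$, and then obtaining the $C^{s-1}$ regularity by repeatedly linearizing the equation in the coefficients and re-applying Theorem \ref{symmhyp:wellposed}.

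For well-definedness, observe that $f \in H^s(\bR,L^2(\Omega)^k)$ with $s \ge 2$ embeds in $H^1_{\mathrm{loc}}(\bR,L^2(\Omega)^k)$, so Theorem \ref{symmhyp:wellposed} produces a causal $u \in C^0(\bR,V) \cap C^1(\bR,L^2(\Omega)^k)$. Hypothesis (i) says $S[m]$ extends to a bounded linear operator $V \to L^2(\Gamma)^l$, so composition gives $S[m]u \in C^0([0,T],L^2(\Gamma)^l)$, and the energy estimate \eqref{symmhyp:energy} controls its norm uniformly on any bounded piece of $M(C_*,C^*,C_B,C_Q)$.

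For differentiability, I would fix $(a,b,q) \in M$ with solution $u$, pick a perturbation direction $(\dot a,\dot b,\dot q)$ in the ambient Banach space of $M$, and formally differentiate \eqref{symmhyp} to find that the candidate derivative $u'$ should be the causal solution of
\begin{equation*}
a \frac{\partial u'}{\partial t} + p(\nabla) u' + b u' + q \ast u' = -\dot a \frac{\partial u}{\partial t} - \dot b u - \dot q \ast u.
\end{equation*}
To invoke Theorem \ref{symmhyp:wellposed} on this system I need the right-hand side in $H^1_{\mathrm{loc}}(\bR,L^2(\Omega)^k)$, which in turn requires $\partial u/\partial t \in H^1_{\mathrm{loc}}(\bR,L^2(\Omega)^k)$. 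This is supplied by differentiating \eqref{symmhyp} once in $t$ and reapplying Theorem \ref{symmhyp:wellposed} to $\partial u/\partial t$ with right-hand side $\partial f/\partial t$, whose $H^1$-regularity is the reason $s \ge 2$ is needed. Thus $u'$ exists with $u' \in C^0(\bR,V)$, and the Fréchet property follows from an energy estimate applied to the remainder $u[a+\dot a,\ldots] - u - u'$, whose governing equation has right-hand side quadratic in $(\dot a,\dot b,\dot q)$ through Leibniz-type products against $\partial_t(u[a+\dot a,\ldots]-u)$ and $u[a+\dot a,\ldots]-u$. Composing with $S[m]:V\to L^2(\Gamma)^l$ transports this to $C^0([0,T],L^2(\Gamma)^l)$. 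Continuity of $(a,b,q) \mapsto D\mathcal{F}_{f,m}$ comes from the same energy identity applied to the difference of two linearized systems at nearby base coefficients.

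Higher derivatives follow by induction. Differentiating $k$ times in coefficient directions produces a linear system of the form \eqref{symmhyp} whose source involves products of perturbations with time-derivatives $\partial_t^j u$, $0 \le j \le k$, and convolutions with $\partial_t^j u$. To apply Theorem \ref{symmhyp:wellposed} at step $k$ one needs $\partial_t^k u \in H^1_{\mathrm{loc}}(\bR,L^2(\Omega)^k)$, i.e.\ $\partial_t^{k+1} u$ locally $L^2$ in time; this in turn follows by differentiating \eqref{symmhyp} $k$ times in $t$ and invoking Theorem \ref{symmhyp:wellposed} with source $\partial_t^k f$, which is in $H^1_{\mathrm{loc}}(\bR,L^2(\Omega)^k)$ exactly when $f \in H^{k+1}(\bR,L^2(\Omega)^k)$. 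Taking $k = s-1$ gives the claimed $C^{s-1}$ regularity; at each step, the energy estimate \eqref{symmhyp:energy}, together with the boundedness of $S[m]$ on $V$, yields the Fréchet remainder bound.

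The principal obstacle is bookkeeping the loss of one time derivative of $f$ per coefficient derivative taken, and verifying that all cross terms arising from the product rule on $a\,\partial_t u$ and $q \ast u$ lie in $H^1_{\mathrm{loc}}(\bR,L^2(\Omega)^k)$ uniformly on bounded subsets of $M$; once this is organized, the rest is a straightforward iteration of the energy estimate of Theorem \ref{symmhyp:wellposed} composed with the continuous trace $S[m]$.
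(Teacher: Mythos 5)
Your overall scheme (solve for $u$, linearize \eqref{symmhyp} in the coefficients, re-apply the well-posedness theorem to the linearized and remainder equations, then compose with $S[m]$) is the same linearization strategy the paper uses, except that the paper routes it through the abstract results: it observes that $(a,b,q)\mapsto(A,B,Q)$ is a bounded affine map into operator coefficients and then quotes Theorem \ref{regfwd}, whose proof rests on Theorems \ref{reghyp}, \ref{strong} and \ref{smooth}. Your well-definedness argument and your derivative count (one time derivative of $f$ spent per coefficient derivative, so $f\in H^s$ yields $C^{s-1}$) are correct and match the paper.

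There is, however, a genuine gap in your remainder analysis, and it is exactly the point the continuity theorems exist to handle. Write $u_h$ for the solution with coefficients $(a+\dot a,b+\dot b,q+\dot q)$ and $r=u_h-u-u'$; then $r$ solves \eqref{symmhyp} with right-hand side $-\dot a\,\partial_t(u_h-u)-\dot b\,(u_h-u)-\dot q\ast(u_h-u)$. Because $S[m]$ is continuous only on $V$, you must show $r=o(\|(\dot a,\dot b,\dot q)\|)$ in $C^0([0,T],V)$, not merely in $L^\infty([0,T],L^2(\Omega)^k)$; the $V$-part of the estimate \eqref{symmhyp:energy} then requires one time derivative of this right-hand side, which produces the term $\dot a\,\partial_t^2(u_h-u)$. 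With only $f\in H^2$ you have uniform bounds on $\partial_t^2u_h$ and $\partial_t^2u$, so this term is a priori only $O(\|(\dot a,\dot b,\dot q)\|)$, not $o(\cdot)$; a Lipschitz rate $\partial_t^2(u_h-u)=O(\|\cdot\|)$ via your iteration would need $f\in H^3$, one derivative more than the hypothesis. Hence ``a straightforward iteration of the energy estimate'' cannot close the argument at the top order of time differentiation. What is needed is the qualitative continuity of the solution map in the coefficients: $\partial_t^2u_h$ and $\partial_t^2u$ solve \eqref{symmhyp} with the common right-hand side $\partial_t^2f\in L^2$, and Theorem \ref{symmhyp:strong} (abstract Theorem \ref{strong}) gives $\partial_t^2(u_h-u)\to 0$ in $L^\infty([0,T],L^2(\Omega)^k)$ without a rate, which suffices since this factor is multiplied by $\|\dot a\|$. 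This is precisely what the paper's proof of Theorem \ref{regfwd} does (``Applying Theorem \ref{strong} to $u_m'-u'$\,\dots''), and the Remark after Theorem \ref{strong} shows no locally uniform (hence no Lipschitz) modulus is available, so some ingredient of this kind is indispensable; the same issue recurs at every stage of your induction. A smaller omission of the same flavor: the assertion that the solution of the $t$-differentiated system is $\partial_t u$ is not automatic from Theorem \ref{symmhyp:wellposed} but is the difference-quotient argument of Corollary \ref{ck}/Theorem \ref{reghyp}, which you should invoke.
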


A stronger result concerning continuity {\em per se}, involving a weaker topology on the coefficient set $M$, is also important. Recall that a sequence of real-valued measureable functions $\{g_m: m \in \bN\} \subset L^{\infty}(\Omega)$ {\em converges to zero in measure} iff for any $\epsilon > 0$, 
\[
e_m = |\{\bx: |g_m(\bx)| > \epsilon\}|
\]
is a null sequence.

\begin{theorem}\label{symmhyp:strong} Suppose that $0 < C_* \le C^*, 0 \le C_B, C_Q$, 
$\{(a_m,b_m,q_m): m\in \bN\} \subset M(C_*,C^*,C_B,C_Q)$, and $(a,b,q) \in (C_*,C^*,C_B,C_Q)$. Suppose further that
\begin{itemize}
\item[i. ]$a_m \rightarrow a$ in measure,
\item[ii. ]$b_m \rightarrow b$ in measure,
\item[iii. ]$\int_0^t|q_m -q| \rightarrow 0$ in measure 
\end{itemize}
Denote by $u_m$ the (strong) solution of \eqref{symmhyp} provided by Theorem 1, with coefficients $a_m,b_m,$ and $q_m$, likewise by $u$ the solution with coefficients $a, b,$
and $q$, with common causal right-hand side $f \in H^1_{\rm loc}(\bR,L^2(\Omega)^k)$. Then for any $T \in \bR_+$,
\[
\lim_{m \rightarrow \infty} \|u_m-u\|_{L^{\infty}([0,T],L^2(\Omega))} = 0.
\]
\end{theorem}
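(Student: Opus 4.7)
My strategy is to subtract the instance of \eqref{symmhyp} for $u$ from that for $u_m$, apply the standard energy method to control the difference $w_m := u_m - u$ in terms of a perturbation forcing built from coefficient differences, and then show this forcing tends to zero in $L^2([0,T], L^2(\Omega)^k)$ using convergence in measure together with the uniform $L^\infty$ bounds coming from $(a_m,b_m,q_m)\in M(C_*,C^*,C_B,C_Q)$.

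Concretely, $w_m$ satisfies the causal equation
\[
a_m \partial_t w_m + P w_m + b_m w_m + q_m * w_m = g_m,
\]
with
\[
g_m := -(a_m-a)\partial_t u - (b_m - b) u - (q_m - q) * u.
\]
Theorem~\ref{symmhyp:wellposed} applied to $(a,b,q)$ yields $u \in C^1(\bR, L^2(\Omega)^k) \cap C^0(\bR,V)$, so $g_m \in L^2_{\rm loc}(\bR, L^2(\Omega)^k)$. Pairing the above equation with $w_m$ in $L^2(\Omega)^k$, skew-adjointness of $P$ eliminates the $Pw_m$-term, coercivity of $a_m$ gives $\frac{1}{2}\frac{d}{dt}\langle a_m w_m, w_m\rangle \ge \frac{C_*}{2}\frac{d}{dt}\|w_m\|^2_{L^2}$, the bound $\|b_m\|_\infty \le C_B$ absorbs the $b_m$-term, and Young's convolution inequality together with $\|q_m\|_{L^1(\bR_+,L^\infty)}\le C_Q$ controls the $q_m*w_m$ term. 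A generalized Gronwall inequality then yields
\[
\sup_{0 \le t \le T}\|w_m(t)\|^2_{L^2} \le K(T, C_*, C^*, C_B, C_Q) \int_0^T \|g_m(\tau)\|^2_{L^2(\Omega)^k}\, d\tau,
\]
with $K$ independent of $m$.

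It remains to show each summand of $g_m$ tends to zero in $L^2([0,T]\times\Omega)$. For the first summand, since $a_m-a$ is time-independent,
\[
\|(a_m-a)\partial_t u\|^2_{L^2([0,T]\times\Omega)} = \int_\Omega |a_m - a|^2(\bx) \left( \int_0^T |\partial_t u(\bx,t)|^2\,dt\right) d\bx,
\]
and Vitali's convergence theorem applies because $|a_m-a|^2$ is uniformly bounded by $(2C^*)^2$ and converges to $0$ in measure by~(i), while the bracketed $t$-integral lies in $L^1(\Omega)$. The same argument handles $(b_m-b)u$ via~(ii). The main obstacle is the convolution term $(q_m-q)*u$, because~(iii) only controls the time integral of $|q_m - q|$ in measure over $\bx$, whereas the convolution entangles $\bx$ and $t$. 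Cauchy-Schwarz in $s$ yields
\[
|((q_m-q)*u)(\bx,t)|^2 \le \left(\int_0^t |q_m-q|(\bx,\sigma)\, d\sigma\right) \int_0^t |q_m-q|(\bx,t-s)\,|u(\bx,s)|^2\,ds,
\]
and after Fubini on $[0,T]\times\Omega$ the squared $L^2$-norm is dominated by
\[
2C_Q \int_0^T \int_\Omega |u(\bx,s)|^2 \left( \int_0^{T-s} |q_m - q|(\bx,\sigma)\,d\sigma\right) d\bx\,ds.
\]
The inner parenthesized factor is uniformly bounded by $2C_Q$ and converges to $0$ in measure in $\bx$ for each $s$ by~(iii), while $|u|^2 \in L^1([0,T]\times\Omega)$; a second application of Vitali in $\bx$ followed by dominated convergence in $s$ completes the argument.
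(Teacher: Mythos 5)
Your proof is correct, but it takes a genuinely different route from the paper. The paper obtains Theorem \ref{symmhyp:strong} as an immediate corollary of the abstract Theorem \ref{strong} combined with Lemma \ref{mult-cim} (convergence in measure of bounded multipliers implies strong convergence of the induced multiplication operators); the proof of Theorem \ref{strong} in turn runs through weak $L^2_{\rm loc}$ compactness, pointwise weak convergence via equicontinuity, an energy identity manipulated with algebraic identities, Lemma \ref{rconv} for the memory term, and Gronwall, and it needs only $f \in \llh$ and weak solutions. You instead exploit the extra hypothesis $f \in H^1_{\rm loc}$ (available in this theorem, since strong solutions are assumed) to write the difference $w_m = u_m - u$ as the causal solution of the same type of system with forcing $g_m$ built from the coefficient differences acting on the fixed limit solution $u$, apply the energy estimate with constant uniform over $M(C_*,C^*,C_B,C_Q)$, and then kill $g_m$ in $L^2([0,T]\times\Omega)$ by a convergence-in-measure/Vitali argument (essentially the paper's Lemma \ref{mult} in different packaging) plus a Cauchy--Schwarz/Fubini estimate for the convolution term, which replaces Lemma \ref{rconv} and, incidentally, does not use symmetry of $q_m$. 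What each approach buys: yours is shorter and more elementary for this concrete statement, avoids weak compactness and equicontinuity entirely, and even yields the quantitative bound $\sup_{[0,T]}\|w_m\|^2 \le K\|g_m\|_{L^2}^2$; the paper's abstract route is strictly more general (continuity of the solution map already for $f \in \llh$, with no time derivative), and that generality is used elsewhere, e.g.\ in the proof of Theorem \ref{regfwd}. One small wording slip: coercivity of $a_m$ does not give the displayed inequality between derivatives $\frac{1}{2}\frac{d}{dt}\langle a_m w_m, w_m\rangle \ge \frac{C_*}{2}\frac{d}{dt}\|w_m\|^2$; the correct (and standard) step is to integrate the identity for $\frac{d}{dt}\langle a_m w_m, w_m\rangle$ and only then invoke $C_*\|w_m(t)\|^2 \le \langle a_m w_m(t), w_m(t)\rangle$ at the evaluation time before applying Gronwall --- this does not affect the validity of your estimate.
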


One way of generating sequences converging in measure is by mollification. We exploit this observation to use Theorem \ref{symmhyp:strong} in conjunction with a well-known fact about symmetric hyperbolic systems with smooth coefficients, to prove

\begin{theorem}\label{symmhyp:speed} Suppose that $(a,b,q) \in M$, $\epsilon > 0$, and that $\tau \in \bR_+$  satisfies
\begin{equation}\label{tau-bd}
\tau a(\bx)+\sum_{i=1}^d p_i \xi_i  \ge \epsilon
\end{equation}
for almost every  $\bx \in \Omega$ and every ${\bf \xi} \in \bR^d$ for which $|{\bf \xi}|=1$. Suppose further that $\omega \subset \Omega, T>0$, and that $f \in H^1_{\rm loc}(\bR,\Omega)$ is causal and satisfies $f(\bx,t)=0$ if $\tau|\bx-\bx_0|+t_0-t \ge 0$ for every $(\bx_0,t_0) \in \omega \times [0,T]$. Then the causal solution $u \in C^1(\bR,L^2(\Omega)^k \cap C^0(\bR,V)$ of \eqref{symmhyp} with coefficients $a,b,q$ and right-hand side $f$ vanishes in $\omega \times [0,T]$.
\end{theorem}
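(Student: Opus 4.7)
My plan is to prove this by a mollification argument: reduce to the classical smooth-coefficient case, where finite speed of propagation follows from the energy method on a backward characteristic cone, and then transfer to bounded measurable coefficients using the strong continuity result in Theorem \ref{symmhyp:strong}.

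First I would establish the result when $a,b,q$ are smooth in $\bx$. For each $(\bx_0,t_0) \in \omega \times [0,T]$, define the backward cone slices $B_t = \{\bx \in \Omega : \tau|\bx - \bx_0| \le t_0 - t\}$ for $0 \le t \le t_0$, and consider the weighted energy $E(t) = \tfrac12 \int_{B_t} \langle a u, u\rangle\,d\bx$. Differentiating (possible because the smooth-coefficient solution lies in $C^1(\bR, L^2(\Omega)^k) \cap C^0(\bR, V)$), using the Reynolds transport rule and the identity $2\int_{B_t}\langle p(\nabla)u, u\rangle\,d\bx = \int_{\partial B_t}\langle(\sum_i p_i n_i) u, u\rangle\,dS$ yields a lateral boundary contribution controlled by the matrix $\tau a + \sum_i p_i n_i$ with $n = (\bx - \bx_0)/|\bx - \bx_0|$, a unit vector. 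Assumption \eqref{tau-bd} makes this matrix $\ge \epsilon I$, so the boundary term has a favorable sign. What remains is a Gronwall-type estimate absorbing the contributions of $bu$, $q * u$, and the source $f$ (which vanishes in the cone by hypothesis); together with $u \equiv 0$ for $t < 0$, this forces $E \equiv 0$ on $[0, t_0]$, hence $u(\bx_0, t_0) = 0$. Varying $(\bx_0, t_0) \in \omega \times [0,T]$ gives vanishing of $u$ on $\omega \times [0,T]$ in the smooth case.

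Next I would remove smoothness. Extend $a$ outside $\Omega$ by a constant matrix $M I$ with $M$ so large that $\tau M I + \sum p_i \xi_i \ge \epsilon I$ for $|\xi|=1$, and extend $b$, $q$ by zero. Let $\rho_\delta$ be a standard nonnegative mollifier in $\bR^d$, and set $a_\delta = \rho_\delta *_{\bx} a$, $b_\delta = \rho_\delta *_{\bx} b$, $q_\delta(\bx,t) = (\rho_\delta *_{\bx} q(\cdot,t))(\bx)$. Mollifying only in space preserves causality of $q_\delta$ and keeps $(a_\delta, b_\delta, q_\delta) \in M(C_*, C^*, C_B, C_Q)$; because the inequality \eqref{tau-bd} is linear in $a$ and the mollifier integrates to one with nonnegative weights, the cone condition is preserved with the same $\epsilon$:
\[
\tau a_\delta(\bx) + \sum_i p_i \xi_i = \int \rho_\delta(\bx - \by)\bigl[\tau a(\by) + \sum_i p_i \xi_i\bigr] d\by \ge \epsilon I.
\]
Standard mollifier theory gives $a_\delta \to a$, $b_\delta \to b$ in $L^1_{\rm loc}$, and $\int_0^t |q_\delta - q|\,ds \to 0$ in $L^1_{\rm loc}$, which implies the convergence in measure required by Theorem \ref{symmhyp:strong}.

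Finally, apply the smooth-coefficient result to $(a_\delta, b_\delta, q_\delta, f)$ to obtain causal solutions $u_\delta$ with $u_\delta \equiv 0$ on $\omega \times [0,T]$. Theorem \ref{symmhyp:strong} then furnishes $u_\delta \to u$ in $L^\infty([0,T], L^2(\Omega)^k)$, and since the zero set is closed under this topology, $u$ vanishes on $\omega \times [0,T]$ as required.

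The main obstacle will be justifying the smooth-coefficient energy identity on the cone with the regularity $C^1(\bR, L^2) \cap C^0(\bR, V)$ furnished by Theorem \ref{symmhyp:wellposed}. One does not automatically have classical differentiability in $\bx$, so the integration by parts on $B_t$ must be done either after further mollification in time (producing genuine classical solutions, to which the divergence theorem applies, then removing the regularization), or by establishing the identity first for $C^\infty_c(\Omega^{\rm int})^k$-valued test flows and then using the density of such approximations in $V$. A secondary technical point is handling the possibility that backward cones from $\omega \times [0,T]$ reach the boundary of $\Omega$, which is why the extension of $a$ must itself satisfy the cone inequality so the smooth-coefficient argument works on $\bR^d$ and its restriction agrees with the original problem on $\Omega$.
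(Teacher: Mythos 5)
Your proposal follows essentially the same route as the paper: mollify the coefficients in space (with the observation that \eqref{tau-bd} survives mollification, which is the paper's ``for sufficiently large $m$'' step), invoke the classical finite-speed-of-propagation result for smooth coefficients (which the paper cites from Lax and extends to the memory term, rather than proving via the cone-energy argument you sketch), and pass to the limit using convergence in measure together with Theorem \ref{symmhyp:strong}. Your version is correct and simply spells out the classical ingredient the paper takes as known.
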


\noindent {\bf Example.} Linear acoustics provides an important example of the framework just described. Acoustic wave propagation does not include the memory effect modeled by the convolutional term in \eqref{symmhyp}, but illustrates several other features of the class of problems studied in this paper.

For this example, $d=3$, and we suppose that the domain $\Omega \subset \bR^3$ is smoothly bounded (the smoothness requirement can be relaxed to some extent, as discussed briefly in Appendix A). The momentum balance and constitutive laws of linear acoustics relate the excess pressure $p(t,\bx)$ and velocity fluctuations $\bv(t,\bx)=(v_1(\bx,t),v_2(\bx,t),v_3(\bx,t))^T$, $\bx \in \bR^3$, to mass density $\rho(\bx)$, bulk modulus $\kappa(\bx)$, constitutive law defect $g(t,\bx)$, and body force density $\bff(t,\bx)$ by 
\begin{eqnarray}\label{acoustics}
\rho\frac{\partial \bv}{\partial t} & = & -\nabla p + \bff, \nonumber\\
\frac{1}{\kappa}\frac{\partial p}{\partial t} & = & -\nabla \cdot \bv +g.
\end{eqnarray}
In this example, $k=4$, and 
\[
Au = {\rm diag}\left(\frac{1}{\kappa},\rho,\rho,\rho\right)u, \,\, u = 
\left(
\begin{array}{c}
p\\
v_1\\
v_2\\
v_3
\end{array}
\right) \in L^2(\Omega)^4
\]
defines a bounded self-adjoint positive-definite operator $A \in {\cal B}_{\rm symm}^+(L^2(\Omega)^4)$, provided that $\log \rho, \log \kappa \in L^{\infty}(\Omega)$.

The right-hand side of \eqref{acoustics} involves the matrix partial differential operator
\begin{equation}
\label{grad-div}
p(\nabla) = -\left(
\begin{array}{cccc}
0 & \frac{\partial}{\partial x_1} & \frac{\partial}{\partial x_2} & \frac{\partial}{\partial x_3} \\
\frac{\partial}{\partial x_1} & 0 & 0 & 0 \\
\frac{\partial}{\partial x_2} & 0 & 0 & 0 \\
\frac{\partial}{\partial x_3} & 0 & 0 & 0
\end{array}
\right).
\end{equation}
This matrix operator defines a skew-adjoint operator with domain $V = H^1_0(\Omega) \times H^1_{\rm div}(\Omega) \subset L^2(\Omega)^4$. [$H^1_{\rm div}(\Omega)$ consists of $\bv \in L^2(\Omega)^3$ for which the distribution $\nabla \cdot \bv \in L^2(\Omega)^3$. For the reader's convenience, we present a proof that $p(\nabla): V \rightarrow  L^2(\Omega)^4$ is skew-adjoint, in Appendix A.]

Finally, the source vector $f$ is defined by
\[
f(t)=(g(t,\cdot),\bff(t,\cdot))^T \in L^2(\Omega)^4,
\]
The acoustics system \eqref{acoustics} is symmetric hyperbolic and satisfies the conditions of Theorem \ref{symmhyp:wellposed}. 

Note that if $\kappa$ and/or $\rho$ are discontinuous, then the form of the equations \eqref{acoustics} immediately implies that no solutions of class $C^1$ may exist, even if the right-hand side $f$ (that is, the body force density $\bff$) is smooth. Physically reasonable fluid configurations thus exist for which solutions in the classical sense cannot be defined, for example piecewise homogeneous mixtures with jump discontinuities of density and/or bulk modulus across smooth interfaces. However, Theorem \ref{symmhyp:wellposed} provides a solution with physical sense, as the strain energy
\[
E(t) = \frac{1}{2}\int_{\Omega} \left[\rho \bv \cdot \bv + \frac{p^2}{\kappa}\right](\cdot,t) = \frac{1}{2}\langle u(t), A u(t)\rangle_{L^2(\Omega)^4}
\]
is well-defined for any such solution.

Theorem \ref{symmhyp:wellposed} gives  $u \in C^0(\bR,V)$, provided that $f \in H^1_{\rm loc}(\bR,L^2(\Omega)^4)$. To obtain differentiable dependence on the coefficients it is necessary to sacrifice one more time derivative. This sacrifice seems minor, as modeled energy sources (represented by the right-hand side in \eqref{symmhyp}) are in many cases quite smooth in time, reflecting the finite bandwidth of energy generation and recording equipment and the loss of high frequencies to dissipative mechanisms during propagation. Membership in $V = H^1_0(\Omega) \times H^1_{\rm div}(\Omega)$ implies that some (but not all!) traces of $u$ are well-defined. In the notation of Theorem \ref{symmhyp:tracereg}, given a smoothly embedded hypersurface $\Gamma$, choose $l=1$ and $m=m_D$ or $m=m_N$, where
\[
m_D(\bx) = [1, 0, 0, 0],\,\, m_N(\bx)=[0,n_1(\bx),n_2(\bx),n_3(\bx)], \,\,\bx \in \Gamma,
\]
and ${\bf n}=(n_1,n_2,n_3)^T$ is a smooth normal field on $\Gamma$. Both $S[m_D]$ and $S[m_N]$ extend continuously to $V$. Note that traces of tangential components of $\bv$ do not so extend.
The end result is that both ${\cal F}_{f,m_D}$ (pressure data) or ${\cal F}_{f,m_N}$ (normal velocity data) are well-defined and differentiable as functions of the coefficients, of class $C^{s-1}$ if $f \in H^s_{\rm loc}(\bR,L^2(\Omega)^4)$ (Theorem \ref{symmhyp:tracereg}). The domain of either version of ${\cal F}$ in this case is the open set $M_{\rm ac} \subset L^{\infty}(\Omega)^2$, defined by nondimensionalizing choices of scale $s_{\kappa}, s_{\rho} \in \bR_+$ and 
\[
M_{\rm ac}(C_*, C^*) = \{(\kappa,\rho) \in L^{\infty}(\Omega)^2: (s_{\kappa}\kappa)^{-1}, (s_{\rho}\rho)^{-1} < (C_*)^{-1}, s_{\kappa} \kappa, s_{\rho} \rho < C^* \mbox{ a.e. in }\Omega\};
\]
\[
M_{\rm ac} = \bigcup \{M_{\rm ac}(C_*, C^*) : 0 < C_* \le C^*\}.
\]
The content of Theorem \ref{symmhyp:speed} in this case is just what one would expect: waves move with speed at most
\[
\|\sqrt{\kappa/\rho}\|_{L^{\infty}(\Omega)}.
\]

We end this section with a brief overview of prior work. The well-posedness of symmetric hyperbolic systems with regular (smooth) coefficients has been well-understood for decades \cite[]{CourHil:62,Lax:PDENotes}. Some effort has also been devoted to describing solutions for first- or second-order hyperbolic systems with less-than-smooth (but still continuous) coefficients, mostly focused on the propagation of regularity along bicharacteristics, or the existence of traces (for example, \cite[]{BealsReed:82,Symes:83,BealsReed:84,Lasi:86,Lasi:87,BaoSy:91b,BaoSy:93a,HSmith:98}. 
Those works treating the continuity or differentiability of the solution as a function of the coefficients have mostly required more smoothness of the coefficients than is allowed in the present work, or dealt only with one-dimensional problems \cite[]{BamChavLai:79,Symes:86b,lew91,BaoSy:95,Salo:06,StefUhl:09}. An exception is \cite[]{FerSanShe:93}, in which (results which imply) a special case of Theorem \ref{symmhyp:tracereg} is established. A large number of articles have appeared on abstract first- or second-order hyperbolic equations and related inverse problems, for example \cite[]{LavRomShis:86,Choulli:91,LorenziRamm:01,RammKoshkin:01,Awawdeh:10,Orlovsky:10}, mostly using the theory of strongly continuous semigroups (or cosine operators, in the second order case) to obtain a hold on well-posedness. Of these, the closest in spirit to the present work is that of \cite{Orlovsky:10}, which treats second order systems via the cosine operator approach but observes that smoothness {\em in time} of data implies that weak solutions are strong solutions, analogous to a crucial intermediate result in our work. \cite{Choulli:91} established that a first-order abstract integro-differential initial value problem, resembling those of our abstract framework to be detailed in the next section, is well-posed, and also proved uniqueness for the solution of a related inverse problem. The form of the integral term is actually a generalization of our convolutional memory operator, and the requirement on the ``spatial'' operator is also a generalization (generates a strongly continuous semigroup - in the problem considered here, this operator is skew-adjoint, and generates a unitary group). However Choulli formulates an inverse problem which involves only determining lower-order terms, rather than the principal part, the inverse problem data does not involve time-like traces, and Choulli does not investigate the regularity of any analogue of ${\cal F}$.


The immediate predecessor of our work is Chapter 2 of the second-named author's PhD thesis \cite[]{stolk}, which treated second-order hyperbolic systems with  nonsmooth coefficients, including the displacement formulation of elastodynamics, beginning with the techniques of J.-L. Lions and his collaborators \cite[]{Lions:71,LionsMagenes:72} but going beyond well-posedness to study the dependence of solutions on coefficients. The results on strong convergence and G\^{a}teaux differentiability (Theorems \ref{strong} and \ref{smooth}) are direct translations of arguments from this source, as is the observation that convergence in measure of $L^{\infty}$ multipliers implies strong convergence of the corresponding multiplication operators on $L^2$ (Lemma \ref{mult}).

\section{A Class of Abstract First Order Evolution Equations and their Properties}

Let $H$ be a separable real Hilbert space, with inner product $\langle \cdot, \cdot \rangle$ and norm $\| \cdot \|$. We will denote by $V \subset H$ a dense subspace, itself a separable Hilbert space with inner product $\langle \cdot,\cdot \rangle_V$ and norm $\| \cdot \|_V$, defining a stronger topology on $V$ than that induced by $H$. ${\cal B}(H)$ is the Banach space of bounded linear operators on  $H$, with the operator (uniform) norm. ${\cal B}_{\rm symm}(H)$ is the subspace of bounded self-adjoint operators, and ${\cal B}_{\rm symm}^+(H)$ is the cone of bounded self-adjoint coercive operators. We suppose that 
\begin{itemize}
\item $A \in {\cal B}_{\rm symm}^+(H)$, $0< C_* \le C^*$ so that $C_*I \le A \le C^*I$.
\item $B \in {\cal B}(H)$;
\item $Q\in L^1(\bR, {\cal B}_{\rm symm}(H)) \cap C^0(\bR_+, {\cal B}(H))$, and is {\em causal}, that is, $Q(t) = 0$ for $t<0$;
\item $P$ is a skew-adjoint operator with domain $V$, for which the graph norm is equivalent to $\| \cdot \|_V$.
\end{itemize}

The kernel $Q$ defines a continuous linear operator $R: \lph \to \lph$ on the space
\[
\lph = \{v \in \llh: \mbox{ for each }T \in \bR, v \in L^2((-\infty,T],H) \}
\]
with the natural countably normed topology, by 
\begin{equation}\label{rdef}
R[u](t) = \int Q(t-s)u(s) \, ds.
\end{equation}
$R$ is well-defined since $Q$ is causal. Note that if $u \in \llh$ is {\em causal}, that is, ${\rm supp}\, u \subset [T,\infty)$ for some $T \in \bR$, then $u \in \lph$, $R[u]$ is causal also and ${\rm supp} \, R[u] \subset [T,\infty)$. The formal (distribution) adjoint 
\[
R^{\ast}[u](t)= \int Q(s-t)u(s)\,ds
\]
satisfies a similar condition: if  ${\rm supp}\, u \subset (-\infty,T]$ for some $T \in \bR$, then ${\rm supp} \, R^*[u] \subset (-\infty,T]$.

The components described above combine to yield the {\em formal evolution problem}: find an $H$-valued function of $t$, say $u$, which solves, in a suitable sense, 
\begin{eqnarray}
\label{formeq}
A u' + Pu + Bu + R[u] & = & f ,
\end{eqnarray}
in which the right-hand side $f$ is also an $H$-valued function of $t$.

We follow \cite[]{Lions:71,LionsMagenes:72,stolk} in defining {\em weak solutions} $u \in \llh$ of the formal evolution problem \eqref{formeq} by integration against smooth test functions. Inspection of \eqref{formeq} suggests that a similar constraint must also be placed on the right-hand side: henceforth, we assume $f \in \llh$. Because the operator kernel $Q$ may have unbounded support, we must constrain the growth of candidate members of $\llh$ on the negative half-axis. Accordingly, a {\em weak solution} of the formal evolution problem \eqref{formeq}
is a member of  $u \in \llh$ satisfying
\begin{itemize}
\item[1. ] $u \in \lph$;
\item[2. ]
\begin{equation}
\label{weakde}
\int  \langle u(t), (A \phi' + P \phi 
-B^\ast \phi - R^\ast [\phi])(t) \rangle \, dt = - \int \langle f(t), \phi(t) \rangle \, dt;
\end{equation}
for all $\phi \in C_0^\infty ( \bR, V)$.
\end{itemize}

Note that since $R^*[\phi]$ is supported in the half-axis $(-\infty,{\rm sup}\,{\rm supp}\,\phi]$, and is square-integrable, assumption 1. implies that the last term on the left-hand side of \eqref{weakde} is well-defined. 

We will reserve the term {\em strong solution} to designate a function $u \in C^1(\bR,H) \cap C^0(\bR,V) \cap \lph$ which solves \eqref{formeq} pointwise. Clearly a strong solution is a weak solution. Existence of a strong solution in this sense also implies that $f \in C^0(\bR,H)$.

Because of the causal assumption on the convolution kernel $Q$, existence of a causal weak solution vanishing for $t<T_0$ implies that the right-hand side $f$ must also be causal, in fact vanish for $t<T_0$. 

Concerning the fundamental questions of uniqueness and existence, we prove

\begin{theorem}\label{existence} Assume that $H$, $V$, $A$, $B$, $P$, and $Q$ have the properties listed above, and that $f \in \llh$ is causal. Choose $T_0$ so that ${\rm supp}\,f \subset [T_0,\infty)$. Then there exists a unique causal weak solution $u \in C^0(\bR, H)$ of (\ref{formeq}) with ${\rm supp} \,u \subset [T_0,\infty)$. For every $T  \ge T_0$, there exists $C_{T_0,T} \ge 0$ depending on $T_0$, $T$,  $C_*$, $C^*$,  $\|B\|_{{\cal B}(H)}$, and $\|Q\|_{L^1(\bR,{\cal B}(H))}$, so that for any $t \le T$,
\begin{equation}
\label{eqn:bashyp}
\|u(t)\|^2_H \le C_{T_0,T}\int_{-\infty}^{t} \,ds\,\|f(s)\|^2.
\end{equation}
\end{theorem}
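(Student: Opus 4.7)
My plan has three stages: invert $A$ to put (\ref{formeq}) into semigroup form, solve the no-memory case $Q=0$ via Stone's theorem plus a bounded perturbation, then reinstate $R$ by a Volterra fixed point and obtain weak uniqueness by a duality argument.

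\textbf{Semigroup reduction.} Multiplying (\ref{formeq}) by the bounded operator $A^{-1}$, which exists by coercivity of $A$, converts the equation into $u' + A^{-1}(P+B)u + A^{-1}R[u] = A^{-1}f$. I equip $H$ with the equivalent inner product $\langle\cdot,\cdot\rangle_A := \langle A\cdot,\cdot\rangle$. With respect to this inner product, $A^{-1}P$ on domain $V$ is skew-adjoint, since for $u,v \in V$
\[
\langle A^{-1}Pu,v\rangle_A = \langle Pu,v\rangle = -\langle u,Pv\rangle = -\langle u,A^{-1}Pv\rangle_A,
\]
and the adjoint's domain coincides with $V$ because $P$ is already skew-adjoint on $H$. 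Stone's theorem then produces a unitary $C_0$-group generated by $-A^{-1}P$ on $(H,\|\cdot\|_A)$, and adding the bounded perturbation $-A^{-1}B$ preserves the $C_0$-semigroup property, yielding $\{S(t)\}_{t\ge 0}$ with $\|S(t)\|_{{\cal B}(H)} \le Ke^{\gamma t}$. When $R=0$, the Duhamel formula $u(t) = \int_{-\infty}^t S(t-s)A^{-1}f(s)\,ds$ defines a causal $u \in C^0(\bR,H)$ supported in $[T_0,\infty)$, and a routine Cauchy--Schwarz estimate yields the bound \eqref{eqn:bashyp} in this case.

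\textbf{Memory term by Volterra iteration.} For general $Q$, I solve the integral equation
\[
u(t) = \int_{-\infty}^t S(t-s)A^{-1}\bigl(f(s) - R[u](s)\bigr)\,ds
\]
by Picard iteration on each interval $[T_0, T]$. Causality of $Q$ makes $R$ a Volterra operator with $\|R[u](t)\|_H \le \int_{T_0}^t \|Q(t-s)\|_{{\cal B}(H)}\|u(s)\|_H\,ds$, and since $Q \in L^1(\bR, {\cal B}(H))$, a standard Gronwall bookkeeping shows that the iterates form a Cauchy sequence in $C^0([T_0,T],H)$ and that the fixed point obeys \eqref{eqn:bashyp}. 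The fixed point satisfies the weak formulation \eqref{weakde} by integration by parts against any $\phi \in C_c^\infty(\bR,V)$, using that $S$ is a $C_0$-semigroup with generator whose dual acts nicely on smooth test fields.

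\textbf{Uniqueness and the principal difficulty.} Weak uniqueness rests on the Lions--Magenes duality trick: given $g \in C_c^\infty(\bR,V)$, I construct a smooth adjoint test field $\phi$ solving $-A\phi' + P\phi - B^\ast \phi - R^\ast[\phi] = g$ with support in some $(-\infty,T_1]$, by applying the construction above to the time-reversed equation $\tau = -t$, which is an instance of the same class (with $B$ replaced by $B^\ast$ and $R$ by $R^\ast$, both preserving the structural hypotheses and exchanging causality with anticausality). Pairing any causal weak solution $\widetilde u$ with $f = 0$ against this $\phi$ in \eqref{weakde} then gives $\int \langle \widetilde u, g\rangle\,dt = 0$ for arbitrary $g$, forcing $\widetilde u = 0$. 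The principal difficulty is precisely this duality step: a weak solution lies only in $\lph$ a priori, so one cannot shortcut through semigroup uniqueness of strong solutions, and the adjoint test field must be produced with genuine $C^0(\bR,V)\cap C^1(\bR,H)$ regularity. The bookkeeping for $R^\ast$ under time reversal---verifying that causality of $Q$ translates into an anticausal Volterra structure for $R^\ast$ amenable to the same iteration---is the technical heart of the argument.
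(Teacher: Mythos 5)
Your argument is essentially correct, but it is a genuinely different route from the paper's. The paper proves uniqueness and the bound \eqref{eqn:bashyp} first, for \emph{arbitrary} causal weak solutions, via an energy estimate whose key ingredient is that time-mollification turns a weak solution into a strong one (Proposition \ref{spacesmooth}, exploiting skew-adjointness of $P$); existence is then obtained by a Lions-style Galerkin approximation, with the energy inequality supplying the uniform bounds and identifying the weak limit. You instead renorm $H$ with $\langle A\cdot,\cdot\rangle$, invoke Stone's theorem plus bounded perturbation to get a $C_0$-group, represent the $Q=0$ solution by Duhamel, restore the memory term by a Volterra/Picard fixed point, and handle weak uniqueness by a backward adjoint (duality) construction. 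Your skew-adjointness computation for $A^{-1}P$ in the $A$-inner product is correct (the adjoint domain is $V$ exactly because $P$ is skew-adjoint on $H$), and the constant in \eqref{eqn:bashyp} inherits the right dependencies through $\|A^{-1}\|\le C_*^{-1}$, the group bound $\sqrt{C^*/C_*}$, $\|B\|$, and $\|Q\|_{L^1}$. What your route buys is an explicit solution formula and no Galerkin limit; what the paper's route buys is the energy machinery (Propositions \ref{spacesmooth}, \ref{energy} and Corollaries \ref{equicont}--\ref{pseudohyp}) that is reused verbatim in the later regularity and coefficient-dependence theorems, which your semigroup argument does not produce as a by-product.

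Three details in your sketch need to be filled in, all fixable. First, you must check that the mild (Duhamel/fixed-point) solution actually satisfies \eqref{weakde}; this is routine but not free, since \eqref{weakde} is phrased with $A\phi'$ rather than $\phi'$. Second, in the duality step the field you construct is only $C^1(\bR,H)\cap C^0(\bR,V)$ and is \emph{not} compactly supported (it lives on a half-line $(-\infty,T_1]$), whereas \eqref{weakde} is stated only for $\phi\in C_0^\infty(\bR,V)$: you need to enlarge the admissible test class by mollifying in time, and to truncate $\phi$ by a cutoff far in the past; the cutoff commutator terms vanish when paired with the causal $\widetilde u$ precisely because ${\rm supp}\,\widetilde u\subset[T_0,\infty)$ and $Q$ is causal, so this works, but it is part of the argument, not an afterthought. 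Third, a sign slip: to read off $\int\langle\widetilde u,g\rangle=0$ from \eqref{weakde} you should solve $A\phi'+P\phi-B^\ast\phi-R^\ast[\phi]=g$ anticausally (this is what your time reversal $t\mapsto -t$ produces, with $-P$ still skew-adjoint and $R^\ast$ becoming a causal Volterra operator by symmetry of $Q$), not the equation with $-A\phi'$ as written.
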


Regularity in time of the right-hand side implies that the weak solution is strong:

\begin{theorem}\label{reghyp} In addition to the assumptions of Theorem \ref{existence}, suppose that  $f \in \lph \cap H^k_{loc}(\bR,H)$, $k \ge 1$. Then the causal weak solution $u$ of  \eqref{weakde} satisfies $u\in C^k(\bR,H) \cap C^{k-1}(\bR,V)$. For every $T  \ge T_0$, there exists $C_{T_0,T} \ge 0$ depending on $T_0$, $T$,  $C_*$, $C^*$,  $\|B\|_{{\cal B}(H)}$, and $\|Q\|_{L^1(\bR,{\cal B}(H))}$, so that for any $t \le T$,
\begin{equation}
\label{eqn:reghyp}
\|u^{(j)}(t)\|^2_V \le C_{T_0,T}\sum_{l=0}^{j+1}\int_{-\infty}^{t} \,ds\,\|f^{(l)}(s)\|^2, \,\,j=0,...k-1.
\end{equation}
In particular, $u \in C^1(\bR,H) \cap C^0(\bR,V)$ is a strong solution of \eqref{formeq}. For each $j=0,...,k$, $u^{(j)}$ is the weak solution of \eqref{formeq} with right-hand side $f^{(j)}$.
\end{theorem}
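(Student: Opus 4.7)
The strategy is a bootstrap built on the time-translation invariance of the autonomous system \eqref{formeq}. Since $A$, $B$, $P$, and the kernel $Q$ are all independent of $t$, a translate $u(\cdot+h)$ of a weak solution is again a causal weak solution, driven by $f(\cdot+h)$; by linearity the difference quotient
\[
\Delta_h u(t) := \frac{u(t+h)-u(t)}{h}
\]
is the unique causal weak solution of \eqref{formeq} with right-hand side $\Delta_h f$. Because $f\in H^1_{\rm loc}(\bR,H)$, $\Delta_h f \to f'$ in $L^2_{\rm loc}(\bR,H)$ as $h\to 0$, and the estimate \eqref{eqn:bashyp} of Theorem \ref{existence} applied to the differences $\Delta_h u - \Delta_{h'} u$ (themselves weak solutions with right-hand side $\Delta_h f - \Delta_{h'} f$) shows that $\{\Delta_h u\}$ is Cauchy in $C^0([-T,T],H)$ for every $T$. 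Its limit lies in $C^0(\bR,H)\cap \lph$, coincides with $u'$ as a distribution, and is the causal weak solution driven by $f'$; thus $u\in C^1(\bR,H)$. Iterating this step $k-1$ more times (each iteration consumes one time derivative of $f$, allowed since $f^{(j)}\in H^1_{\rm loc}$ for $j\le k-1$) yields $u\in C^k(\bR,H)$ along with the identification of $u^{(j)}$ as the causal weak solution driven by $f^{(j)}$ for $j=0,\ldots,k$.

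Next I would promote the weak solution to a strong one. Testing \eqref{weakde} against $\phi(t) = \chi(t)v$ with $\chi\in C_0^\infty(\bR)$ and $v\in V$, then integrating by parts in $t$ (now legal because $u\in C^1(\bR,H)$), gives for a.e.\ $t$ and every $v\in V$
\[
\langle u(t), Pv\rangle = \langle Au'(t)+Bu(t)+R[u](t)-f(t),\, v\rangle.
\]
Skew-adjointness of $P$ identifies this with the statement that $u(t)\in V = \mathrm{Dom}(P^*)$ and
\[
Pu(t) = f(t) - Au'(t) - Bu(t) - R[u](t).
\]
The right-hand side of this identity is continuous in $t$ into $H$: $f\in H^1_{\rm loc}\hookrightarrow C^0$, $u,u'\in C^0(\bR,H)$ from the first step, and $R[u]\in C^0(\bR,H)$ by the continuity assumption on $Q$. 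Equivalence of the graph norm of $P$ with $\|\cdot\|_V$ then upgrades $u\in C^0(\bR,H)$ to $u\in C^0(\bR,V)$, and \eqref{formeq} becomes a pointwise identity, so $u$ is a strong solution. Applying the same argument to each $u^{(j)}$ gives $u^{(j)}\in C^0(\bR,V)$ for $j=0,\ldots,k-1$.

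The energy estimate \eqref{eqn:reghyp} follows by combining Theorem \ref{existence} applied to the weak solution $u^{(j+1)}$ driven by $f^{(j+1)}$ (controlling $\|u^{(j+1)}(t)\|_H^2$) with the identity above for $Pu^{(j)}$, standard $L^1$-convolution bounds on $R[u^{(j)}]$, and the causal fundamental theorem of calculus $\|f^{(j)}(t)\|^2 \le \int_{-\infty}^t (\|f^{(j)}\|^2 + \|f^{(j+1)}\|^2)$ to convert pointwise values of $f^{(j)}$ into integrals of $\|f^{(l)}\|^2$ for $l\le j+1$. I expect the main technical obstacle to be the very first step: verifying explicitly that $\Delta_h u$ is a weak solution in the sense of \eqref{weakde} with right-hand side $\Delta_h f$. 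This reduces to the substitution $\phi \mapsto \phi(\cdot - h)$ in \eqref{weakde}, and in particular requires that $R^*[\phi(\cdot-h)](t) = (R^*[\phi])(t - h)$, which rests on the time-translation invariance of $Q$ and must be checked rather than asserted; once this is in hand, the rest of the proof is essentially mechanical.
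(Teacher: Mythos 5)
Your proposal is correct and follows essentially the same route as the paper: difference quotients in $t$ (translation invariance of the autonomous system) plus the basic energy estimate give $u\in C^k(\bR,H)$ and identify $u^{(j)}$ as the weak solution driven by $f^{(j)}$ (the paper's Corollary \ref{ck}), and then the pointwise identity $\langle u(t),Pw\rangle=\langle Au'(t)+Bu(t)+R[u](t)-f(t),w\rangle$ together with skew-adjointness of $P$ (so ${\cal D}(P^*)={\cal D}(P)=V$) yields $u(t)\in V$, the bound \eqref{eqn:reghyp} via $w=Pu(t)$, and $C^0(\bR,V)$ continuity. The only cosmetic difference is that you obtain the pointwise identity by testing with separated test functions $\chi(t)v$ and integrating by parts directly, whereas the paper reaches the same identity by mollifying (the identity \eqref{foo-bar}) and passing to the limit $n\to\infty$.
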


Regularity of weak solutions $u$ as functions of the coefficients $A,B,Q$ is expressed through estimates which are uniform over certain sets of problems (that is, certain sets of coefficients), which are important enough to merit a definition: given $0 < C_* \le C^*, 0 \le C_B, 0 \le C_Q$, define 
\[
{\cal P}(C_*,C^*,C_B,C_Q) = \{(A,B,Q) \in {\cal B}_{\rm symm}^{+}(H) \times {\cal B} \times [L^1(\bR, {\cal B}_{\rm symm}(H)) \cap C^0(\bR_+, {\cal B}(H))]: 
\]
\[
C_*I \le A \le C^*I, \|B\|_{{\cal B(H)}} \le C_B, \|Q\|_{L^1(\bR,H)}+\|Q\|_{L^{\infty}(\bR,H)} \le C_Q\};
\]
\begin{equation}
\label{admiss}
{\cal P} = \cup \{{\cal P}(C_*,C^*,C_B,C_Q): 0 < C_* \le C^*, 0 \le C_B, 0 \le C_Q\}.
\end{equation}

Strong convergence of the coefficient operators, in an appropriate sense, leads to uniform convergence of the corresponding weak solutions on compact sets in $\bR$. For simplicity, and because all of the examples we have in mind satisfy this restriction, we assume that all of the systems appearing the the formulation of this continuity result share the same ``spatial'' operator $P$. It is possible to weaken this assumption, that is, to approximate $P$ as well; we leave the formulation and proof of this (slightly) stronger result to the reader.

\begin{theorem}\label{strong}Suppose that $H,V$ are as described, $P:V\rightarrow H$ is skew-adjoint, $0 < C_* \le C^*, 0 \le C_B, 0 \le C_Q$, $\{(A_m,B_m,Q_m): m \in \bN\} \subset {\cal P}(C_*,C^*,C_B,C_Q)$, $(A,B,Q) \in {\cal P}(C_*,C^*,C_B,C_Q)$, and 
\begin{itemize}
\item[1. ] $ \lim_{m \to \infty} \| (A_m - A) w \| \to 0$ for all $w \in H$; 
\item[2. ] $\lim_{m \to \infty} \| (B_m - B) w \| \to 0$ for all $w \in H$;
\item[4. ] the convolution operators $R_m, R$ with kernels $Q_m,Q$ satisfy $\lim_{m \to \infty} \| R_m[w] - R[w]\|_{L^2((-\infty,T])} \to 0$ for any $T \in \bR$, all $w \in \lph$.
\end{itemize}
Let $u_m$, respectively $u$, be causal weak solutions of the differential equation \eqref{formeq} with coefficients $(A_m,P,B_m,Q_m)$, respectively $(A,P,B,Q)$. and (common) causal right-hand side $f \in \llh$.  Then for any choice of  $T_0 \le T \in \bR$, 
\[
\lim_{m \rightarrow \infty} \,\|u_m - u\|_{L^{\infty}([T_0,T],H)} = 0.
\]
\end{theorem}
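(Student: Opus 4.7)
\textbf{Plan of proof for Theorem \ref{strong}.} The natural idea is to set $w_m=u_m-u$, subtract the two equations, and derive an equation of the same abstract form
\[
A_m w_m' + Pw_m + B_m w_m + R_m[w_m] = -(A_m-A)u' - (B_m-B)u - (R_m-R)[u],
\]
then apply the energy estimate of Theorem \ref{existence} (whose constant depends only on $C_*, C^*, C_B, C_Q$ and hence is uniform over $\mathcal{P}(C_*,C^*,C_B,C_Q)$) to control $\|w_m\|_{L^\infty([T_0,T],H)}$ by the $L^2$ norm of the right-hand side. The obstacle is that this manipulation requires $u$ to be strong so that $u'$ exists, whereas Theorem \ref{existence} only guarantees a weak solution with $u\in C^0(\bR,H)$.

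The standard remedy is to regularize the forcing. Choose a sequence $f^\epsilon\in H^1_{\rm loc}(\bR,H)$ of causal right-hand sides (say vanishing for $t<T_0-1$, obtained by one-sided mollification in $t$) with $f^\epsilon\to f$ in $L^2((-\infty,T],H)$ as $\epsilon\to 0$. Let $u^\epsilon_m$, $u^\epsilon$ be the causal solutions of \eqref{formeq} with right-hand side $f^\epsilon$ and coefficients $(A_m,P,B_m,Q_m)$, respectively $(A,P,B,Q)$. By Theorem \ref{reghyp}, $u^\epsilon_m\in C^1(\bR,H)\cap C^0(\bR,V)$ and likewise for $u^\epsilon$, so the difference equation above makes sense when $u$ is replaced by $u^\epsilon$. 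The linearity of the problem together with Theorem \ref{existence} applied to $u_m-u^\epsilon_m$ and to $u-u^\epsilon$ with right-hand side $f-f^\epsilon$ yields
\[
\|u_m-u^\epsilon_m\|_{L^\infty([T_0,T],H)}^2 + \|u-u^\epsilon\|_{L^\infty([T_0,T],H)}^2 \le 2C_{T_0-1,T}\int_{-\infty}^{T}\|f-f^\epsilon(s)\|^2\,ds,
\]
with a constant independent of $m$. Thus the outer approximations can be made arbitrarily small, uniformly in $m$, by choosing $\epsilon$ small.

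For the middle term I would apply Theorem \ref{existence} to the equation
\[
A_m(u^\epsilon_m-u^\epsilon)' + P(u^\epsilon_m-u^\epsilon) + B_m(u^\epsilon_m-u^\epsilon) + R_m[u^\epsilon_m-u^\epsilon] = g^\epsilon_m,
\]
where $g^\epsilon_m = -(A_m-A)(u^\epsilon)' - (B_m-B)u^\epsilon - (R_m-R)[u^\epsilon]$, to obtain
\[
\|u^\epsilon_m-u^\epsilon\|_{L^\infty([T_0,T],H)}^2 \le C_{T_0-1,T}\int_{-\infty}^{T}\|g^\epsilon_m(s)\|^2\,ds.
\]
I now claim that, for fixed $\epsilon$, $g^\epsilon_m\to 0$ in $L^2((-\infty,T],H)$ as $m\to\infty$. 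For the first summand, hypothesis 1 provides pointwise (in $s$) strong convergence $(A_m-A)(u^\epsilon)'(s)\to 0$ in $H$, and the uniform bound $\|(A_m-A)(u^\epsilon)'(s)\|\le 2C^*\|(u^\epsilon)'(s)\|$ together with $(u^\epsilon)'\in C^0(\bR,H)$ (and causality/support in $[T_0-1,\infty)$) gives an $L^2$-dominator on $[T_0-1,T]$, so dominated convergence applies. The $B_m-B$ term is handled identically using hypothesis 2 and $u^\epsilon\in C^0(\bR,H)$. The third summand converges by hypothesis 4 applied directly with $w=u^\epsilon\in\lph$. Combining these convergences with the bound above, $\|u^\epsilon_m-u^\epsilon\|_{L^\infty([T_0,T],H)}\to 0$.

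Putting the pieces together by the triangle inequality,
\[
\|u_m-u\|_{L^\infty([T_0,T],H)} \le \|u_m-u^\epsilon_m\|_{L^\infty([T_0,T],H)} + \|u^\epsilon_m-u^\epsilon\|_{L^\infty([T_0,T],H)} + \|u^\epsilon-u\|_{L^\infty([T_0,T],H)},
\]
one first fixes $\epsilon$ small enough to control the outer two terms uniformly in $m$, then sends $m\to\infty$ to kill the middle term. The main technical point — and the step most apt to require care — is verifying the dominated-convergence argument for $g^\epsilon_m$: one must exploit precisely the uniform operator bounds built into the definition of $\mathcal{P}(C_*,C^*,C_B,C_Q)$ together with the regularity $u^\epsilon\in C^1\cap C^0V$ supplied by Theorem \ref{reghyp}, and also must check that the approximation $f^\epsilon$ can be chosen causal with a support margin, so that all the appearances of $C_{T_0,T}$-type constants remain uniform in $m$ and $\epsilon$.
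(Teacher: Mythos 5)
Your argument is correct, but it follows a genuinely different route from the paper's. You regularize the right-hand side in time ($f^\epsilon\to f$, causal, $H^1_{\rm loc}$), invoke Theorem \ref{reghyp} to get strong solutions $u^\epsilon_m,u^\epsilon$, write the difference equation with the coefficient perturbation acting as a forcing term $g^\epsilon_m=-(A_m-A)(u^\epsilon)'-(B_m-B)u^\epsilon-(R_m-R)[u^\epsilon]$, and finish with the uniform-over-${\cal P}(C_*,C^*,C_B,C_Q)$ energy estimate, dominated convergence, and a three-term triangle inequality; the quantifier order (fix $\epsilon$, then let $m\to\infty$) is handled correctly, and there is no circularity since you only use Theorems \ref{existence} and \ref{reghyp}, which precede Theorem \ref{strong}. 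The paper instead works directly with the weak solutions for general $f\in\llh$: it first proves $u_m\rightharpoonup u$ in $\llh$ (Lemma \ref{weak.conv}), upgrades this to pointwise-in-$t$ weak convergence via equicontinuity (Lemma \ref{pw.weak.conv}), proves a separate lemma on the memory operators that exploits the symmetry of $Q_m$ (Lemma \ref{rconv}), and then combines a polarization identity with the energy identity \eqref{eident} and Gronwall to show $\|u_m-u\|^2(t)\le C|g_m(t)|$ with $g_m\to 0$ uniformly. What each buys: your density argument is shorter and conceptually cleaner --- because the argument of $R_m-R$ is the fixed function $u^\epsilon\in\lph$, hypothesis 4 applies verbatim and you never need the symmetry-based Lemma \ref{rconv} or any weak-compactness machinery; the paper's proof, on the other hand, never leaves the class of weak solutions, so it yields the intermediate weak-convergence lemmas (\ref{weak.conv}, \ref{pw.weak.conv}) and the explicit error functional $g_m$, which are reused later (e.g.\ in the proofs of Theorems \ref{smooth} and \ref{regfwd}). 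The only points to spell out in a full write-up are the ones you already flag: that $f^\epsilon$ can be taken causal with a fixed support margin so that all constants $C_{T_0-1,T}$ are uniform in $m$ and $\epsilon$, and that $g^\epsilon_m$ is causal and locally square-integrable so the energy estimate legitimately applies to the difference problem.
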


With additional regularity of the right-hand side, solutions of \eqref{formeq} have directional (G\^{a}teaux) derivatives as functions of the coefficient operators:

\begin{theorem}\label{smooth} Suppose that $H,V$ are as described, $P:V\rightarrow H$ is skew-adjoint, $(A,B,Q) \in {\cal P}$, and $f \in H^1(\bR,H)$ is causal. Denote by $u \in C^1(\bR,H) \cap C^0(\bR,V)$ the causal strong solution of \eqref{formeq} with these choices of coefficients and right-hand side. Assume that $\delta A, \delta B \in {\cal B}(H)$, $\delta A$ is self-adjoint, and $\delta Q \in L^1(\bR,{\cal B}_{\rm symm}(H)) \bigcap C^0(\bR_+,{\cal B}(H))$, $\delta Q = 0$ for $t<0$. Define for $h \in \bR$
\[
A_h = A+h\delta A, \,\,B_h=B+h\delta B, \,\,Q_h=Q+h\delta Q.
\]
For sufficiently small $h$, $(A_,B_h,Q_h) \in {\cal P}$, so that the problem \eqref{formeq} with coefficients $A_h,P,B_h,Q_h$ and right-hand side $f$ has a unique (strong) solution $u_h \in C^1(\bR,H) \cap C^0(\bR,V)$. Denote by $\delta u \in C^0(\bR,H)$ the weak solution of the formal evolution problem
\begin{equation}\label{gateaux}
A \delta u' + P\delta u + B \delta u + R[\delta u] = -\delta A u' - \delta B  u - \delta R[u],
\end{equation}
in which $R$ ($\delta R$) is the convolution operator with kernel $Q$ ($\delta Q$), as usual. Then for any choice of  $T_0 \le T \in \bR$, 
\begin{equation}\label{newt}
\lim_{h \rightarrow 0}\,\left\|\frac{u_h-u}{h}-\delta u\right\|_{L^{\infty}([T_0,T],H)} = 0.
\end{equation}
\end{theorem}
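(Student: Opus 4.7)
The plan is to show that the difference quotient $v_h := h^{-1}(u_h - u)$ and the candidate derivative $\delta u$ are causal weak solutions of the \emph{same} formal evolution problem, differing only in which coefficient triple is used (respectively $(A_h,B_h,Q_h)$ and $(A,B,Q)$), but with a single, $h$-independent right-hand side. Once this is established, the strong-convergence theorem (Theorem \ref{strong}) delivers \eqref{newt} immediately.

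Choose $C_*,C^*,C_B,C_Q$ with $(A,B,Q)\in{\cal P}(C_*,C^*,C_B,C_Q)$. A routine estimate shows that for all sufficiently small $|h|$ the perturbed triple $(A_h,B_h,Q_h)$ lies in a fixed admissible set in ${\cal P}$, independent of $h$; Theorem \ref{existence} then produces the causal weak solution $u_h$, and since $f\in H^1(\bR,H)$, Theorem \ref{reghyp} upgrades both $u_h$ and $u$ to class $C^1(\bR,H)\cap C^0(\bR,V)$. Subtracting the weak-form identity \eqref{weakde} for $u$ from that for $u_h$ and dividing by $h$ yields, for every $\phi\in C_0^\infty(\bR,V)$,
\begin{equation*}
\int\langle v_h,\,A_h\phi'+P\phi-B_h^*\phi-R_h^*[\phi]\rangle\,dt + \int\langle u,\,\delta A\,\phi'-\delta B^*\phi-\delta R^*[\phi]\rangle\,dt = 0.
\end{equation*}
Because $u\in C^1(\bR,H)$, integrating by parts in $t$ on $\langle u,\delta A\phi'\rangle$, together with a Fubini step on $\langle u,\delta R^*[\phi]\rangle$ and the self-adjointness of $\delta A$ and of each $\delta Q(\tau)$, recasts the second integral as $-\int\langle\delta A\,u' + \delta B\,u + \delta R[u],\phi\rangle\,dt$. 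Thus $v_h$ is the causal weak solution of the formal evolution problem with coefficients $(A_h,P,B_h,Q_h)$ and the \emph{$h$-independent} right-hand side $f_v := -(\delta A\,u' + \delta B\,u + \delta R[u])\in\llh$, while by construction $\delta u$ solves the same formal problem with coefficients $(A,P,B,Q)$ and the same right-hand side $f_v$.

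The hypotheses of Theorem \ref{strong} are now immediate: $A_h-A=h\,\delta A$ and $B_h-B=h\,\delta B$ tend to zero in operator norm, and $R_h[w]-R[w]=h\,\delta R[w]\to 0$ in $L^2((-\infty,T],H)$ for each $w\in\lph$ and each $T\in\bR$. Applied to the common right-hand side $f_v$, that theorem yields $\|v_h-\delta u\|_{L^\infty([T_0,T],H)}\to 0$, which is \eqref{newt}. The only nonroutine ingredient is the integration-by-parts/Fubini manipulation in the previous paragraph, and it is precisely that step which forces the extra regularity hypothesis $f\in H^1(\bR,H)$, needed so that $u'$ exists as a continuous $H$-valued function.
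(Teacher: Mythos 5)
Your proof is correct, but it reaches \eqref{newt} by a genuinely different route than the paper. Both arguments start by subtracting the weak identities \eqref{weakde} for $u_h$ and $u$ and dividing by $h$, and both use the same integration-by-parts/Fubini manipulation (the paper's \eqref{sm1}) together with the hypothesis $f\in H^1(\bR,H)$, which guarantees $u'\in C^0(\bR,H)$ so that $\delta A\,u'+\delta B\,u+\delta R[u]$ is an admissible causal right-hand side. The difference is in how the quotient is split: you exhibit the Newton quotient $v_h=(u_h-u)/h$ itself as the causal weak solution of the \emph{perturbed} problem (coefficients $A_h,P,B_h,Q_h$) with the fixed, $h$-independent right-hand side $-(\delta A\,u'+\delta B\,u+\delta R[u])$, and then conclude by a single application of the continuity-in-coefficients result (Theorem \ref{strong}), since $A_h\to A$, $B_h\to B$ in operator norm and $R_h[w]\to R[w]$ in $L^2((-\infty,T],H)$. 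The paper instead also subtracts the weak form \eqref{gateaux} of $\delta u$, so that the remainder $v_h-\delta u$ appears as the causal weak solution of the \emph{unperturbed} problem with the $h$-dependent right-hand side $\delta A(u_h-u)'+\delta B(u_h-u)+\delta R[u_h-u]$ (identity \eqref{sm3}); it then needs $u_h\to u$ together with convergence of the time derivatives $u_h'\to u'$ (via Theorem \ref{reghyp} and Corollary \ref{regstrong}) to see that this source vanishes in $L^2((-\infty,T],H)$, after which the energy estimate of Theorem \ref{existence} closes the argument. Your version is leaner: it avoids any statement about $u_h'$ and uses Theorem \ref{strong} only once. The paper's decomposition, on the other hand, expresses the remainder directly as the fixed solution operator applied to a small source, a device that is reused in the proof of Theorem \ref{regfwd} (equation \eqref{gatgat}) and in the $O(h\epsilon)$ remainder remark of the Discussion, so it generalizes more readily to the higher-order and quantitative statements that follow; neither route produces a convergence rate, since Theorem \ref{strong} supplies none. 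The only small point worth making explicit in your write-up is that Theorem \ref{strong} is stated for sequences, so \eqref{newt} is obtained by applying it along arbitrary sequences $h_m\to 0$ after noting that all $(A_{h_m},B_{h_m},Q_{h_m})$ lie in one fixed set ${\cal P}(C_*',C^{*\prime},C_B',C_Q')$.
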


With the correct choice of topology, the solution is (Fr\'{e}chet) differentiable as a function of the coefficients, assuming as in Theorem \ref{smooth} that the right-hand side is somewhat regular in $t$. To express this fact in a form most useful for application to inverse problems, introduce another Hilbert space $W$ (of ``measurements''), and a ``sampling'' map $S: V \rightarrow W$, assumed continuous, Of course $S$ induces a continuous map $:C^k(\bR,V) \rightarrow C^k(\bR,W)$ for any $k\in \bN$; we abuse notation by writing $S$ for this induced map also. 

For causal $f \in H^1_{\rm loc}(\bR,H)$, define ${\cal F}_f: {\cal P} \mapsto C^1(\bR,W)$ by
\begin{equation}
\label{forward}
{\cal F}_f[A,B,Q] = Su,
\end{equation}
in which $u \in C^1(\bR,H) \cap C^0(\bR,V)$ is the causal (strong) solution of \eqref{formeq} with coefficients $A,P,B,Q$ and right-hand side $f$. It follows from Theorem \ref{reghyp} that ${\cal F}_f$ is well-defined.

\begin{theorem}\label{regfwd} Suppose that $f \in H^k_{\rm loc}(\bR,H)$ is causal, $k \ge 2$, and view ${\cal P}$ as an open subset of the Banach space $ {\cal M} = {\cal B}_{\rm symm}(H) \times {\cal B}(H) \times [L^1(\bR, {\cal B}_{\rm symm}(H)) \cap C^0(\bR_+, {\cal B}(H))]$, with norm
\[
(A,B,Q) \mapsto \|(A,B,Q)\|_{\cal M} = \|A\|_{{\cal B}(H)} + \|B\|_{{\cal B}(H)} + \max ( \|Q\|_{L^1(\bR_+,{\cal B}(H))} , \|Q\|_{L^{\infty}(\bR_+,{\cal B}(H))} ). 
\]
Then for any $T_0 \le T_1 \in \bR$, ${\cal F}_f \in C^{k-1}({\cal P}, C^0([T_0,T_1],W))$.
\end{theorem}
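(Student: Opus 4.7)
The plan is to upgrade the G\^{a}teaux differentiability provided by Theorem \ref{smooth} to Fr\'{e}chet differentiability of order $k-1$, using the energy estimate \eqref{eqn:reghyp} from Theorem \ref{reghyp} as the universal workhorse. The key structural observation is that whenever one differentiates \eqref{formeq} formally in the coefficients, the derivative $\delta u$ again solves an abstract first-order evolution equation of \emph{exactly the same form} as \eqref{formeq}, with the same operators $(A,P,B,Q)$ on the left-hand side and a new right-hand side that is linear in $(\delta A,\delta B,\delta Q)$ and built out of $u$ and its time derivative. Each order of differentiation therefore costs exactly one time derivative of $f$, which is what relates the hypothesis $f\in H^k_{\rm loc}(\bR,H)$ to the conclusion ${\cal F}_f\in C^{k-1}$.

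First I would dispose of Lipschitz continuity of ${\cal F}_f$, as a warmup that fixes the pattern used throughout. Given $(A,B,Q)\in{\cal P}$ and a perturbed triple $(A^h,B^h,Q^h)=(A+\delta A,B+\delta B,Q+\delta Q)$ in the same class ${\cal P}(C_*,C^*,C_B,C_Q)$, let $u,u^h$ be the corresponding strong solutions. Their difference $w=u^h-u$ satisfies the formal evolution problem with coefficients $(A^h,P,B^h,Q^h)$ and right-hand side $g=-\delta A\,u'-\delta B\,u-\delta R[u]$, where $\delta R$ denotes convolution by $\delta Q$. Because $k\ge 2$, Theorem \ref{reghyp} applied to $u$ yields $u\in C^2(\bR,H)\cap C^1(\bR,V)$ with bounds depending only on the class parameters, so $g\in H^1_{\rm loc}(\bR,H)$ with $\|g\|_{H^1([T_0,T_1],H)}\le C\,\|(\delta A,\delta B,\delta Q)\|_{\cal M}$. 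Applying \eqref{eqn:reghyp} with $j=0$ to the equation solved by $w$ then bounds $\|w\|_{L^\infty([T_0,T_1],V)}$ by the same quantity, and composition with the continuous sampling map $S:V\to W$ gives Lipschitz continuity of ${\cal F}_f$ into $C^0([T_0,T_1],W)$.

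The Fr\'{e}chet derivative is then constructed by taking the G\^{a}teaux derivative $\delta u$ of Theorem \ref{smooth}, the strong solution of \eqref{gateaux}, as a candidate. Linearity in $(\delta A,\delta B,\delta Q)$ is inherited from the right-hand side of \eqref{gateaux}, and repeating the warmup estimate now with \eqref{gateaux} in place of the equation for $w$ shows that this candidate is a bounded linear map from ${\cal M}$ into $C^0([T_0,T_1],V)$. To verify the Fr\'{e}chet condition, the remainder $r_h=u^h-u-\delta u$ (with $\delta A,\delta B,\delta Q$ treated as $O(h)$) solves the same abstract problem with coefficients $(A^h,P,B^h,Q^h)$ and a right-hand side that is \emph{quadratic} in the perturbations: it combines $-\delta A(\delta u)'-\delta B(\delta u)-\delta R[\delta u]$ with the term arising from replacing $u$ by $u+\delta u$ in $g$. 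One more application of \eqref{eqn:reghyp} bounds $\|r_h\|_{L^\infty([T_0,T_1],V)}$ by $C\,\|(\delta A,\delta B,\delta Q)\|_{\cal M}^2$, establishing the Fr\'{e}chet property; composition with $S$ delivers $D{\cal F}_f$ in explicit form.

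The higher derivatives are then obtained by induction. Having characterized $D{\cal F}_f$ via \eqref{gateaux}, and observing that \eqref{gateaux} has exactly the structural form of \eqref{formeq}, the same scheme applies: each new differentiation produces another abstract first-order equation of the same type, with a multilinear right-hand side that depends on higher time derivatives of $u$. Theorem \ref{reghyp} keeps supplying the regularity of $u^{(j)}$ needed to close the energy estimate, provided $f\in H^{j+1}_{\rm loc}(\bR,H)$, so the hypothesis $f\in H^k_{\rm loc}$ accommodates exactly $j\le k-1$ derivatives. Continuity of $D^j{\cal F}_f$ with respect to $(A,B,Q)$ follows from the warmup-type Lipschitz estimate applied to the difference of $j$-th derivative equations for two nearby coefficient triples. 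The main obstacle I anticipate is not analytic but combinatorial: one must carefully track, at each induction step, which time derivatives of $f$ and which components of the perturbation enter the right-hand side of the $j$-th variational equation, in order to confirm that the regularity budget is indeed $k-1$ and no more.
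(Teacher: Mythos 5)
Your route is genuinely different from the paper's — you aim for Fr\'{e}chet differentiability directly, via a quadratic bound on the remainder $r_h=u_h-u-\delta u$ — and it is precisely there that the argument has a real gap. The remainder does indeed solve \eqref{formeq} with coefficients $(A_h,P,B_h,Q_h)$ and right-hand side exactly $-(\delta A\,\delta u'+\delta B\,\delta u+\delta R[\delta u])$ (your description of an extra term from ``replacing $u$ by $u+\delta u$'' is spurious but harmless). However, to apply \eqref{eqn:reghyp} with $j=0$ and obtain a bound on $\|r_h\|_{L^{\infty}([T_0,T_1],V)}$ you must put that right-hand side in $H^1_{\rm loc}(\bR,H)$, which requires $\delta u''\in L^2_{\rm loc}(\bR,H)$. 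With $f\in H^2_{\rm loc}(\bR,H)$ (the case $k=2$ you must cover), Theorem \ref{reghyp} gives only $u\in C^2(\bR,H)\cap C^1(\bR,V)$, so the right-hand side of \eqref{gateaux} is merely $H^1_{\rm loc}$ and $\delta u\in C^1(\bR,H)\cap C^0(\bR,V)$: one time derivative short. Your method therefore spends $f\in H^{k+1}$ to produce $C^{k-1}$, in conflict with the hypothesis; the only bound available at the stated regularity is the $H$-norm estimate via \eqref{eqn:bashyp}, which does not suffice because $S$ is continuous only on $V$. The paper's remark that the one-derivative loss in Theorem \ref{regfwd} is sharp indicates this is not bookkeeping you can defer — the ``combinatorial'' accounting you postpone is exactly where the induction breaks at top order $j=k-1$.

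Your warmup Lipschitz estimate and the boundedness of the candidate derivative $(\delta A,\delta B,\delta Q)\mapsto S\delta u$ are fine and coincide with the first half of the paper's proof. The paper then avoids any second-order remainder estimate: it proves that $(A,B,Q)\mapsto D{\cal F}_f[(A,B,Q)]$ is continuous by writing the evolution equation satisfied by $\delta u_m-\delta u$ for a norm-convergent sequence $(A_m,B_m,Q_m)\to(A,B,Q)$, controlling its right-hand side using the uniform bounds on $\delta u_m$ from Theorem \ref{reghyp} together with Theorem \ref{strong} applied to $u_m'-u'$ (which needs only $f'\in H^1_{\rm loc}$), and then concluding $C^1$ from G\^{a}teaux differentiability plus continuity of the derivative, with the higher orders by induction. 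To repair your argument at the stated regularity you would need to replace the quadratic-remainder step by a continuity-of-the-derivative argument of this kind, or otherwise produce an $o(\|(\delta A,\delta B,\delta Q)\|_{\cal M})$ bound in the $V$-norm that does not differentiate $\delta u$ twice in time.
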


\noindent {\bf Remark:} Note that this theorem expresses ``differentiability with a loss of one derivative'', in two senses. First, taking $k=2$, even though $f \in H^2_{\rm loc}(\bR,H)$, the map ${\cal F}$ is only once differentiable. Second, even though ${\cal F}$ {\em a priori} takes values in $C^1(\bR,W)$, it is of class $C^1$ only as a map with range $C^0(\bR,W)$. In these respects Theorem \ref{regfwd} is sharp - counterexamples can be adduced to show that neither of these losses-of-derivative are artifacts of the proof.

\section{The Energy Inequality}
Define the {\em energy} $E(t)$ of a weak solution $u$ of (\ref{formeq}) by 
\begin{equation}\label{edef}
E(t) = \half \langle u(t),A u(t) \rangle
\end{equation}
It follows from the definition of weak solution that $E$ is well-defined almost everywhere, and locally integrable. Because $A$ is positive-definite,
\begin{equation}
\label{enorm}
C_*\|u(t)\|^2 \le E(t) \le C^*\|u(t)\|^2
\end{equation}
hold for almost all $t\in \bR$, for suitable $C^* \ge C_* > 0$.

\noindent {\bf Remark.} In the linear acoustics example presented in the previous section, $E(t)$ is precisely the mechanical energy of the acoustic field at time $t$.

In this and the following sections, we will use $C$ to denote a generic nonnegative constant depending on $C_*,C^*,  \|B\|_{{\cal B}(H)}$, and $\|Q\|_{L^1(\bR,{\cal B}(H))}$, and possibly on other quantities as noted.

We shall repeatedly use following property of weak solutions:

\begin{proposition}\label{spacesmooth} Suppose that $u \in \llh$ is a weak solution of \eqref{formeq}. Then for any $\eta \in C_0^\infty(\bR)$, $\eta \ast u$ is a strong solution, and in fact
\begin{equation}\label{smprop}
\eta \ast u \in C^{\infty}(\bR,V).
\end{equation}
\end{proposition}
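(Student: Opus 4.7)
The plan is to exploit that, in the weak formulation \eqref{weakde}, any test function of the form $\phi_{t,w}(s)=\eta(t-s)\,w$ with $w \in V$ and $t \in \bR$ belongs to $C_0^{\infty}(\bR,V)$, and to manipulate the resulting identity into a pointwise statement about $u_{\eta}:=\eta*u$. Standard convolution estimates immediately give $u_{\eta} \in C^{\infty}(\bR,H)$, so the real content is (i) that $u_{\eta}(t) \in V$ for every $t$, and (ii) that $u_{\eta}$ is a strong solution of \eqref{formeq} with right-hand side $\eta*f$; the $C^{\infty}(\bR,V)$ conclusion will then follow by a bootstrap through the equation.

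First I would substitute $\phi_{t,w}$ into \eqref{weakde}. The $A$-term produces $-\frac{d}{dt}\langle A u_{\eta}(t),w\rangle$; the $P$-term produces $\langle u_{\eta}(t),Pw\rangle$; the $B^{\ast}$-term produces $-\langle Bu_{\eta}(t),w\rangle$ (using self-adjointness of $A$, causality is not needed here, just that $A,B$ are time-independent and commute with $t$-convolution); and the $R^{\ast}$-term, after swapping the order of integration and using the symmetry of $Q$, collapses to $-\langle R[u_{\eta}](t),w\rangle$, courtesy of associativity of convolution $\eta*(Q*u)=Q*(\eta*u)$. On the right-hand side one gets $-\langle f_{\eta}(t),w\rangle$ with $f_{\eta}=\eta*f$. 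Since $u_{\eta}$ is smooth in $t$ as an $H$-valued function, I can move $d/dt$ inside, obtaining the identity
\begin{equation*}
\langle u_{\eta}(t),Pw\rangle = \langle Au_{\eta}'(t)+Bu_{\eta}(t)+R[u_{\eta}](t)-f_{\eta}(t),\,w\rangle \qquad (w \in V,\,t\in\bR).
\end{equation*}

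Now comes the key step. The right-hand side of this identity, viewed as a linear functional of $w$, is bounded by a constant (depending on $t$) times $\|w\|_H$. Therefore the functional $w \mapsto \langle u_{\eta}(t), Pw\rangle$ extends continuously to $H$, which by definition of $P^{\ast}$ says $u_{\eta}(t)\in \mathrm{Dom}(P^{\ast})=\mathrm{Dom}(P)=V$, and moreover $P^{\ast}u_{\eta}(t)=-Pu_{\eta}(t)$ equals the vector appearing on the right. Density of $V$ in $H$ then converts the identity into the pointwise equation
\begin{equation*}
Au_{\eta}'(t)+Pu_{\eta}(t)+Bu_{\eta}(t)+R[u_{\eta}](t)=f_{\eta}(t),\qquad t \in \bR,
\end{equation*}
so $u_{\eta}$ is a strong solution with right-hand side $f_{\eta}$. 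The hardest part is precisely this Dom$(P^{\ast})$ argument: all the convolution bookkeeping is designed to land us in a form where skew-adjointness of $P$ can be used to promote $u_{\eta}(t)$ from $H$ into $V$ without any a priori regularity assumption on $u$.

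Finally, the smoothness of $u_{\eta}$ as a $V$-valued function follows by bootstrapping through the equation. Since $A,B \in \mathcal{B}(H)$, $u_{\eta} \in C^{\infty}(\bR,H)$, and convolution by $Q \in L^1(\bR,\mathcal{B}(H))$ preserves $C^{\infty}(\bR,H)$, every term other than $Pu_{\eta}$ on the identity $Pu_{\eta}=f_{\eta}-Au_{\eta}'-Bu_{\eta}-R[u_{\eta}]$ is $C^{\infty}(\bR,H)$; hence so is $Pu_{\eta}$. Because $V$ carries the graph norm of $P$ and $P$ is closed, $u_{\eta} \in C^0(\bR,V)$ is immediate, and closedness of $P$ lets me commute $P$ with each $t$-difference quotient to conclude by induction that $u_{\eta} \in C^k(\bR,V)$ for every $k$. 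This gives \eqref{smprop}.
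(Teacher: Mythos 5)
Your proof is correct and follows essentially the same route as the paper's: the special test function $\phi(s)=\eta(t-s)w$, the boundedness-in-$\|w\|_H$ argument promoting $\eta\ast u(t)$ into ${\cal D}(P^*)={\cal D}(P)=V$ via skew-adjointness, density of $V$ to get the pointwise equation, and a bootstrap through the equation (closedness of $P$ / graph-norm convergence of difference quotients) to reach $C^{\infty}(\bR,V)$. No substantive differences from the paper's argument.
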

\noindent {\bf Remark:} In applications such as acoustics, or viscoelasticity to be discussed later, $H$ and $V$ are function spaces, and membership in $V$ entails additional regularity beyond that required for membership in $H$. In such applications, the content of this theorem is that smoothing in time also ``smooths in space'', in the sense that the values of the smoothed weak solution are confined to the subspace $V$. 

\begin{proof} Choose the test function $\phi$ in \eqref{weakde} to have the special form $\phi(s) = \eta(t-s)w$, where $t \in \bR$, $w \in V$, and $\eta \in C_0^{\infty}(\bR)$. Then
\begin{eqnarray}\label{foo-bar}
\langle \eta \ast u(t),Pw \rangle 
& = & \left\langle \int \eta(t-s)u(s)\, ds, Pw \right\rangle \nonumber \\
& = & \int \langle u(s), P(\eta(t-s)w)\rangle \,ds\nonumber \\
& = & \int \Big[ \langle u(s), A \eta'(t-s)w + B^{\ast} \eta(t-s)w \nonumber \\
& &+R^{\ast}[\eta(t-\cdot)w](s) \rangle \Big] \, ds -  \langle \eta \ast f (t),w\rangle \nonumber\\
& = & \langle A(\eta' \ast u)(t)+B(\eta \ast u)(t) + R[\eta \ast u](t),w\rangle,
\end{eqnarray} 
where the third equality is simply a rearrangement of \eqref{weakde} with the special choice of test function $\phi(s) = \eta(t-s)w$ mentioned above, and the last holds amongst other reasons because $R$ is also convolutional. 
The right-hand side of \eqref{foo-bar} is bounded by a $w$-independent multiple of $\|w\|_H$, therefore so is the left. Therefore $\eta \ast u$ takes values in the domain ${\cal D}(P^*)$ of the adjoint $P^*$ for any $\eta \in C_0^{\infty}(\bR)$. But $P$ is skew-adjoint, so ${\cal D}(P^*) = {\cal D}(P) = V$. Therefore we can shift $P$ to the left-hand side of the inner product on left-hand side of \eqref{foo-bar}, to obtain 
\begin{equation}
\label{foo-bar1}
\langle -P(\eta \ast u)(t),w \rangle = \langle A(\eta' \ast u)(t)+B(\eta \ast u)(t) + R[\eta \ast u](t) - (\eta \ast f)(t),w\rangle.
\end{equation}
As \eqref{foo-bar1} holds for every $w \in V$ and $V \subset H$ is dense, it follows that 
\begin{equation}
\label{foo-bar2}
-P(\eta \ast u)(t) = A(\eta' \ast u)(t)+B(\eta \ast u)(t) + R[\eta \ast u](t) - (\eta \ast f)(t),
\end{equation}
that is, \eqref{formeq} is satisfied pointwise in $t$. 

Since the right-hand side of \eqref{foo-bar2} is of class $C^{\infty}(\bR,H)$, so is the left-hand side. 
Denote by $T_{\delta t} \in {\cal B}(L^2_{\rm loc}(\bR,H))$ the translation operator by $\delta t$, defined by $T_{\delta t}u(t)=u(t+\delta t)$. For each $k \in \bN$, let 
\[
\Delta_{k,\delta t} = \delta t^{-k} \sum_{j=-m_k}^{m_k} a_{k,j} T_{\delta t}^j 
\] 
be a finite difference operator for which $u \in C^0(\bR,H)$ has $k$ derivatives if and only if $\Delta_{i,\delta t} u$ converges pointwise in $t$ for each $i=1,...,k$ to $u^{(i)}$ as $\delta t \rightarrow 0$. From \eqref{foo-bar2} it follows that for each $k \in \bN$,
\begin{eqnarray}
\label{foo-bar3}
-\Delta_{k,\delta t}(P(\eta \ast u))(t) &=& -P (\Delta_{k,\delta t}(\eta \ast u)(t)) \nonumber\\
&=& \Delta_{k,\delta t}(A(\eta' \ast u)+B(\eta \ast u) + R[\eta \ast u] - \eta \ast f)(t).
\end{eqnarray}
Since the right-hand side of \eqref{foo-bar3} converges for each $t$, so does the left-hand side, both versions, as $\delta t \rightarrow 0$. Thus $\Delta_{k,\delta t}(\eta \ast u)(t)$ converges in the graph norm of $P$, hence in the sense of $\|\cdot\|_V$, as $\delta t \rightarrow 0$ for each $t \in \bR, k \in \bN$. Thus $\eta \ast u \in C^{\infty}(\bR,V)$.
\end{proof}

\begin{proposition}\label{energy}Let $u \in \llh$ be a weak solution of \eqref{formeq}, $E \in L_{\rm loc}^1(\bR)$ its energy as defined in \eqref{edef}.  Then
\begin{itemize} 
\item after modification on a set of measure zero, $E$ is continuous;
\item if in addition $f$ is causal, $\mbox{supp }f \subset [T_0,\infty)$, then for any $T \in \bR$ there exists $C_{T_0,T}\ge 0$ so that for $t\in(-\infty,T]$,
\begin{equation}\label{energyest}
E(t) \leq C_{T_0,T}\int_{-\infty}^t \|f\|^2.
\end{equation}
\end{itemize}
\end{proposition}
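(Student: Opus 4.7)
The plan is to derive Proposition~\ref{energy} from the pointwise energy identity satisfied by the time-mollified weak solution, whose smoothness is guaranteed by Proposition~\ref{spacesmooth}. Fix a nonnegative symmetric mollifier $\eta_n \in C_c^\infty(\bR)$ with $\int \eta_n = 1$ and $\mathrm{supp}\,\eta_n \subset [-1/n, 1/n]$, and set $u_n = \eta_n \ast u$, $f_n = \eta_n \ast f$. By Proposition~\ref{spacesmooth}, $u_n \in C^\infty(\bR, V)$ is a pointwise solution of $A u_n' + P u_n + B u_n + R[u_n] = f_n$, and standard mollification yields $u_n \to u$ in $L^2_{\rm loc}(\bR, H)$, with analogous convergence for $f_n$ and (via continuity of $R$ on $\lph$) for $R[u_n]$. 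Differentiating $E_n(t) = \tfrac{1}{2}\langle u_n(t), A u_n(t)\rangle$, using self-adjointness of $A$ together with skew-adjointness of $P$ (which annihilates $\langle P u_n, u_n\rangle$, since $u_n(t) \in V = \mathcal{D}(P)$) produces the pointwise identity
\begin{equation}\label{en-id}
\frac{d}{dt} E_n(t) \;=\; -\langle B u_n(t), u_n(t)\rangle \;-\; \langle R[u_n](t), u_n(t)\rangle \;+\; \langle f_n(t), u_n(t)\rangle.
\end{equation}

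Integrating \eqref{en-id} from $s$ to $t$ gives $E_n(t) - E_n(s) = F_n(s,t)$, and the $L^2_{\rm loc}$ convergences make $F_n(s,t) \to F(s,t)$ pointwise, where $F$ is continuous in $(s,t)$. To transfer continuity to the individual values $E_n(t)$, I would average over the auxiliary time: integrating the identity $E_n(t) = E_n(s) + F_n(s,t)$ in $s$ over a fixed interval $[a,b]$ gives
\[
E_n(t) \;=\; \frac{1}{b-a}\int_a^b E_n(s)\, ds \;+\; \frac{1}{b-a}\int_a^b F_n(s, t)\, ds.
\]
The first term converges by the $L^1_{\rm loc}$ convergence $\langle u_n, A u_n\rangle \to \langle u, A u\rangle$, and the second by the convergence of $F_n$. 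Hence $E_n(t) \to \tilde E(t)$ for every $t$, and $\tilde E(t) - \tilde E(t_0) = F(t_0, t)$ shows $\tilde E$ is continuous in $t$. Extracting a subsequence along which $E_n \to E$ almost everywhere identifies $\tilde E$ with $E$ off a null set, proving part~1.

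For part~2, the causal hypothesis on $f$ forces $u$ to be causal with $\mathrm{supp}\,u \subset [T_0,\infty)$ (by uniqueness, Theorem~\ref{existence}), so $u_n$, $f_n$, and $R[u_n]$ all vanish for $t < T_0 - 1/n$; integrating \eqref{en-id} from $-\infty$ to $t \le T$ leaves no boundary term. Estimating the right-hand side by Cauchy--Schwarz together with $C_* \|u_n\|^2 \le 2 E_n$, and using Young's convolution inequality $\|R[u_n]\|_{L^2(-\infty,t)} \le \|Q\|_{L^1(\bR)} \|u_n\|_{L^2(-\infty,t)}$ to handle the memory term, produces
\[
E_n(t) \;\le\; C \int_{-\infty}^t E_n(\tau)\, d\tau \;+\; \tfrac{1}{2}\int_{-\infty}^t \|f_n(\tau)\|^2\, d\tau,
\]
with $C$ depending only on $C_*$, $C^*$, $\|B\|_{{\cal B}(H)}$, $\|Q\|_{L^1(\bR,{\cal B}(H))}$. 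Gronwall's lemma then supplies $E_n(t) \le C_{T_0, T} \int_{-\infty}^t \|f_n\|^2$ for $t \le T$; passing to the limit using the continuous representative $\tilde E = E$ from part~1 and $\|f_n\|_{L^2(-\infty,t)} \to \|f\|_{L^2(-\infty,t)}$ yields the estimate \eqref{energyest}.

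The only genuinely delicate step is the averaging argument in the second paragraph: the $L^1_{\rm loc}$ convergence $E_n \to E$ gives no information about the individual values $E_n(t)$, yet pointwise convergence at every $t$ is needed to identify a continuous representative. Once this trick is in hand, both parts reduce to mollification bookkeeping and a standard Gronwall estimate applied to the smooth identity \eqref{en-id}.
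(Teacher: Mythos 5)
The one step in your argument that fails is the opening of part 2: ``the causal hypothesis on $f$ forces $u$ to be causal \ldots (by uniqueness, Theorem~\ref{existence}).'' First, this is circular in the paper's logical order: the uniqueness assertion of Theorem~\ref{existence} is itself proved \emph{from} the energy estimate \eqref{energyest} (via Corollary~\ref{unique1}), so it cannot be invoked inside the proof of Proposition~\ref{energy}. Second, and more seriously, the claim is false: uniqueness holds only within the class of causal weak solutions, and a weak solution in the sense of \eqref{weakde} with causal right-hand side need not be causal. Concretely, take $H=V=\bR$, $A=1$, $P=0$, $B=-1$, $Q=0$, $f=0$; then $u(t)=e^{t}$ belongs to $\lph$, is a (even strong) solution of $u'-u=0$, and has $E(t)=\tfrac12 e^{2t}>0$, so \eqref{energyest} fails for it. Causality of $u$ is therefore a genuine additional hypothesis, not a consequence of causality of $f$; the paper's own proof assumes it tacitly (``we have assumed $u$ to be causal'') and uses it exactly where you do, to integrate from $-\infty$ with no boundary term. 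Your Gronwall argument is fine once you simply add the hypothesis that $u$ is causal (which is how the proposition is applied later, e.g.\ to differences of causal solutions), but as written the derivation of that fact is a gap.

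Apart from this point, the proposal is sound and follows the same basic mechanics as the paper: mollify in time, invoke Proposition~\ref{spacesmooth} to get a smooth strong solution, differentiate $E_n$, kill $\langle Pu_n,u_n\rangle$ by skew-adjointness, then estimate and apply Gronwall. The genuine difference is in how you produce the continuous representative of $E$: the paper proves that $\{E_n\}$ is bounded and equicontinuous on compact intervals and invokes Ascoli's theorem, whereas you average the identity $E_n(t)=E_n(s)+F_n(s,t)$ over $s\in[a,b]$ to obtain pointwise convergence $E_n(t)\to\tilde E(t)$ for every $t$, and read continuity off $\tilde E(t)-\tilde E(t_0)=F(t_0,t)$. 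This averaging device is a legitimate and somewhat leaner alternative; it replaces the compactness argument by the $L^1_{\rm loc}$ convergence $E_n\to E$ plus the uniform-in-$s$ convergence of $F_n$ on $[a,b]$ (which follows from $L^1$ convergence of the integrand), and nothing essential is lost. Two small remarks: in passing to the limit in the Gronwall bound you only need $\limsup_n\int_{-\infty}^t\|f_n\|^2\le\int_{-\infty}^{t}\|f\|^2$, which Young's inequality gives after enlarging the window to $t+1/n$ (exact convergence of the norms is not required); and your order of operations in part 2 (Gronwall at the mollified level, then limit) differs harmlessly from the paper's (limit in the difference inequality first, then $s\to-\infty$ and Gronwall).
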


\begin{proof}Let $\eta_n \in C_c^\infty(\bf{R})$ be an approximate identity, that is, $\eta_n(t) = n \eta(n t)$, where
\begin{equation}
\label{etahyp}
\eta \in C_0^\infty(\bR), \,\,
\eta \geq 0, \,\,
\int \eta (t) \, dt = 1, \,\, \mbox{supp }\eta \subset [-1, 1].
\end{equation}
Define
\begin{equation}\label{energysequence}
E_n(t) = \half \langle (\ee \ast u)(t), A (\ee \ast u)(t) \rangle.
\end{equation}
Since $\ee \ast u \rightarrow u$ in $\llh$, $E_n \rightarrow E$ in $L^1_{\rm loc}(\bR)$. Thanks to Proposition  \ref{spacesmooth}, $\ee \ast u \in C^{\infty}(\bR,V)$ is a strong solution of \eqref{formeq} for each $n\in \bN$.

For each $n$, $E_n$ is smooth; differentiating $E_n$, obtain for any $s, t \in \bR$
\begin{eqnarray}
\label{endiff}
E_n(t)-E_n(s)& = & \int_s^t \frac{d E_n}{ds}(\tau)\,d\tau \nonumber\\
& = & \int_s^t  \langle (\ee \ast u)(\tau), A (\ee \ast u)'(\tau) \rangle \,d\tau\nonumber \\
& = & \int_s^t \left[- \langle  (\ee \ast u)(\tau), (P(\ee \ast u)(\tau) + B(\ee \ast u)(\tau)+R[\ee \ast u]) (\tau) \rangle \right.\nonumber\\
& & + \left. \langle (\ee \ast f)(\tau), (\ee \ast u)(\tau) \rangle\right] \,d\tau \nonumber \\
& = & \int_s^t \left[- \langle  (\ee \ast u)(\tau), (B(\ee \ast u)(\tau)+R[\ee \ast u]) (\tau)\rangle \right.\nonumber\\
& & + \left. \langle (\ee \ast f)(\tau), (\ee \ast u)(\tau) \rangle\right] \,d\tau. 
\end{eqnarray}
thanks to the skew-symmetry of $P$.

Since convolution with $\eta$ commutes with the convolution operator $R$, and with the actions of the other operators appearing in (\ref{formeq}), the identity \eqref{endiff} implies that
\begin{eqnarray}
| E_n(t) - E_n(s) | & = & \left| \int_s^t \frac{d E_n}{d\tau}(\tau) \, d\tau \right| \nonumber \\
& \leq & \int_s^t \bigg[ | \langle \ee \ast (Bu(\cdot))(\tau), \ee \ast u(\tau) \rangle |  \nonumber \\
& & + | \langle (\ee \ast R[u]) (\tau), \ee \ast u (\tau) \rangle | \nonumber \\
& & + | \langle \ee \ast f(\tau), \ee \ast u(\tau) \rangle | \bigg] \, d\tau \nonumber \\
\label{ediff}
& \leq & (\|B \|+1) \left(\int_s^t \| (\ee \ast u)(\tau) \|^2 \, d\tau \right) \nonumber\\
& & + \int_s^t (\| (\ee \ast R[u]) (\tau)\|^2 + \| (\ee \ast f)(\tau) \|^2) \, d\tau
\end{eqnarray}
Since $u$, $R[u]$, and $f$ are locally square-integrable, for each $t \in \bR$ and $\epsilon>0$, there exist $\Delta t(t,\epsilon)>0$ and an $N(t,\epsilon) \in \bf{N}$ so that for  $|s-t| < \Delta t(t,\epsilon)$ and $n > N(t,\epsilon)$, 
\[
\int_{s-1/n}^{t+1/n}  \| u\|^2 < \epsilon,\,\,\int_{s-1/n}^{t+1/n}  \| R[u]\|^2 < \epsilon,\,\,\mbox{and}\,\,\int_{s-1/n}^{t+1/n}  \| f\|^2 < \epsilon,
\]
whence (\ref{ediff}) implies that for $n>N(t,\epsilon)$, $|s-t|<\Delta t(t,\epsilon)$,
\begin{equation}
\label{equi}
|E_n(t)-E_n(s)| < C\epsilon.
\end{equation}
Continuity of $E_n$ implies existence of $\overline{\Delta t}(t,\epsilon)>0$ so that for $|s-t|<\overline{\Delta t}(t,\epsilon)$ and $n \le N$, $|E_n(t)-E_n(s)| < C\epsilon$. Thus the inequality (\ref{equi}) holds for all $n\in \bf{N}$ if $s$ satisfies $|s-t| < \min(\Delta t(t,\epsilon), \overline{\Delta t}(t,\epsilon))$. 
Since $t \in \bR, \epsilon>0$ are arbitrary, we have shown that the sequence $\{E_n\} \subset C^0(\bR)$ is equicontinuous.

Choose $T_0 \le T \in \bR$: it follows from the definition \eqref{energysequence} and Young's inequality that 
\begin{equation}
\label{indep}
\int_{T_0}^{T} E_n \le C \int_{T_0-1}^{T+1} \|u\|^2
\end{equation}
is bounded independently of $n$. For $t \in [T_0,T]$,
\begin{equation}
\label{stupid}
(T-T_0)E_n(t) = \int_{T_0}^T (E_n(t)-E_n(s))\, ds + \int_{T_0}^TE_n(s)\, ds
\end{equation}
Apply Young's inequality to the convolutions with $\ee$ appearing in the inequality (\ref{ediff}) to conclude that for $T_0 \le s,t \le T$, 
\begin{equation}
\label{ediffbig}
|E_n(t)-E_n(s)| \le C \int_{T_0-1}^{T+1} (\|u\|^2 + \|R[u]\|^2 + \|f\|^2).
\end{equation}
Taken together, (\ref{indep}), (\ref{stupid}) and (\ref{ediffbig})  imply that $\{E_n\}$ is a bounded subset of $C^0([T_0,T])$. According to Ascoli's theorem, $\{E_n\}$ is precompact in $C^0([T_0,T])$, hence a subsequence converges uniformly to a continuous limit. Since the subsequence is necessarily also $L^1$-convergent, and $T\in \bR$ is arbitrary, the first assertion of the theorem is established.

In view of the continuity of $E$, we may take the limit $n \rightarrow \infty$ on both sides of the inequality (\ref{ediff}) along the uniformly convergent subsequence whose existence we have just established. Since $\ee \ast u \rightarrow u$ in $L^2([T_0,T],H)$, the right hand side converges, and we obtain
\begin{equation}
\label{ediffprec}
|E(t)-E(s)| \le C \int_{s}^{t} (\|u\|^2 + \|R[u]\|^2 + \|f\|^2).
\end{equation}
We have assumed $u$ to be causal, but this assumption has not appeared in the reasoning up to now. It allows us to take $s \rightarrow -\infty$ in \eqref{ediffprec}. In view of the equivalence of $\sqrt{E}$ and the norm $\| \cdot \|$ (inequalities (\ref{enorm})), the inequality (\ref{ediffprec}) implies that
\[
E(t) \le C \int_{-\infty}^t (E + \|f\|^2).
\]

Gronwall's inequality then yields the second conclusion.
\end{proof}

\begin{corollary} The energy $E$ of a weak solution $u$ of \eqref{formeq}, as defined above, satisfies for any $s,t \in \bR$
\begin{equation}\label{eident}
E(t) - E(s) = \int_s^t \langle -Bu(\tau)-R[u](\tau)+f(\tau),u(\tau)\rangle \,d\tau.
\end{equation}
\end{corollary}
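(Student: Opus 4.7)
The plan is to pass to the limit $n \to \infty$ in identity \eqref{endiff} from the proof of Proposition \ref{energy}, which holds verbatim for every $s, t \in \bR$ and every $n \in \bN$. Specifically, with $\ee$ the approximate identity of \eqref{etahyp} and $E_n$ as in \eqref{energysequence}, \eqref{endiff} reads
\[
E_n(t) - E_n(s) = \int_s^t \Big[-\langle (\ee \ast u)(\tau), B(\ee \ast u)(\tau) + R[\ee \ast u](\tau)\rangle + \langle (\ee \ast f)(\tau), (\ee \ast u)(\tau)\rangle\Big]\,d\tau.
\]

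For the left-hand side, the proof of Proposition \ref{energy} produced (via Ascoli and a diagonal argument over an exhaustion of $\bR$ by compact intervals) a subsequence $\{E_{n_k}\}$ that converges uniformly on every compact subset of $\bR$ to the continuous modification of $E$; in particular $E_{n_k}(t) - E_{n_k}(s) \to E(t) - E(s)$ for every $s, t \in \bR$ along this subsequence.

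For the right-hand side, the standard $L^2_{\rm loc}$ mollification theorem gives $\ee \ast u \to u$ and $\ee \ast f \to f$ in $L^2_{\rm loc}(\bR, H)$, while the fact that $R$ commutes with time convolution yields $R[\ee \ast u] = \ee \ast R[u] \to R[u]$ in $L^2_{\rm loc}(\bR, H)$ as well; since $B \in {\cal B}(H)$, also $B(\ee \ast u) \to Bu$ in $L^2_{\rm loc}(\bR, H)$. Applying Cauchy--Schwarz fiberwise in $\tau$ shows that the integrand in the display above converges in $L^1([s, t])$ to $\langle -Bu(\tau) - R[u](\tau) + f(\tau), u(\tau)\rangle$ for any fixed $s \le t$, so the entire right-hand side converges (along the full sequence, hence along $\{n_k\}$) to the right-hand side of \eqref{eident}.

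Taking the limit along $\{n_k\}$ on both sides yields \eqref{eident} for every $s, t \in \bR$. The argument is essentially bookkeeping: the algebraic identity is already present in \eqref{endiff}, and every term has the regularity required to pass the mollification to the limit, so there is no substantive obstacle beyond recording the relevant convergence modes.
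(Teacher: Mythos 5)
Your proof is correct and follows essentially the same route as the paper: the paper's own argument is precisely to pass to the limit in \eqref{endiff}, using the continuity of $E$ (via the uniformly convergent subsequence of $E_n$) on the left and the $L^2_{\rm loc}$ convergence of $\eta_n \ast u$, $\eta_n \ast f$, and $\eta_n \ast R[u]$ on the right. Your write-up just records the bookkeeping more explicitly.
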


\begin{proof} Continuity of $E$ and convergence of $\ee \ast u$ to $u$ in $\llh$ allows us to take limits on both sides of \eqref{endiff}.
\end{proof} 

\begin{corollary}\label{unique1} Suppose that $u_1,u_2 \in \llh$ are {\em causal} weak solutions of (\ref{formeq}). Then $u_1=u_2$.  
\end{corollary}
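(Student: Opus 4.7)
The plan is to exploit linearity of the formal evolution problem \eqref{formeq} together with the energy bound supplied by Proposition \ref{energy}. Set $u = u_1 - u_2$. Because \eqref{formeq} is linear in $(u,f)$ and the defining identity \eqref{weakde} for a weak solution is linear in $u$ (with $f$ on the right-hand side), $u$ is a weak solution of \eqref{formeq} with right-hand side $f=0$. Since both $u_1$ and $u_2$ are causal by hypothesis, so is their difference $u$; in particular $u \in \lph$.

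Now apply Proposition \ref{energy} to $u$ with $f=0$ (the causality hypothesis on the right-hand side is vacuous, and one may take $T_0$ to be any real number). The first conclusion gives that the energy $E(t) = \tfrac{1}{2}\langle u(t), A u(t) \rangle$ is continuous in $t$ after modification on a null set, and the second conclusion yields
\[
E(t) \le C_{T_0,T} \int_{-\infty}^t \|f(s)\|^2 \, ds = 0
\]
for every $t \le T$ and every $T \ge T_0$, hence for every $t \in \bR$. Combined with the lower bound \eqref{enorm}, which gives $E(t) \ge C_* \|u(t)\|^2 \ge 0$, this forces $\|u(t)\| = 0$ for almost every $t$, i.e.\ $u_1 = u_2$ in $\llh$.

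There is no serious obstacle here: the only point requiring care is the observation that the difference of two causal weak solutions remains a causal weak solution, so that Proposition \ref{energy} applies directly. All of the technical work — the smoothing argument that produces a strong solution, the differentiation of the energy identity, and Gronwall's inequality — has already been carried out in the proof of Proposition \ref{energy}.
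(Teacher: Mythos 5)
Your proof is correct and follows essentially the same route as the paper: the paper's proof likewise applies the energy inequality \eqref{energyest} to the difference $u = u_1 - u_2$, which is a causal weak solution with $f \equiv 0$. The additional details you spell out (linearity, causality of the difference, and the lower bound \eqref{enorm}) are exactly what the paper leaves implicit.
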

\begin{proof} The conclusion follows immediately from the energy inequality (\ref{energyest}), applied to the difference $u=u_1-u_2$, which is a weak solution with $f\equiv 0$. 
\end{proof}

\begin{corollary}\label{cont} Suppose that $u \in \llh$ is a causal weak solution of (\ref{formeq}). Then $u \in C^0(\bR,H).$\end{corollary}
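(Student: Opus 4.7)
The plan is to regularize $u$ in time in order to produce a sequence of strong solutions, and then apply the energy estimate of Proposition \ref{energy} to differences of regularizations in order to obtain uniform convergence on compact time intervals. Let $\{\eta_n\}$ denote the approximate identity from the proof of Proposition \ref{energy}, with ${\rm supp}\,\eta_n \subset [-1/n,1/n]$, and set $u_n := \eta_n * u$. By Proposition \ref{spacesmooth}, $u_n \in C^{\infty}(\bR,V)$ is a strong solution of \eqref{formeq}; because time convolution with $\eta_n$ commutes with $A$, $B$, $P$, and with the memory operator $R$, the corresponding right-hand side is $\eta_n * f$.

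Next, by linearity the difference $w_{n,m} := u_n - u_m$ is a strong (and hence weak) solution of \eqref{formeq} driven by $(\eta_n - \eta_m) * f$. Since ${\rm supp}\,u \subset [T_0,\infty)$ for some $T_0$ and ${\rm supp}\,(\eta_n - \eta_m) \subset [-1,1]$, both $w_{n,m}$ and its driving term are causal with supports in $[T_0-1,\infty)$, so Proposition \ref{energy} applies with $T_0$ replaced by $T_0 - 1$. The energy estimate \eqref{energyest}, combined with the equivalence \eqref{enorm}, will then deliver, for each $T \in \bR$,
\[
\sup_{t \le T}\|w_{n,m}(t)\|^2 \le \frac{2\,C_{T_0-1,T}}{C_*}\int_{-\infty}^{T}\|(\eta_n-\eta_m) * f\|^2 \, ds,
\]
with a constant independent of $n$ and $m$. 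Because $f \in \llh$, the standard mollifier approximation gives $\|(\eta_n-\eta_m) * f\|_{L^2((-\infty,T],H)} \to 0$ as $n,m \to \infty$, so $\{u_n\}$ will be Cauchy in $C^0((-\infty,T],H)$.

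The uniform limit of $\{u_n\}$ is then some $\tilde u \in C^0((-\infty,T],H)$, and since $T$ is arbitrary, $\tilde u \in C^0(\bR,H)$. On the other hand, $u_n \to u$ in $\llh$ by the standard approximation property of mollifiers, so $\tilde u$ agrees with $u$ almost everywhere; this identifies $\tilde u$ as the desired continuous representative. The only step requiring care is the causality bookkeeping in the second paragraph---verifying that the shifted supports permit application of Proposition \ref{energy} to $w_{n,m}$---but this is immediate from the causal structure of $R$ and the support of $\eta_n - \eta_m$. Everything else is a direct combination of Propositions \ref{spacesmooth} and \ref{energy} with elementary mollifier estimates.
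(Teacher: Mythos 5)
Your argument is correct, but it takes a different route from the paper. You mollify in time: $u_n=\eta_n\ast u$ is a strong solution with right-hand side $\eta_n\ast f$ (Proposition \ref{spacesmooth}), you apply the energy estimate of Proposition \ref{energy} to the differences $u_n-u_m$ (legitimate, since these are causal weak solutions with causal right-hand side $(\eta_n-\eta_m)\ast f$ supported in $[T_0-1,\infty)$, and the constant depends only on the support bound and the fixed operator bounds, hence not on $n,m$), conclude that $\{u_n\}$ is Cauchy in $C^0((-\infty,T],H)$, and identify the uniform limit with $u$ a.e. The paper instead exploits time-translation invariance of the autonomous system: $u_{\delta t}=u(\cdot+\delta t)$ is the causal weak solution with right-hand side $f_{\delta t}$, so $u_{\delta t}-u$ is a causal weak solution with right-hand side $f_{\delta t}-f$, and strong continuity of the translation group on $L^2$ together with the same energy estimate gives $\|u_{\delta t}(t)-u(t)\|\rightarrow 0$ directly. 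The paper's proof is shorter and is the template reused in Corollaries \ref{equicont} and \ref{ck} (translations also give differentiability when $f\in H^1$, which your mollification scheme does not deliver as immediately); your proof has the merit of producing the continuous representative explicitly, thereby handling cleanly the fact that $u$ is only defined almost everywhere, and of showing local uniform approximation of $u$ by the smooth solutions $\eta_n\ast u$. One small point to make explicit: the convergence $(\eta_n-\eta_m)\ast f\rightarrow 0$ in $L^2((-\infty,T],H)$ needs $f\in L^2((-\infty,T+1],H)$, not merely $f\in\llh$; this holds because causality of $u$ and of $Q$ forces $f$ to be causal, as noted in the paper just before Theorem \ref{existence}, so you should cite that fact rather than only $f\in\llh$.
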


\begin{proof} For $\delta t \in \bR$, denote by $u_{\delta t}$ the member of $\llh$ defined by $u_{\delta t}(t)=u(t+\delta t)$. Then $u_{\delta t}$ is a causal weak solution (the only one, thanks to Corollary \ref{unique1}) of (\ref{formeq}) with $f$ replaced by $f_{\delta t} \in \lh$, defined by $f_{\delta t}(t)=f(t+\delta t)$. The translation group acts strongly continuously on $L^2$, i.e. $\|f_{\delta t}-f\|_{\lh} \rightarrow 0$ as $\delta t \rightarrow 0$. Since the difference $u_{\delta t}-u$ is a causal solution of (\ref{formeq}) with right-hand side $f_{\delta t}-f$, it follows immediately from (\ref{energyest}) that $\|u_{\delta t}(t)-u(t)\|_H \rightarrow 0$ as $\delta t \rightarrow 0$ for any $t \in \bR$, that is, $u \in C^0(\bR)$.
\end{proof}

\begin{corollary}\label{equicont} Suppose that 
\begin{itemize}
\item[1. ] ${\cal K} \subset {\cal B}(H)$ is a bounded set;
\item[2. ] ${\cal L} \subset {\cal B}(V,H)$ is a bounded set of skew-adjoint operators on $H$ with (common) domain $V$, whose graph norms are all equivalent (to each other and to the norm in $V$);
\item[3. ] ${\cal M} \subset {\cal B}(H)$ is a bounded set of self-adjoint, uniformly positive definite operators: there exist constants $0 < C_* \le C^*$ so that for all $A \in {\cal M}$, 
\[
C_*I \le A \le C^*I;
\] 
\item[4. ] ${\cal Q} \subset L^1(\bR,{\cal B}_{\rm symm}(H)) \bigcap C^0(\bR_+,{\cal B}(H))$ is a bounded set of causal operator-valued functions: if $Q \in {\cal Q}$, then $Q(t)=0$ for $t<0$.
\end{itemize}
Let the set ${\cal P} \subset {\cal M} \times {\cal L} \times {\cal K} \times {\cal Q}$ parametrize a family of formal evolution problems of for \eqref{formeq}, with coefficients $A \in {\cal M}, P \in {\cal L}, B \in {\cal K},$ and $Q \in {\cal Q}$, with common right-hand side $f \in \lh$, and let ${\cal U} \subset \llh$ be a corresponding family of causal weak solutions. Then ${\cal U} \subset C^0(\bR,H)$ is equicontinuous.
\end{corollary}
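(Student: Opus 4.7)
The plan is to mimic the argument of Corollary \ref{cont}, carefully tracking that the constant in the energy estimate \eqref{energyest} depends only on $C_*, C^*, \|B\|_{{\cal B}(H)}$, and $\|Q\|_{L^1(\bR, {\cal B}(H))}$, all of which are uniformly bounded across the family ${\cal P}$ by the hypotheses on ${\cal M}$, ${\cal K}$, and ${\cal Q}$.

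Fix any $T \in \bR$. Since the elements of ${\cal U}$ are causal and the kernels in ${\cal Q}$ are causal, the common right-hand side $f$ must vanish on $(-\infty,T_0)$ for some $T_0 \in \bR$, as noted in the discussion preceding Theorem \ref{existence}. For an arbitrary $u \in {\cal U}$ associated with coefficient data $(A,P,B,Q) \in {\cal P}$, and for any $\delta t \in \bR$, set $u_{\delta t}(t) = u(t+\delta t)$ and $f_{\delta t}(t) = f(t+\delta t)$. The first technical step is to verify that $u_{\delta t}$ is the (unique causal) weak solution of \eqref{formeq} with the same coefficient data but right-hand side $f_{\delta t}$. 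This follows by substituting the shifted test function $\phi(\cdot - \delta t)$ into \eqref{weakde} and observing that $A$, $B$, and $P$ trivially commute with time translation, while $R^*$ commutes because $R$ is convolution with the time-homogeneous kernel $Q(t-s)$. This is the only step requiring any thought, and it is routine.

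Consequently, $u_{\delta t} - u$ is a causal weak solution of \eqref{formeq} with the same coefficients and right-hand side $f_{\delta t} - f$, supported in $[T_0 - |\delta t|, \infty)$. Applying the energy estimate \eqref{energyest} of Proposition \ref{energy} together with the equivalence \eqref{enorm}, for $t \le T$ and $|\delta t| \le 1$,
\[
\|u(t+\delta t) - u(t)\|_H^2 \le C \int_{-\infty}^{t} \|f_{\delta t}(\tau) - f(\tau)\|^2 \, d\tau \le C \|f_{\delta t} - f\|_{L^2(\bR,H)}^2,
\]
where $C$ depends only on $T_0 - 1$, $T$, and the uniform bounds associated with ${\cal M}$, ${\cal K}$, ${\cal Q}$, and so is independent of the particular member of the family to which $u$ corresponds.

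Finally, strong continuity of translation on $L^2(\bR,H)$ gives $\|f_{\delta t} - f\|_{L^2(\bR,H)} \to 0$ as $\delta t \to 0$. Since the right-hand side of the displayed bound is independent of $u \in {\cal U}$ and of $t \in (-\infty, T]$, this yields equicontinuity uniformly over ${\cal U}$ and uniformly in $t \le T$. As $T \in \bR$ is arbitrary, we obtain the equicontinuity of ${\cal U} \subset C^0(\bR,H)$ on all of $\bR$. There is no real obstacle here: the only content beyond Corollary \ref{cont} is the observation that the constant in the energy inequality is controlled by the structural bounds defining the family, not by the specific coefficient operators.
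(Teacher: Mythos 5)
Your proof is correct and follows essentially the same route as the paper's: both apply the energy estimate to the causal weak solution $u_{\delta t}-u$ with right-hand side $f_{\delta t}-f$, observing that the constant depends only on $C_*, C^*$, the bound on $\|B\|_{{\cal B}(H)}$, and the bound on $\|Q\|_{L^1(\bR,{\cal B}(H))}$, hence is uniform over ${\cal P}$, so the modulus of continuity is controlled by $f$ alone. The only cosmetic difference is that the paper bounds the increment by $C\int_t^{t+\delta t}\|f\|^2$ while you invoke strong continuity of translation on $L^2(\bR,H)$; both yield the required uniform modulus.
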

\begin{proof} That ${\cal U} \subset C^0(\bR,H)$ is the content of the last Corollary. It follows from the proof of the basic energy estimate \eqref{energyest} that the constant $C$ appearing in its right-hand side may be chosen uniform over ${\cal P}$ - indeed, the bounds defining the sets listed in conditions 1-4 above are precisely those on which our constants, canonically notated $C$, depend. Therefore \eqref{energyest} implies that for $u \in {\cal U}$,
\[
\|u(t+\delta t)-u(t)\|^2 \le \frac{1}{C_*}E_{u_{\delta t}-u}(t) \le C \int_{-\infty}^t \|f_{\delta t}-f\|^2 = C\int_t^{t+\delta t}\|f\|^2
\]
from which a uniform modulus of continuity follows. 
\end{proof}

Additional regularity in time of the right-hand side $f$ translates into additional regularity of the solution.

\begin{corollary}\label{ck} Suppose that $f \in L^2_{T_0}(\bR,H) \cap H^k_{\rm loc}(\bR,H)$, $k \in \bN$, and $u\in \llh$ is a weak solution of (\ref{formeq}). Then $u \in C^k(\bR,H)$; moreover, for $j=1,...k$, $u^{(j)}$ is the weak solution of \eqref{weakde} with right-hand side $f^{(j)}$.\end{corollary}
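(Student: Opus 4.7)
\textbf{Proof plan for Corollary \ref{ck}.} I will proceed by induction on $k$. The base case $k=0$ is exactly Corollary \ref{cont}. For the inductive step, assume the result holds for $k-1$ and suppose $f \in L^2_{T_0}(\bR,H) \cap H^k_{\rm loc}(\bR,H)$. The inductive hypothesis yields $u \in C^{k-1}(\bR,H)$ together with the identification of $\tilde u := u^{(k-1)}$ as the causal weak solution of \eqref{formeq} with right-hand side $\tilde f := f^{(k-1)}$. Since $f$ and hence $f^{(k-1)}$ vanish for $t<T_0$, $\tilde f$ is causal with ${\rm supp}\,\tilde f \subset [T_0,\infty)$, belongs to $\llh$, and lies in $H^1_{\rm loc}(\bR,H)$. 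The goal reduces to showing that $\tilde u \in C^1(\bR,H)$ and that $\tilde u'$ is the weak solution with right-hand side $\tilde f' = f^{(k)}$.

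The key device is time translation. For $\delta t \in \bR$, denote $u_{\delta t}(t)=u(t+\delta t)$ as in the proof of Corollary \ref{cont}. Changing variables in \eqref{weakde} shows that $\tilde u_{\delta t}$ is the unique causal weak solution corresponding to the translated right-hand side $\tilde f_{\delta t}$, so by linearity the difference quotient
\[
v_{\delta t}(t) = \frac{\tilde u(t+\delta t)-\tilde u(t)}{\delta t}
\]
is the causal weak solution of \eqref{formeq} with right-hand side $(\tilde f_{\delta t}-\tilde f)/\delta t$. Let $w$ be the causal weak solution with right-hand side $f^{(k)} \in \llh$, which exists and is unique by Theorem \ref{existence}; then $v_{\delta t}-w$ is a causal weak solution with right-hand side $(\tilde f_{\delta t}-\tilde f)/\delta t - \tilde f'$.

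Now I apply the basic energy inequality \eqref{energyest} to $v_{\delta t}-w$: for any $T \ge T_0$ and $t \le T$,
\[
\|v_{\delta t}(t)-w(t)\|^2 \le C_{T_0,T}\int_{-\infty}^{t} \left\| \frac{\tilde f_{\delta t}-\tilde f}{\delta t} - \tilde f'\right\|^2.
\]
Since $\tilde f$ vanishes for $t<T_0$ and lies in $H^1([T_0-1,T+1],H)$ for every fixed $T$, the usual density argument (smooth approximations in $H^1$ of the interval, combined with the Lebesgue differentiation theorem for vector-valued absolutely continuous functions) shows that the right-hand side tends to $0$ as $\delta t \to 0$. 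Consequently $v_{\delta t}(t) \to w(t)$ for every $t$, proving that $\tilde u$ is differentiable at every point with derivative $w(t)$. Because $w \in C^0(\bR,H)$ by Corollary \ref{cont}, the function $\tilde u$ is of class $C^1$, hence $u \in C^k(\bR,H)$, and $u^{(k)} = w$ is the weak solution with right-hand side $f^{(k)}$. This closes the induction.

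The main technical point — and the only place where care is required — is the $L^2$ convergence of the difference quotient $(\tilde f_{\delta t}-\tilde f)/\delta t$ to $\tilde f'$ on $(-\infty,T]$. The causality of $\tilde f$ is essential here: it reduces the apparently global $L^2$ convergence to convergence on the bounded interval $[T_0-1,T+1]$, where $H^1$ regularity is available by hypothesis and the standard Hilbert-space-valued version of the fact that difference quotients of $H^1$ functions converge to the derivative in $L^2$ applies.
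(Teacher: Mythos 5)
Your proposal is correct and follows essentially the same route as the paper: induction on $k$ with base case Corollary \ref{cont}, using the translation trick to realize difference quotients of the solution as causal weak solutions with right-hand side the difference quotients of $f$, the $L^2$ (mean-square) convergence of those quotients to $f'$ on $(-\infty,T]$, and the energy estimate \eqref{energyest} to pass to the limit pointwise in $t$, identifying the limit as the weak solution with right-hand side $f^{(k)}$, whose continuity again comes from Corollary \ref{cont}. Your write-up merely makes explicit the comparison solution $w$ and the localization-by-causality step that the paper leaves implicit.
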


\begin{proof} Corollary \ref{cont} is the case $k=0$. 

The case $k=1$ provides the induction step, so we discuss it first. Since $f \in H^1_{\rm loc}(\bR,H)$, using the notation of the proof of Corollary \ref{cont},
\[
\frac{1}{\delta t}(f_{\delta t} - f) \rightarrow f' \mbox{ as } \delta \rightarrow 0
\]
in mean square. The energy estimate \eqref{energyest} implies that for each $t$,
\[
\frac{1}{\delta t}(u_{\delta t}(t)-u(t)) 
\]
has a limit as $\delta t \rightarrow 0$, whence $u$ is differentiable and $u'$ satisfies \eqref{weakde} with right-hand-side $f'$. The previous Corollary shows that $u'$ is continuous.

This argument also serves as the induction step to establish the assertion of the Corollary for $k>1$. \end{proof}

Smoothing the solution in time leads to a different sort of ``regularity'':

\begin{corollary}\label{pseudohyp} Suppose that $T_0 \in \bR, f \in L^2_{T_0}(\bR,H)$, $\Delta T > 0$, $\eta \in C^{\infty}_0((-\Delta T, \Delta T))$, and let $u$ denote a causal weak solution of  \eqref{weakde}. Then for every $T  \ge T_0$, there exists $C_{T_0,T,\Delta T} \ge 0$ depending on $T_0$,$T$, $\Delta T$, $\| \eta\|_{L^1(\bR)}$, $\|\eta'\|_{L^1(\bR)}$, $C_*$, $C^*$,  $\|B\|_{{\cal B}(H)}$, and $\|Q\|_{L^1(\bR,{\cal B}(H))}$, so that for any $t \le T$,
\begin{equation}
\label{eqn:pseudohyp}
\|(\eta \ast u)(t)\|^2_V \le C_{T_0,T,\Delta T}\int_{-\infty}^{T+\Delta T} \,ds\,\|f(s)\|^2.
\end{equation}
\end{corollary}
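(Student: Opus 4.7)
The plan is to exploit Proposition \ref{spacesmooth}: since $\eta \ast u \in C^\infty(\bR,V)$ is a strong solution of \eqref{formeq}, and since convolution with $\eta$ commutes with differentiation (giving $(\eta \ast u)' = \eta' \ast u$) and with the operators $A$, $B$, and $R$, the pointwise equation can be rearranged as
\[
P(\eta \ast u)(t) = (\eta \ast f)(t) - A(\eta' \ast u)(t) - B(\eta \ast u)(t) - R[\eta \ast u](t).
\]
Because the graph norm of $P$ is equivalent to $\|\cdot\|_V$, one has $\|(\eta \ast u)(t)\|_V^2 \le C\bigl(\|(\eta \ast u)(t)\|^2 + \|P(\eta \ast u)(t)\|^2\bigr)$, so it suffices to bound the $H$-norm of each of the four terms on the right of the displayed identity (together with $\|(\eta \ast u)(t)\|$ itself, which is the same kind of term).

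All of the $u$-dependent terms are handled by a pointwise Young inequality: for any $\phi \in L^1(\bR)$ supported in $(-\Delta T,\Delta T)$, $\|(\phi \ast u)(t)\| \le \|\phi\|_{L^1}\sup_{|s-t|\le \Delta T}\|u(s)\|$. For $t \le T$ the window lies in $(-\infty, T+\Delta T]$, and the energy estimate \eqref{energyest} gives $\sup_{s \le T+\Delta T}\|u(s)\|^2 \le C_{T_0,T+\Delta T}\int_{-\infty}^{T+\Delta T}\|f\|^2$. Applied with $\phi = \eta$ and $\phi = \eta'$ this treats the $(\eta \ast u)$ and $A(\eta' \ast u)$ terms, producing factors of $\|\eta\|_{L^1}$, $\|\eta'\|_{L^1}$, and (for the latter) $C^*$. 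For the memory term I would use that $R$ commutes with $\eta\ast$, so $R[\eta \ast u](t) = (\eta \ast R[u])(t)$, and then the pointwise Young bound $\|R[u](s)\| \le \|Q\|_{L^1(\bR,{\cal B}(H))}\sup_{\tau \le s}\|u(\tau)\|$ combined with the energy estimate again controls $\sup\|R[u]\|$ on the relevant window.

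The source term $(\eta \ast f)(t)$ is the only point requiring a different argument, and is the one place where one must be slightly careful about the allowed constant dependence. Cauchy--Schwarz yields $\|(\eta \ast f)(t)\|^2 \le \|\eta\|_{L^2}^2 \int_{t-\Delta T}^{t+\Delta T}\|f(s)\|^2\,ds \le \|\eta\|_{L^2}^2 \int_{-\infty}^{T+\Delta T}\|f\|^2$, and since $\eta$ is compactly supported the fundamental theorem of calculus gives $\|\eta\|_{L^\infty} \le \|\eta'\|_{L^1}$, whence $\|\eta\|_{L^2}^2 \le \|\eta\|_{L^1}\|\eta\|_{L^\infty} \le \|\eta\|_{L^1}\|\eta'\|_{L^1}$, precisely the form of dependence advertised in the statement. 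Summing the five contributions and absorbing the constants into a single $C_{T_0,T,\Delta T}$ yields \eqref{eqn:pseudohyp}. The ``hard part'' is really just this bookkeeping of constant dependence; no analytical input beyond Propositions \ref{spacesmooth} and \ref{energy} is needed.
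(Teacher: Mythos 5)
Your argument is correct and is essentially the paper's own proof: the paper bounds $\|P(\eta\ast u)(t)\|_H$ by rewriting the identity \eqref{foo-bar} as a duality pairing $\langle P(\eta\ast u)(t),w\rangle$ over $w\in H$ and then taking $w=P(\eta\ast u)(t)$, which is the same computation as your taking $H$-norms directly in the pointwise equation supplied by Proposition \ref{spacesmooth}, followed by the same Young-inequality plus energy-estimate bookkeeping. Your explicit handling of the $\eta\ast f$ term via $\|\eta\|_{L^2}^2\le\|\eta\|_{L^1}\|\eta'\|_{L^1}$ is a tidy way to match the advertised constant dependence, which the paper leaves implicit.
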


\noindent {\bf Remark:} Note that this inequality is a pointwise bound on $\eta \ast u$ in the norm of $V$. In applications, $V$ is compactly embedded in $H$, so this is potentially a much stronger statement than the obvious $H$-norm bound which follows directly from \eqref{energyest}.

\begin{proof}
The left hand side makes sense, of course, thanks to Proposition \ref{spacesmooth}. The identity\eqref{foo-bar} may be re-written as
\[
-\langle P(\eta \ast u)(t),w \rangle 
 =  \int \Big[ \langle u(s), -A \eta'(t-s)w + B^{\ast} \eta(t-s)w 
+R^{\ast}[\eta(t-\cdot)w](s) \rangle \Big] \, ds -  \langle \eta \ast f (t),w\rangle
\]
for any $w \in H$ (not just $V$!). Since ${\rm supp}(\eta) \subset [-\Delta T,\Delta T]$, for any $t \le T$,,
\begin{eqnarray}\label{foo-barmk2}
|\langle P(\eta \ast u)(t),w \rangle | & \le &C \|w\| \int_{-\infty}^{T+\Delta T}\,ds\, (\|u(s)\|+\|f(s)\|)\\
& \le &  C_{T_0,T,\Delta T} \|w\|\left(\int_{-\infty}^{T+\Delta T}\,ds\,\|f(s)\|^2 \right)^{\frac{1}{2}}
\end{eqnarray} 
thanks to the the energy estimate \eqref{energyest} and by-now familiar use of Young's inequality and the assumptions on the various operators and quantities in the problem formulation. Now choose $w = P(\eta \ast u)(t)$ to obtain a bound on its $H$-norm. In combination with \eqref{energyest} and Young's inequality, this estimate implies a bound of the required form on $(\eta \ast u)(t)$ in the graph norm of $P$, that is, in the norm of $V$.
\end{proof}

\section{Existence of Weak Solutions: Proofs of Theorems \ref{existence} and \ref{reghyp}}

The proof of existence follows the pattern laid out by \cite{Lions:71}, which in turn echos Cauchy's proof of the fundamental theorem of ordinary differential equations. We define a Galerkin method, show that it converges, and finally that the limit is a weak solution. Note that no rate of convergence follows from this argument; in fact it is easy to see that none can be expected. Of course, Proposition \ref{energy} has already assured that the solution so constructed is the only solution.

\begin{proof} of Theorem \ref{existence}:

In view of the energy estimate (Proposition \ref{energy}), at most one such solution exists, and any  sequence of weak solutions, corresponding to an $\lh$-convergent sequence of right hand sides $f$, must itself be $\lh$-convergent. Therefore it suffices to establish existence of solutions for a $\lh$-dense set of right hand sides. In particular, we may assume that $f \in C^0(\bR,H)$, without loss of generality.

Since $V$ is separable (with respect to the graph norm of $P$) and $V \subset H$ is dense, countable linearly independent subsets $\{ w_k \}_{k=1}^\infty \subset V$ exist for which finite linear combinations are dense in $V$, hence in $H$. Without loss of generality, assume that $\{ w_k \}_{k=1}^\infty$ is ($H$-) orthonormal: $\langle w_k,w_l\rangle=\delta_{kl},\,k,l \in {\bf N}$. 

Define  $m \times m$ matrices $A^m $ (symmetric positive definite), $P^m$ and $B^m$  by
\begin{eqnarray}
A^m_{kl} & = &\langle Aw_k,w_l \rangle,\\
P^m_{kl} & = &\langle Pw_k,w_l \rangle,\\
B^m_{kl} & = &\langle Bw_k,w_l \rangle,
\end{eqnarray}
for $1 \le k,l \le m$, and the operator $R^m$ on $L_{{\rm loc}}^2(\bR)^m$ defined analogously to (\ref{rdef}) by 
\[
R^m U^m(t) = \int_{-\infty}^t \langle Q^m(t-s) U^m(s) \,ds,\,\,Q^m_{kl}(t) \rangle \, ds = \langle Q(t)w_k,w_l\rangle,\,\,1 \le k,l \le m.
\]
Note that $Q^m \in L^1(\bR, {\cal B}_{\rm symm}(\bR^m)) \bigcap C^0(\bR_+,{\cal B}(\bR^m))$ is causal ($Q^m(t)=0, t<0$).

Define $F^m \in C^0(\bR)^m$ by 
\[
F^m_k(t)=\langle f(t),w_k\rangle,\,\,1 \le k \le m.
\]

A minor modification of a standard contraction mapping argument (see for example \cite{coddington-levinson}) shows that for each $m \in {\bf N}$, the initial value problem 
\begin{eqnarray}\label{fem}
A^m \frac{d U^m}{dt} + P^m U^m + B^mU^m + R^mU^m & = & F^m, \nonumber \\
U^m(t) & = & 0,\,\, t < T_0.
\end{eqnarray}
has a unique solution $U^m \in C^1(\bR,\bR^m)$.

For each $m \in {\bf N}$, define $u_m \in C^1(\bR,V), f_m \in C^0(\bR,H)$ by
\begin{equation*}
u_m(t) = \sum_{k=1}^m U^m_k(t) w_k, \,\, f_m(t)= \sum_{k=1}^m F^m_k(t)w_k.
\end{equation*}
Then the system \eqref{fem} satisfied by $U^m$, together with the $H$-orthonormality of $\{w_k\}$, implies that $u_m$ is the weak solution of the evolution equation (\ref{formeq}) with right-hand side $f_m$. The energy estimate (\ref{energyest}) shows that the sequence $u_m$ is bounded in $\llh$, hence by the Tychonoff-Alaoglu theorem and a diagonal process argument  weakly precompact in $\llh$. Denote by $u_{m(l)}$ a weakly convergent subsequence, and by $u$ its weak limit. Since $u_m(t) =0$ for $t<T_0$ and all $m \in {\bf N}$, the same is true for $u$.

To see that the limit $u$ is a weak solution of (\ref{formeq}), introduce for each $m_0 \in {\bf N}$ test functions $\psi$ of the form
\begin{equation}\label{psidef}
\psi = \sum_{k=1}^{m_0} \phi_k \otimes w_k, \hspace{.15in} \phi_k \in C_0^\infty (\bR).
\end{equation}
For $l$ sufficiently large that $m(l)>m_0$, $\langle f^m(t),\psi(t) \rangle = \langle f(t),\psi(t)\rangle$, $\langle P \uml, \psi \rangle = - \langle \uml, P\psi \rangle$, etc. So
\[
\int \langle u_{m(l)}, A \psi' + P \psi  - B^\ast \psi - R^{\ast}[\psi] \rangle \, dt =  - \int \langle f, \psi \rangle \, dt.
\]
Letting $l \to \infty$, it follows that $u$ satisfies \eqref{weakde} for all test functions $\psi$ of the form given in equation \eqref{psidef}. Since linear combinations of $w_m$'s are dense in $V$, the set of functions of the form \eqref{psidef} is dense in $C_0^1(\bR,V)$, whence $u$ is a weak solution of \eqref{formeq}.
\end{proof}

\begin{proof} of Theorem \ref{reghyp}:

We give the proof for $k=1$; the general case follows by a straightforward induction argument. 

According to 
Choose a Dirac sequence $\{\eta_n: n \in \bN\} \subset C^{\infty}_0(\bR)$ as in the proof of Proposition \ref{energy}. According to Corollary \ref{pseudohyp},  $\eta_n \ast u \in C^k(\bR,V)$  for each $n \in \bN$. According to Corollary \ref{ck}, $u \in C^1(\bR,H)$,
so the first term on the right-hand side of \eqref{foo-bar} may be integrated by parts to yield
\begin{eqnarray}
\label{foo-barmk3}
\langle (\eta_n \ast u),Pw \rangle &  = & \eta \ast \langle u, Pw \rangle\\
& = & \eta_n \ast \langle A u'  +B  u +R[u] -f, w\rangle
\end{eqnarray}
for any $w \in V$. Both sides are continuous in $t$, so the limit as $n\rightarrow \infty$ is  valid pointwise. Thus
\begin{eqnarray}
\label{wkstr}
\langle u(t),Pw \rangle &  = & \langle A u'(t)  +B  u(t) +R[u](t) -f(t), w\rangle
\end{eqnarray}
for any $w \in V$, $t\in \bR$. Pointwise bounds on $\|u(t)\|$ (Proposition \ref{energy}), $\|u'(t)\|$ (Corollary \ref{ck}) and the standing assumptions on the various operators imply that
\[
\langle u(t),Pw \rangle \le C_{T_0,T} \|w\| \left( \int_{-\infty}^T [\|f\|^2+\|f'\|]^2\right)^{\frac{1}{2}} 
\]
for all $w \in V$, $t\in (-\infty,T]$ (and any $T$, but the constant $C_{T_0,T}$ depends on $T_0$ and $T$), which shows that $u(t) \in {\cal D}(P^*) = {\cal D}(P) = V$ for each $t \in \bR$. Thus
\[
\langle Pu(t),w \rangle \le C_{T_0,T} \|w\| \left( \int_{-\infty}^T [\|f\|^2+\|f'\|]^2\right)^{\frac{1}{2}} 
\]
for all $w \in H$, $t \in (-\infty,T]$; taking $w = Pu(t)$, obtain the pointwise bound \eqref{eqn:reghyp} for $k=1$. That $u \in C^0(\bR,V)$ follows exactly as in the proof of Corollary \ref{cont}, via the strong continuity of the translation group on $H^1(\bR,H)$.
\end{proof}

\section{Continuous Dependence on Parameters: proofs of Theorems \ref{strong}, \ref{smooth}, and \ref{regfwd}}

The proof of Theorem \ref{strong} divides into three steps. In the first step (Lemma \ref{weak.conv}), we show that the assumptions of the theorem imply $L_{\rm loc}^2(\bR,H)$-weak convergence of the sequence of solutions. The second step leverages this result to show that the sequence converges $H$-weakly, pointwise (Lemma \ref{pw.weak.conv}). The last step combines these two observation with strong convergence of the coefficients to obtain convergence in norm of the solution sequence. 

\begin{lemma}\label{weak.conv}
Under the conditions of Theroem \ref{strong}, $u_m$ converges weakly to $u$ in $L_{\rm loc}^2(\bR,H)$.
\end{lemma}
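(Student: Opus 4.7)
I will prove weak $L^2_{\rm loc}$-convergence by the standard three-step procedure: extract a weakly convergent subsequence from a uniform energy bound, identify its limit as the unique weak solution of the limiting problem by passing to the limit in the test-function identity, and then promote the result to the full sequence by uniqueness. For the first step, because $(A_m,B_m,Q_m)\in {\cal P}(C_*,C^*,C_B,C_Q)$ uniformly in $m$, the constant $C_{T_0,T}$ in the energy inequality \eqref{eqn:bashyp} of Theorem \ref{existence} is $m$-independent, so $\{u_m\}$ is bounded in $L^2([T_0,T],H)$ for every $T\ge T_0$; Banach--Alaoglu plus a diagonal extraction across an exhausting sequence $T_n\uparrow\infty$ produces a subsequence $u_{m_k}\rightharpoonup u^*$ weakly in $L^2([T_0,T],H)$ for every $T$, and testing against $\phi\in C_0^\infty((-\infty,T_0),V)$ propagates causality to $u^*$, so that $u^*\in\lph$ and $u^*=0$ for $t<T_0$.

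For the second step I pass to the limit in
\[
\int \langle u_{m_k}, A_{m_k}\phi' + P\phi - B_{m_k}^*\phi - R_{m_k}^*[\phi]\rangle\, dt = -\int \langle f,\phi\rangle\, dt.
\]
The $P\phi$ term is $k$-independent and handled by weak convergence alone. For the $A_m\phi'$ term I split $A_{m_k}\phi' = A\phi' + (A_{m_k}-A)\phi'$: weak convergence of $u_{m_k}$ disposes of the first summand, and the second is controlled by Cauchy--Schwarz against the $L^2$-bounded sequence $u_{m_k}$, since $\|(A_{m_k}-A)\phi'\|_{L^2}\to 0$ by dominated convergence (pointwise strong convergence of $A_m$ applied to the $H$-continuous, compactly supported map $\phi'$, with dominating bound $2C^*\|\phi'(t)\|$). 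The $B_m^*\phi$ term is handled identically, provided the analogous pointwise convergence $(B_{m_k}^*-B^*)\phi(t)\to 0$ is available. For the $R_m^*$ term, the change of variable $s\mapsto -s$ gives the identity $R_m^*[\phi](t) = R_m[\phi^-](-t)$ with $\phi^-(s):=\phi(-s)$ compactly supported and hence in $\lph$; hypothesis~4 applied to $\phi^-$ then yields $L^2$-convergence of $R_{m_k}^*[\phi]$ to $R^*[\phi]$ on any bounded interval. Hence $u^*$ is a causal weak solution of \eqref{formeq} with limiting coefficients, and Corollary~\ref{unique1} forces $u^*=u$. Since every subsequence of $\{u_m\}$ admits a further subsequence converging weakly to this common limit $u$, the full sequence converges weakly to $u$ in $L^2_{\rm loc}(\bR,H)$.

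The principal subtlety is the $B_m^*\phi$ term: strong operator convergence $B_m\to B$ does not, in general, imply strong convergence of the adjoints, so one needs either to supplement hypothesis~2 with the corresponding convergence $B_m^*w\to B^*w$ for $w\in H$, or to verify this directly from extra structure of $B$. In the concrete applications driving the theorem (notably the multiplication operators in Theorem \ref{symmhyp:strong}, where convergence in measure of the matrix field entails strong $L^2$-convergence of both the operator and its transpose), this adjoint convergence is automatic, so the step causes no trouble in practice; it is simply the place where the underlying self-adjoint or measure-theoretic structure of the intended examples enters the abstract argument.
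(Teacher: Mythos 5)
Your proof is correct and takes essentially the same route as the paper's: a uniform energy bound over ${\cal P}(C_*,C^*,C_B,C_Q)$ giving a weakly convergent subsequence, passage to the limit in the test-function identity by splitting off the coefficient differences, identification of the limit via uniqueness of causal weak solutions, and a subsequence argument for the full sequence; the only technical difference is that you control terms like $(A_m-A)\phi'$ by dominated convergence while the paper approximates $\phi,\phi'$ by simple $V$-valued functions. Your caveat about $B_m^*$ is apt --- the paper's argument likewise tacitly needs strong convergence of the adjoints (automatic for the self-adjoint $A_m$, for $R_m^*$ via the symmetry of $Q_m$, and for the multiplication operators in Theorem \ref{symmhyp:strong}), so hypothesis 2 is implicitly read as including $B_m^* \to B^*$ strongly.
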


\noindent {\bf Remark.} In order that $R_m$ converge to $R$ pointwise, as assumed in the statement of the preceding theorem, it is sufficient that $Q_m \rightarrow Q$ uniformly in $\bR_+$.

\begin{proof}


The bounds implied by the energy estimate (Proposition \ref{energy}) are uniform over bounded sets of coefficients as described in the statement of the theorem. Therefore $\{u_m\}$ is bounded in $\llh$, hence has an $\llh$-weakly convergent subsequence $\uml$, with limit $\bar{u} \in \llh$. Note that $\llh$-weak convergence implies convergence in the sense of $H$-valued distributions on $\bR$. Choose a test function $\phi \in C_0^{\infty}(\bR,V)$: then 
\begin{eqnarray}\label{foo-foo}
-\int \langle f(s),\phi(s) \rangle \, ds 
& = & \int  \langle \uml(s), (A_{m(l)} \phi' + P_{m(l)} \phi 
-B_{m(l)}^\ast \phi - R_{m(l)}^\ast [\phi])(s) \rangle \, ds \\
& = &\int \langle \bar{u}(s), (A \phi' + P \phi 
-B^\ast \phi - R^\ast [\phi])(s) \rangle \, ds \\
& & + \int \langle (\uml(s)-\bar{u}(s)), (A \phi' + P \phi 
-B^\ast \phi - R^\ast [\phi])(s) \rangle \, ds \\
& & + \int \Big\langle \uml(s), ((A_{m(l)}-A) \phi' + (P_{m(l)}-P) \phi \\
& &-(B_{m(l)}-B)^\ast \phi - (R_{m(l)}-R)^\ast [\phi])(s) \Big\rangle \, ds
\end{eqnarray}
The second term  vanishes in the limit $l \rightarrow \infty$ because of the weak convergence of $\uml$ to $\bar{u}$.  
The coefficients $A_m, ...$ range over bounded sets of operators, so we may replace $\phi$ and $\phi'$ in the third term with simple $V$-valued functions, taking finitely many values, at the price of an arbitrarily small perturbation in this term, uniformly in $l$. However the strong convergence of the coefficient operators assumed in the statement of the theorem then implies that the resulting integrals become arbitrarily small as $l \rightarrow \infty$. Thus  $\bar{u}$ is a weak solution of the problem \eqref{formeq}, and must therefore be the same as the (unique) weak solution $u$ constructed in the preceding section. Thus no other weak accumulation point of the bounded sequence $\{u_m\}$ may exist, hence $u_m \rightharpoonup u$ in $\llh$ as claimed.
\end{proof}

\begin{lemma}\label{pw.weak.conv} Under the conditions of Theorem \ref{strong}, $u_m$ converges to $u$ weakly, pointwise in $t\in \bR$ and uniformly on compact sets. That is, $u_m(t) \rightharpoonup u(t)$ for all $t \in \bR$, and for any $w \in H$, $T_0 \le T \in \bR$,
\[
\lim_{m \rightarrow \infty} |\langle u_m-u,w\rangle|_{L^{\infty}([T_0,T])}=0. 
\]
\end{lemma}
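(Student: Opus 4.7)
The plan is to combine the weak $L^2_{\rm loc}(\bR,H)$-convergence already established in Lemma \ref{weak.conv} with the equicontinuity supplied by Corollary \ref{equicont}, and apply Arzel\`a--Ascoli to upgrade to uniform pointwise convergence. First I would fix $w \in H$ and set $g_m(t) = \langle u_m(t) - u(t), w \rangle$; by Corollary \ref{cont} each $g_m$ is continuous on $\bR$.

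Next I would verify the hypotheses of Arzel\`a--Ascoli on a fixed compact interval $[T_0,T]$. Because $\{(A_m,B_m,Q_m)\}$ and $(A,B,Q)$ all lie in $\mathcal{P}(C_*,C^*,C_B,C_Q)$, share the common skew-adjoint operator $P$, and drive their weak solutions from the common causal right-hand side $f$, Corollary \ref{equicont} applies to the family $\{u_m\}\cup\{u\}$ and exhibits it as an equicontinuous subset of $C^0(\bR,H)$ with a modulus independent of $m$. Hence $\{g_m\}$ is equicontinuous on $[T_0,T]$, and the energy estimate (Proposition \ref{energy}), whose constant is also uniform over $\mathcal{P}(C_*,C^*,C_B,C_Q)$, bounds $\|g_m\|_{L^\infty([T_0,T])}$ independently of $m$.

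The Arzel\`a--Ascoli theorem then makes $\{g_m\}$ precompact in $C^0([T_0,T])$. Lemma \ref{weak.conv} provides, for every $\phi \in C_0^\infty((T_0,T))$,
\[
\int \phi(t)\, g_m(t)\, dt = \int \phi(t)\, \langle u_m(t)-u(t), w\rangle\, dt \longrightarrow 0,
\]
so any subsequential uniform limit $g^\infty \in C^0([T_0,T])$ must vanish as a distribution, and hence vanish identically. Every subsequence of $\{g_m|_{[T_0,T]}\}$ therefore has a further subsequence converging uniformly to $0$, which forces the whole sequence to converge uniformly:
\[
\lim_{m \to \infty} \|\langle u_m - u, w\rangle\|_{L^\infty([T_0,T])} = 0.
\]
In particular, for each $t \in \bR$ and each $w \in H$, $\langle u_m(t), w\rangle \to \langle u(t), w\rangle$, i.e.\ $u_m(t) \rightharpoonup u(t)$ in $H$, with control uniform on compact $t$-sets as required.

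I do not foresee a serious obstacle, since all the heavy lifting was already done in the preparatory results; the only point requiring attention is that Corollary \ref{equicont} delivers a \emph{uniform} modulus of continuity, so that equicontinuity of $\{g_m\}$ holds globally in $m$. This is exactly what its statement provides, thanks to the uniformity of the constant in the energy estimate over the admissible class $\mathcal{P}(C_*,C^*,C_B,C_Q)$.
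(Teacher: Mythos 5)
Your argument is correct, and it rests on exactly the same two pillars as the paper's proof -- the $L^2_{\rm loc}(\bR,H)$-weak convergence of Lemma \ref{weak.conv} and the equicontinuity of Corollary \ref{equicont} -- but the closing mechanism differs. The paper argues directly: for each $t$ it replaces $u_m(t)$ and $u(t)$ by their averages over $[t-\Delta t,t+\Delta t]$ (with error $\epsilon$, uniformly in $m$, by equicontinuity), applies Lemma \ref{weak.conv} to the test function $w\,\tfrac{1}{2\Delta t}{\bf 1}_{[t-\Delta t,t+\Delta t]}$, and concludes $|\langle u_m(t)-u(t),w\rangle|\le 3\epsilon$ for large $m$. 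You instead package the scalar functions $g_m(t)=\langle u_m(t)-u(t),w\rangle$ into a precompact subset of $C^0([T_0,T])$ via Arzel\`a--Ascoli (equicontinuity plus the uniform sup bound from the energy estimate, whose constants indeed depend only on $C_*,C^*,C_B,C_Q,T_0,T$), identify every uniform subsequential limit as zero by testing against $\phi\in C^\infty_0((T_0,T))$, and finish with the subsequence-of-subsequences argument. Both are valid; your compactness route makes the uniformity in $t$ over $[T_0,T]$ completely transparent (the paper's passage from pointwise smallness of the averaged term to smallness uniform in $t$ is left terse), at the cost of needing the explicit uniform boundedness hypothesis for Ascoli and a nonconstructive subsequence extraction, whereas the paper's $3\epsilon$ estimate is more direct and quantitative. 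One small point to keep straight: Corollary \ref{equicont} is stated for a \emph{common} right-hand side $f$, which is exactly the situation here, and its modulus of continuity comes from $\int_t^{t+\delta t}\|f\|^2$, so it is uniform over the whole family $\{u_m\}\cup\{u\}$ as you claim.
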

\begin{proof} According to Corollary \ref{equicont}, the conditions described in the statement of Theorem \ref{strong} imply that $\{u_m: m\in {\bf N}\}$ is equicontinuous, hence uniformly equicontinuous on compact sets. Choose $T_0 \le T \in \bR$. Given $\epsilon>0$, choose $\Delta t >0$ so that if $|\delta t| < \Delta t$, $t \in [T_0,T]$, then
\[
\|u_m(t+\delta t)-u_m(t)\| < \epsilon,\,\,m \in {\bf N}; \,\,\|u(t+\delta t)-u(t)\| <\epsilon,
\]
which implies that
\[
\left\|u_m(t) - \frac{1}{2\Delta t}\int_{t-\Delta t}^{t+\Delta t} u_m\right\| < \epsilon, \,\,m\in {\bf N}; \left\|u(t) - \frac{1}{2\Delta t}\int_{t-\Delta t}^{t+\Delta t} u\right\| < \epsilon.
\]
However, according to Lemma \ref{weak.conv}, for any $w \in H$,
\[
\frac{1}{2\Delta t}\int_{t-\Delta t}^{t+\Delta t}\langle u_m-u,w\rangle = \int \left\langle u_m-u,w \frac{1}{2\Delta t}{\bf 1}_{[t-\Delta t,t+\Delta t]}\right\rangle \rightarrow 0,\,\, m \rightarrow \infty.
\]
Therefore, assuming without loss of generality that $\|w\|=1$,
\[
|\langle u_m(t)-u(t),w\rangle | \le 3\epsilon
\]
for $m$ sufficiently large, $t \in [T_0,T]$. Since $T_0, T,$ and $\epsilon>0$ are arbitrary, the proof is complete.
\end{proof}

We require one more fact about the memory operators $R$:

\begin{lemma}\label{rconv} Suppose that $\{Q_m: m\in \bN\} \subset L^1(\bR, {\cal B}_{\rm symm}(H)) \cap C^0(\bR_+, {\cal B}(H))$ defines a sequence of operators $R_m: \lph \rightarrow \lph$ converging strongly to zero. Let ${\cal K} \subset \lph$ be bounded, and assume that all $u\in {\cal K}$ are causal, with common support ${\rm supp}\,u \subset [T_0,\infty)$. Then for any sequence $\{u_m: m \in \bN\} \subset {\cal K}$, and any $v \in \llh $, any $T \in \bR$,
\[
\int_{-\infty}^T\,dt\,\langle v(t), R_m[u_m](t) \rangle \rightarrow 0,\,\, m \rightarrow \infty
\]
\end{lemma}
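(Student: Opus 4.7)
The plan is to transfer the memory operator from $u_m$ to $v$ via the formal adjoint $R_m^*$, thereby reducing the claim to the hypothesized strong convergence $R_m \to 0$ applied to a \emph{fixed} (time-reversed) argument, after which Cauchy--Schwarz against the bounded family $\{u_m\}$ finishes the argument.

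First I would truncate and swap. Since $u_m$ is causal with ${\rm supp}\,u_m \subset [T_0,\infty)$ and $Q_m$ is causal, $R_m[u_m]$ vanishes outside $[T_0,\infty)$, so the integrand vanishes outside $t \in [T_0,T]$ and we may replace $v$ by $v_T := \chi_{[T_0,T]} v \in L^2(\bR,H)$. A Fubini swap---legitimate because $Q_m \in L^1(\bR,{\cal B}(H))$ and both $v_T,u_m$ are square-integrable on $[T_0,T]$---combined with the symmetry $Q_m(t-s)^* = Q_m(t-s)$ rewrites
\[
\int_{-\infty}^T \langle v(t), R_m[u_m](t)\rangle\, dt = \int_{T_0}^T \langle u_m(s), R_m^*[v_T](s)\rangle\, ds,
\]
where $R_m^*[w](s) = \int Q_m(t-s)w(t)\,dt$ is the formal adjoint introduced earlier.

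The main step is to show that $\|R_m^*[v_T]\|_{L^2([T_0,T],H)} \to 0$. Introduce the time-reversal operator $(\rho w)(t) = w(-t)$; a direct change of variable shows $R_m^* = \rho R_m \rho$. The reversed function $\rho v_T$ has compact support in $[-T,-T_0]$, hence belongs to $\lph$, so the hypothesis applies: $R_m[\rho v_T] \to 0$ in $L^2((-\infty,-T_0],H)$. Because $R_m[\rho v_T]$ is causal with support contained in $[-T,\infty)$ and $\rho$ is an $L^2$-isometry between $L^2([-T,-T_0],H)$ and $L^2([T_0,T],H)$, we conclude
\[
\|R_m^*[v_T]\|_{L^2([T_0,T],H)} = \|R_m[\rho v_T]\|_{L^2([-T,-T_0],H)} \le \|R_m[\rho v_T]\|_{L^2((-\infty,-T_0],H)} \longrightarrow 0.
\]

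Finally, Cauchy--Schwarz gives
\[
\left|\int_{T_0}^T \langle u_m(s), R_m^*[v_T](s)\rangle\, ds\right| \le \Bigl(\sup_m \|u_m\|_{L^2([T_0,T],H)}\Bigr)\,\|R_m^*[v_T]\|_{L^2([T_0,T],H)},
\]
and the right-hand side tends to zero since ${\cal K}$ is bounded in $\lph$. The only real obstacle I anticipate is the bookkeeping around the adjoint: verifying $R_m^* = \rho R_m \rho$ from the definitions and confirming that $\rho v_T$ lies in $\lph$ so that the strong convergence hypothesis transfers directly to it. Once these identities are in place, the remaining estimates are routine applications of Fubini and Cauchy--Schwarz.
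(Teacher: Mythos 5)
Your proposal is correct and follows essentially the same route as the paper: the identity $R_m^* = \rho R_m \rho$ is exactly the paper's change of variables $(t,s)\mapsto(-\tau,-\sigma)$ in the double integral, using the symmetry of $Q_m$ to rewrite the pairing as $\langle \mbox{time-reversed } u_m,\, R_m[\mbox{time-reversed, truncated } v]\rangle$, after which the strong convergence hypothesis on the fixed function and Cauchy--Schwarz against the bounded family $\{u_m\}$ finish the argument just as in the paper.
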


\begin{proof} 
\begin{eqnarray*}
\int_{-\infty}^T\,dt\,\int_{-\infty}^t \,ds\,\langle v(t), R_m[u_m](t) \rangle & = & \int_{-\infty}^T\int_{T_0}^t\,ds\,\langle v(t),Q_m(t-s) u_m(s) \rangle\\
& = & \int_{-T}^{-T_0} d\sigma \int_{-T}^{\sigma}\,d\tau\, \langle u_m(-\sigma), Q_m(\sigma-\tau) v(-\tau)\rangle\\
& = & \int \,dt\,\langle {\bf 1}_{[-T,-T_0]}(t)\check{u}_m(t),R_m[{\bf 1}_{[-T,-T_0]}\check{v}] (t)\rangle.
\end{eqnarray*}
Note that we have used the symmetry of $Q_m$, and it is for this reason that the assumption was introduced. According to the hypothesis, $\|R_m[w]\|_{L^2((-\infty,t])} \rightarrow 0$ as $m \rightarrow \infty$ for any $w \in \lph, t \in \bR$. Setting $t=-T_0$, $w={\bf 1}_{[-T,-T_0]}\check{v}$. Since ${\cal K} \in \lch$ is bounded, the sequence
\[
\{\|{\bf 1}_{[-T,-T_0]}(t)\check{u}_m\|_{\lh} \} \subset \bR
\]
is bounded, whence the conclusion follows via the Cauchy-Schwarz inequality.
\end{proof}

\begin{proof}  (of Theorem \ref{strong}) We will use repeatedly the algebraic identity: for $K,L \in {\cal B}(H), v,w \in H$,
\[
\langle v,Kv\rangle - \langle w,Lw\rangle = \langle v-w,K(v-w)\rangle + \langle v-w,Lw \rangle
\]
\[
+\langle w,L(v-w) + \langle v,(K-L)w\rangle +\langle w,(K-L)(v-w)\rangle.
\]
If $K$ and $L$ are symmetric, this identity simplifies to
\[
\langle v,Kv\rangle - \langle w,Lw\rangle = \langle v-w,K(v-w)\rangle + 2\langle v-w,Lw \rangle + \langle 2v-w,(K-L)w \rangle
\]
Application of the second version of this identity with $v=u_m,w=u,K=A_m,L=A$ yields
\begin{eqnarray}\label{sc1}
  \langle u_m, A_m u_m \rangle - \langle u,Au \rangle
  & = & \langle u_m - u, A_m (u_m - u) \rangle \nonumber \\ 
  & & + \langle 2 u_m - u, (A_m - A) u \rangle 
        + 2 \langle u_m - u, A u \rangle .
\end{eqnarray}
The right-hand side $f$ in the formal evolution equation \eqref{formeq} for both $u_m$ and $u$ vanishes for sufficiently large negative $t$, else $u$ could not be causal, but then $u_m$ and $u$ must vanish on a common (negative) half-axis, thanks to Corollary \ref{unique1}. The energy identity \eqref{eident} implies that 
\[
\langle u_m, A_m u_m \rangle - \langle u,Au \rangle (t) 
\]
\[
= - \int_{-\infty}^t [\langle B_m u_m,u_m \rangle - \langle B u, u \rangle + \langle R_m [u_m],u_m \rangle - \langle R[u], u \rangle - \langle f,u_m-u \rangle].
\]
Application of the algebraic identities stated above shows that the right-hand side is 
\begin{eqnarray}\label{sc2}
& = & - \int_{-\infty}^t [\langle u_m - u, B_m (u_m - u) \rangle + \langle 2 u_m - u, (B_m - B) u \rangle + 2 \langle u_m - u, B u \rangle \nonumber\\
& & \langle u_m - u, R_m [u_m - u] \rangle + \langle u_m,R_m[u]-R[u] \rangle + \langle u, R_m[u_m-u]-R[u_m-u] \rangle \nonumber \\
& & +  \langle u_m - u, R[u]\rangle + \langle u, R[u_m-u] \rangle 
 + \langle f,u_m-u \rangle].
\end{eqnarray}
Identities \eqref{sc1} and \eqref{sc2} combine to yield
\begin{equation}\label{sc3}
\langle u_m-u,A_m (u_m-u)\rangle(t)
= -\int_{-\infty}^t \langle B_m (u_m-u)+ R_m[u_m-u],u_m-u\rangle + g_m(t),
\end{equation}
in which $g_m \in C^0(\bR)$ is defined by
\begin{eqnarray}\label{sc4}
g_m(t) & = & - \langle 2 u_m(t) - u(t), (A_m - A) u(t) \rangle - 2 \langle (u_m - u)(t), A u(t) \rangle \nonumber\\
& & - \int_{-\infty}^t [\langle 2 u_m - u, (B_m - B) u \rangle + 2 \langle u_m - u, B u \rangle \nonumber\\
& & + \langle u_m,R_m[u]-R[u] \rangle + \langle u, R_m[u_m-u]-R[u_m-u] \rangle \nonumber \\
& & +  \langle u_m - u, R[u]\rangle + \langle u, R[u_m-u] \rangle +  \langle f,u_m-u \rangle].
\end{eqnarray}
Since the $B_m$'s are uniformly bounded operators on $H$ and the $R_m$'s are uniformly bounded operators on $L^2((-\infty,t],H)$ for every $t \in \bR$ (with norm independent of $t$), \eqref{sc3} implies that
\begin{equation}\label{sc5}
\|u_m - u\|^2(t) \le C\langle u_m-u,A_m (u_m-u)\rangle (t) \leq  C \int_{-\infty}^t \|u_m-u\|^2
+ |g_m(t)|.
\end{equation}
in which $C$ means something different each time it occurs, as before, but depends on the quantities indicated the second section.

Select $T_0$ for which $u(t)=u_m(t)=0$ for all $t < T_0, m\in {\bf N}$. Choose $T  \in \bR$. Application of Gronwall's inequality to \eqref{sc5} yields, for $T_0\le t \le T$ and C depending on T along with everything else,
\begin{equation}\label{sc7}
\|u_m-u\|^2(t) \le C_{T-T_0} |g_m|(t).
\end{equation}
in which $C_{T-T_0}$ depends on $T-T_0$ in addition to the standard dependencies.

It remains to see that $g_m(t) \rightarrow 0$ uniformly in $t \in [T_0,T]$, as $m \rightarrow \infty$. The first term in \eqref{sc4} tends to zero pointwise (in $t$) thanks to  the assumption that $A_m \rightarrow A$ strongly, and to the energy estimate (Theorem \ref{existence}) which assures that $\|u_m-u\|$ is bounded uniformly in $m \in \bN$ and $t \in [T_0,T]$. On the other hand, Corollary \ref{equicont} and the uniform bounds on $\{A_m: m \in \bN\}$ implied by the assumption that $\{A_m,B_m,Q_m): m \in \bN\} \subset {\cal P}(C_*,C^*,C_B,C_Q)$ in turn imply that the first term in \eqref{sc4} defines an equicontinuous sequence of continuous functions on $[T_0,T]$. Since any convergent subsequence converges pointwise to the zero function, so does the entire sequence, and uniformly. The second term in \eqref{sc4} tends to zero, uniformly on $[T_0,T]$, thanks to Lemma \ref{pw.weak.conv}, the third because $B_m \rightarrow B$ strongly, the fourth, seventh, eighth and ninth because of Lemma \ref{weak.conv} and Theorem \ref{existence}, the fifth because $R_m \rightarrow R$ strongly in $\lph$, the sixth because of Lemma \ref{rconv}.

\end{proof}

\noindent {\bf Remark.} This result is sharp, in the sense that nothing stronger than continuity can be expected without additional constraints on the various components of the formal evolution problem \eqref{formeq}. In particular, the modulus of continuity cannot be uniform in the right-hand side ($f \in \lh$), even locally.

For example, the 1D linear advection problem
\[
\left(\frac{1}{c}\frac{\partial u}{\partial t}-\frac{\partial u}{\partial x}\right)(t,x)=f(t,x)
\]
conforms to the setting described above, with $H=L^2(\bR)$. The operator coefficients are: $A$ =  multiplication by the positive constant $1/c$, $P = \partial / \partial x$, skew-adjoint with domain $V=H^1(\bR)$, and $B\equiv 0, Q \equiv 0$. For any $f \in L^2(\bR^2)$ ($\equiv \lh$ by Fubini's Theorem), the causal weak solution is
\begin{equation}\label{linadvsol}
u[c,f](t,x)=c\int_{-\infty}^t f(\tau,x+c(t-\tau)) d\tau,
\end{equation}
in which we have explicitly indicated the dependence of the weak solution on the coefficient $1/c$ and the right-hand side $f$.

Suppose $\chi \in C_0^{\infty}(\bR)$, ${\rm supp}\,\chi \subset [-1,1]$, and 
\[
\int \chi  =1.
\]
For $\epsilon>0$, set $f_{\epsilon}(t,x)=\cos((x+t)/\epsilon)\chi(x+t)\chi(x)$. Then $u[1,f_{\epsilon}](t,x)=\cos((x+t)/\epsilon)\chi(x+t)$ for $t>1$, whereas integration by parts shows that
\[
u[c,f_{\epsilon}](t,x)=O\left(\frac{\epsilon}{|c-1|}\right)
\]
for $c \ne 1$. Thus the modulus of continuity of $(c,f_{\epsilon}) \mapsto u[c,f_{\epsilon}](t,\cdot) \in H$ (for $t>1$) cannot be uniform over the bounded set $[a,b] \times \{f_{\epsilon}: \epsilon>0\} \subset \bR \times \lh$, if $1 \in (a,b)$, and in particular this map is not locally uniformly continuous.
The heuristic reason is that $c$ is the wave speed, so changing $c$ changes the position of arriving waves. This position shift has unboundedly large impact if the frequency of oscillation in the solution is allowed to become arbitrarily large. Oscillations in the data in time translate into oscillations in the solution in space, of similar frequency, because the problem is hyperbolic.

On the other hand, additional regularity {\em in time} of the right-hand side entails more regular behaviour of the solution, as one might expect since such regularity damps temporal frequencies.

\begin{corollary}\label{regstrong} In addition to the hypotheses of Theorem \ref{strong}, assume that $f \in H^k_{\rm loc}(\bR,H)$, $k \ge 1$. Then for any choice of $T_0 \le T \in \bR$,
\[
\lim_{m \rightarrow \infty}\|u_m-u\|_{C^{k-1}([T_0,T],H)} = 0.
\]
\end{corollary}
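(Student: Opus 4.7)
The plan is to reduce the $C^{k-1}([T_0,T],H)$ claim to $k$ separate applications of Theorem \ref{strong}, one for each time-derivative of the solutions, and then conclude by finite summation.

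First, since $f \in H^k_{\rm loc}(\bR,H)$ with $k \ge 1$, the Sobolev embedding $H^k_{\rm loc}(\bR,H) \hookrightarrow C^{k-1}(\bR,H)$ and causality of $f$ imply that each distributional derivative $f^{(j)}$, $j=0,\dots,k-1$, lies in $L^2_{\rm loc}(\bR,H) \cap C^0(\bR,H)$ and vanishes on $(-\infty,T_0)$, hence is causal in the sense required by Theorems \ref{existence}--\ref{strong}. By Theorem \ref{reghyp} applied to each $(A_m,P,B_m,Q_m)$ and to $(A,P,B,Q)$, the causal strong solutions $u_m$ and $u$ lie in $C^k(\bR,H)\cap C^{k-1}(\bR,V)$, and for each $j=0,\dots,k-1$ the function $u_m^{(j)}$ (respectively $u^{(j)}$) is the unique causal weak solution of \eqref{formeq} with coefficients $(A_m,P,B_m,Q_m)$ (respectively $(A,P,B,Q)$) and common right-hand side $f^{(j)}$.

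Now fix $j \in \{0,\dots,k-1\}$. All the hypotheses of Theorem \ref{strong} transfer verbatim from $f$ to $f^{(j)}$: the coefficient sequences are the same, they still lie in ${\cal P}(C_*,C^*,C_B,C_Q)$ and converge in the prescribed strong senses, and the shared right-hand side $f^{(j)}\in \llh$ is causal. Applying Theorem \ref{strong} with $f$ replaced by $f^{(j)}$ yields
\[
\lim_{m\to\infty} \bigl\| u_m^{(j)} - u^{(j)} \bigr\|_{L^\infty([T_0,T],H)} = 0.
\]
Summing over $j=0,\dots,k-1$ gives the desired $C^{k-1}([T_0,T],H)$ convergence, since that norm is equivalent to $\sum_{j=0}^{k-1}\|\cdot^{(j)}\|_{L^\infty([T_0,T],H)}$.

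There is no real obstacle here beyond bookkeeping: once Theorems \ref{reghyp} and \ref{strong} are in hand, the only point worth checking is that differentiating in $t$ preserves causality of the right-hand side, which follows from the embedding $H^k_{\rm loc}\hookrightarrow C^{k-1}$ and the fact that $f$ and all its lower-order derivatives vanish pointwise at $T_0$. The induction on $k$ is then trivial because Theorem \ref{reghyp} already identifies $u^{(j)}$ as the weak solution associated to $f^{(j)}$, so no further Galerkin construction or energy estimate is needed.
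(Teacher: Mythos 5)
Your proof is correct and follows essentially the same route as the paper, which simply notes that the corollary "follows directly from Theorems \ref{reghyp} and \ref{strong}": you use Theorem \ref{reghyp} to identify $u_m^{(j)}$ and $u^{(j)}$ as the causal weak solutions with right-hand side $f^{(j)}$, then apply Theorem \ref{strong} derivative by derivative and sum. The only difference is that you spell out the bookkeeping (causality of $f^{(j)}$ and the equivalence of norms) that the paper leaves implicit.
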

\begin{proof} Follows directly from Theorems \ref{reghyp} and \ref{strong}.
\end{proof}

\begin{proof} of Theorem \ref{smooth}: The meaning of \eqref{gateaux} is that for any $\phi \in C_0^{\infty}(\bR,V)$, 
\begin{eqnarray}\label{sm1}
\int  \langle \delta u, A \phi' + P \phi - B^\ast \phi - R^\ast [\phi] \rangle
& = &\int \langle \delta A u' + \delta B  u + \delta R[u], \phi \rangle \nonumber \\
& = &-\int \langle u, \delta A \phi' - \delta B^* \phi - \delta R^*[\phi] \rangle.
\end{eqnarray}
On the other hand, both $u$ and $u_h$, $h>0$, satisfy \eqref{weakde} with the same right-hand side, so
\begin{eqnarray}\label{sm2}
0 
& = &  \frac{1}{h} \left(\int  \langle u_h, A_h \phi' + P \phi - B_h^\ast \phi - R_h^\ast [\phi] \rangle\right. \nonumber \\
& & \left. - \int  \langle u, A \phi' + P \phi - B^\ast \phi - R^\ast [\phi] \rangle\right) \nonumber \\
& = & \int  \langle u_h, \delta A \phi' - \delta B^\ast \phi - \delta R^\ast [\phi] \rangle \nonumber \\
& & + \int  \left\langle  \frac{u_h-u}{h}, A \phi' + P \phi - B^\ast \phi - R^\ast [\phi] \right\rangle.
\end{eqnarray}
Subtracting \eqref{sm1} from \eqref{sm2} and rearranging, obtain
\[
\int  \left\langle  \left(\frac{u_h-u}{h}-\delta u\right), A \phi' + P \phi - B^\ast \phi - R^\ast [\phi] \right\rangle
=\int  \langle u_h-u, \delta A \phi' - \delta B^\ast \phi - \delta R^\ast [\phi] \rangle
\]
\begin{equation}\label{sm3}
= -\int \langle \delta A (u_h-u)'+\delta B (u_h-u) + \delta R[u_h-u], \phi \rangle.
\end{equation}
In view of equation \eqref{sm3}, the Newton quotient remainder
\[
\frac{u_h-u}{h}-\delta u
\]
is the weak solution of \eqref{formeq} with right-hand side
\[
\delta A (u_h-u)'+\delta B (u_h-u) + \delta R[u_h-u] \in \llh.
\]
In view of Corollary \ref{regstrong} and the energy estimate (Theorem \ref{existence}) imply that
\[
\|\delta A (u_h-u)'+\delta B (u_h-u) + \delta R[u_h-u]\|_{L^2((-\infty,T],H)} \rightarrow 0
\]
as $h\rightarrow 0$ for any $T \in \bR$. The conclusion then follows from anther use of Theorem \ref{existence}. 
\end{proof}

\noindent {\bf Remark.} This result is also sharp, in the sense that the right-hand side must have at least one square-integrable derivative in $t$, if only additional regularity in $t$ is to be imposed. For example, the solution \eqref{linadvsol} of the linear advection equation presented above may be rewritten as
\[
u(t,x) = \int_x^{\infty} f\left(t+\frac{x-y}{c},y\right)\,dy,
\]
from which it is straightforward to see that no less regularity in $t$ will do. On the other hand, the expression \eqref{linadvsol} suggests that additional regularity in $x$ might also support differentiable dependence on $c$. However this conclusion rests on a special feature of the example problem, namely that it admits a propagation of singularity principle (and indeed solution via the method of characteristics, an even more special property). Propagation of singularities along bicharacteristics holds for symmetric or strictly hyperbolic systems with smooth coefficients (see for example  \cite{Tay:81}), and to some limited extent for systems with less regular coefficients \cite[]{BealsReed:82,BealsReed:84,Symes:86b,lew91,BaoSy:95}. Stronger regularity results for dependence on coefficients follow for some of these systems. 

\begin{proof} of Theorem \ref{regfwd}: we present the case $k=2$. The general case follows by an induction argument, which we omit.

As usual, denote by $T_0 \in \bR$ a lower bound for ${\rm supp}\,f$.
In the notation of the proof of Theorem \ref{smooth}, $u \in C^2(\bR,H) \cap C^1(\bR,V)$, thanks to Theorem \ref{reghyp}. Therefore the right-hand side of  of \eqref{gateaux} has a locally square-integrable derivative, whence the causal weak solution $\delta u$ actually belongs to the class $C^1(\bR,H) \cap C^0(\bR,V)$. Applying Theorems \ref{existence} and \ref{reghyp} repeatedly, one sees that $\delta u$ satisfies for any $T_1 \in \bR$
\[
\|\delta u\|_{L^{\infty}([T_0,T_1],V)} \le C_{T_1-T_0} \|(\delta A, \delta B, \delta Q)\|_{\cal P} \|f\|_{H^2([T_0,T_1])}.
\]
It follows that the linear map $D{\cal F}[(A,B,Q)]: {\cal M} \rightarrow C^0([T_0,T_1],W)$ defined by 
\[
D{\cal F}[(A,B,Q)](\delta A, \delta B, \delta Q) = S \delta u
\]
is continuous.

Suppose that $(A_m,B_m,Q_m) \rightarrow (A,B,Q)$ in norm (of ${\cal M}$). Denote by $\delta u_m$ the solution of \eqref{gateaux} with $(A,B,Q)$ replaced by $(A_m,B_m,Q_m)$. It follows from the definition (equation \eqref{gateaux}) that $\delta u_m - \delta u$ is the (strong) solution of
\[
A(\delta u_m -\delta u)' + P(\delta u_m-\delta u) + B (\delta u_m-\delta u) + R[\delta u_m-\delta u] 
\]
\begin{equation}\label{gatgat}
= -[(A_m-A) \delta u_m + (B_m-B) \delta u_m + R_m[\delta u_m]-R_m[\delta u]  + \delta A (u_m-u)' + \delta B  (u_m-u) + \delta R[u_m-u]].
\end{equation}
Theorem \ref{reghyp} implies that $\{\delta u_m:m\in \bN\}$ is a bounded set in $C^1((-\infty,T_1],H)\cap C^0((-\infty,T_1],V)$, whence in $H^1_{\rm loc}((-\infty,T_1],H)$ also; moreover ${\rm supp}\,\delta u_m \subset [T_0,\infty)$ for all $m\in \bN$. It follows that the first three terms on the right-hand side of \eqref{gatgat} tend to zero in $H^1([T_0,T_1],H)$.  Applying Theorem \ref{strong} to $u_m'-u'$, the difference of solutions of equations of the form (\ref{formeq}) with the same causal right-hand side $f' \in H^1_{\rm loc}(\bR,H)$, one sees that $u_m-u \rightarrow 0$ in $H^1([T_0,T_1],H)$ also, whence the second three terms also tend to zero in this sense. Thus the entire right-hand side of \ref{gatgat} tends to zero as $m \rightarrow \infty$ in the sense of $H^1([T_0,T_1],H)$. Now it follows from Theorem \ref{reghyp} that
\[
(D{\cal F}[(A_m,B_m,Q_m)]-D{\cal F}[(A,B,Q)])(\delta A, \delta B,\delta Q) = S(\delta u_m-\delta u) \rightarrow 0
\]
(in the sense of $C^0([T_0,T_1],W)$. That is, $D{\cal F}: {\cal P} \times {\cal M} \rightarrow C^0[T_0,T_1],W)$ is continuous, and the theorem is proved in the case $k=2$.

\end{proof}

\section{Symmetric Hyperbolic Systems: continuous dependence,  proofs of Theorems \ref{symmhyp:wellposed} - \ref{symmhyp:speed}}

For convenience, we repeat the key definitions from the second section.

Recall that $k\times k$ symmetric hyperbolic systems take the form
\begin{equation}
\label{symmhyp:repeat}
a \frac{\partial u}{\partial t} + p(\nabla) u + bu + q \ast u = f\mbox{ in }\Omega \times \bR; \, u = 0 \mbox{ for } t < 0,
\end{equation}
in which the coefficient matrices $a, b, $ and $q$ are $k\times k$, and the $k \times k$ matrix differential operator in the ``space'' variables $\bx \in \Omega$ has symmetric and constant coefficient matrices.
 
The set of admissible coefficients $M \subset L^{\infty}(\Omega,\bR^{k \times k}_{\rm symm}) \times L^{\infty}(\Omega, \bR^{k\times k} ) \times L^1(\bR_+,L^{\infty}(\Omega,\bR^{k \times k}_{\rm symm} ))$ is defined by 
\[
M(C_*,C^*,C_B,C_Q) = \{(a,b,q): C_*I < a(\bx) < C^*I \mbox{ for all }\bx \in \Omega, \|b\|_{L^{\infty}(\Omega, \bR^{k\times k} )} < C_B, 
\]
\[
\|q\|_{L^1(\bR_+,L^{\infty}(\Omega,\bR^{k \times k}_{\rm symm} ))} < C_Q, q(t) = 0 \mbox{ for }t<0\},
\]
\begin{equation}
\label{symmhyp:admissible}
M = \bigcup \{ M(C_*,C^*,C_B,C_Q): 0 < C_* \le C^*, C_B, C_Q \in \bR_+\}.
\end{equation}
$M$ corresponds to a collection of problems of the form \eqref{symmhyp:repeat}, with common $p(\nabla)$.

The Hibert space of states is $H=L^2(\Omega)^k$. Provide $p(\nabla)$ with a dense domain $V \subset H$, and assume the $p(\nabla):V \rightarrow H$ is skew-adjoint, and that the norm in $V$ is equivalent to the graph norm of $p(\nabla)$.

The hypotheses of Theorem \ref{existence} are all satisfied, so

\begin{corollary}\label{symmhyp:weak} Suppose that $(a,b,q) \in M$, and $f \in \llh$ is causal. Then there exists a unique weak solution of \eqref{symmhyp:repeat}, satisfying for every $\phi \in H^1(\bR,L^2(\Omega)^k)$ 
\[
\int_{-\infty}^{\infty}\,dt\,\int_{\Omega}\,d\bx\,[u(-A \phi'-p(\nabla)\phi + b\phi+q\ast \phi) - f \phi] = 0.
\]
There exists an increasing function $C: \bR_+ \rightarrow \bR_+$, depending on $C_*, C^*$, and bounds for $b,q$, so that for every $t \in \bR$,
\begin{equation}
\label{symmhyp:weakenergy}
\|u(\cdot,t)\|^2_{L^2(\Omega)^k} \le C(t) \int_0^t d\tau 
\|f(\cdot,\tau)\|^2_{L^2(\Omega)^k}.
\end{equation}
\end{corollary}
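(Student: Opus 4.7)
\medskip

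\noindent\textbf{Proof proposal.} The plan is to exhibit the symmetric hyperbolic system \eqref{symmhyp:repeat} as an instance of the abstract evolution equation \eqref{formeq} and then read off both conclusions directly from Theorem \ref{existence}.

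First, I would specify the abstract data. Take $H=L^2(\Omega)^k$ with its standard inner product, and let $V$ and $P$ be the skew-adjoint extension of $p(\nabla)$ postulated at the start of Section~2. Define operators on $H$ by pointwise matrix multiplication: $A u(\bx)=a(\bx)u(\bx)$, $B u(\bx)=b(\bx)u(\bx)$, and, for each $t\in\bR$, $Q(t)u(\bx)=q(\bx,t)u(\bx)$. Symmetry of $a(\bx)$ and the pointwise bound $C_*I<a(\bx)<C^*I$ yield $A\in{\cal B}_{\rm symm}^+(H)$ with $C_*I\le A\le C^*I$; similarly $\|B\|_{{\cal B}(H)}\le\|b\|_{L^\infty}\le C_B$. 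Since $\|Q(t)\|_{{\cal B}(H)}=\|q(\cdot,t)\|_{L^\infty}$ for a.e.~$t$, the assumption $q\in L^1(\bR_+,L^\infty)$ gives $Q\in L^1(\bR,{\cal B}_{\rm symm}(H))$, with $\|Q\|_{L^1(\bR,{\cal B}(H))}\le C_Q$, and symmetry of $q(\bx,t)$ makes each $Q(t)$ self-adjoint. Causality of $Q$ is immediate from $q(\cdot,t)=0$ for $t<0$.

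Second, I would address the minor technical point that Theorem \ref{existence} assumes $Q\in L^1(\bR,{\cal B}_{\rm symm}(H))\cap C^0(\bR_+,{\cal B}(H))$, whereas the admissible set $M$ only asks for $L^1$ in $t$. Approximate $q$ by a sequence $q_n$ of operator-valued functions continuous on $\bR_+$ (for instance mollifications in $t$), truncated to vanish for $t<0$, with $q_n\to q$ in $L^1(\bR_+,L^\infty)$ and $\|q_n\|_{L^1(\bR_+,L^\infty)}\le C_Q+1$. Theorem \ref{existence} produces a unique causal weak solution $u_n\in C^0(\bR,H)$ for each $n$, satisfying \eqref{eqn:bashyp} uniformly in $n$. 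The corresponding convolution operators $R_n$ satisfy $\|R_n[w]-R[w]\|_{L^2((-\infty,T],H)}\le\|q_n-q\|_{L^1(\bR_+,L^\infty)}\|w\|_{L^2((-\infty,T],H)}\to 0$, so hypothesis~4 of Theorem \ref{strong} holds, while hypotheses~1 and~2 are trivially satisfied (the other coefficients are held fixed). Theorem \ref{strong} then gives $u_n\to u$ in $L^\infty([T_0,T],H)$ for a causal limit $u$, which is a weak solution of \eqref{formeq} with the original $(A,B,Q)$ by passage to the limit in the weak formulation, and which inherits the energy bound \eqref{eqn:bashyp}.

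Third, I would translate the abstract conclusions back to the PDE setting. Writing out \eqref{weakde} with the concrete operators above recovers the identity
\[
\int\!\int\bigl[u\,(-a\phi'-p(\nabla)\phi+b^{T}\phi+q^{T}\!\ast\phi)-f\phi\bigr]\,d\bx\,dt=0
\]
for $\phi\in C_0^\infty(\bR,V)$; since $b$ and $q$ enter symmetrically only through multiplication, and since $C_0^\infty(\bR,V)$ is dense in $H^1(\bR,L^2(\Omega)^k)$ in the topology in which each of the integrals above is continuous (the time derivative acts only on the multiplier $a$, which is bounded, and $p(\nabla)$ appears only inside pairings against the already-constructed $u\in L^2_{\rm loc}(\bR,H)$ after using that $u$ is in fact weakly differentiable by Corollary~\ref{cont}), the stated formulation extends to $\phi\in H^1(\bR,L^2(\Omega)^k)$. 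Uniqueness is Corollary \ref{unique1} specialized to this setting, and the energy bound \eqref{symmhyp:weakenergy} is exactly \eqref{eqn:bashyp} rewritten with $C(t)$ depending only on the stated constants.

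The main obstacle in this plan is the mismatch between the regularity of $Q$ assumed for the abstract theorem and the regularity built into $M$; the approximation argument using Theorem \ref{strong} is the natural fix, but care must be taken that the strong-convergence hypotheses of that theorem really are met by the mollification, which is the point I checked above. A secondary subtlety is the test-function density argument in the last step, where one must be sure that $p(\nabla)\phi$ is meaningful for the enlarged class of test functions; this is handled by shifting $p(\nabla)$ onto $u$ using that the weak solution constructed by the abstract theorem is in fact $C^0(\bR,H)$ with values in $V$ after time-smoothing (Proposition \ref{spacesmooth}).
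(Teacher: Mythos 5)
Your overall route is exactly the paper's: take $H=L^2(\Omega)^k$, $V$ and the skew-adjoint extension $P$ of $p(\nabla)$, let $A,B,Q$ be the multiplication operators induced by $a,b,q$, and invoke Theorem \ref{existence}; the paper's own proof is literally the remark that the hypotheses of that theorem are satisfied, with uniqueness from Corollary \ref{unique1} and the bound \eqref{symmhyp:weakenergy} being \eqref{eqn:bashyp}. The extra care you take over the mismatch between $M$ (which only requires $q\in L^1(\bR_+,L^\infty)$) and the abstract class ${\cal P}$ (which also asks $Q\in C^0(\bR_+,{\cal B}(H))$) addresses a point the paper glosses over, and approximating in $t$ is a reasonable instinct.

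However, the way you close that step is circular: Theorem \ref{strong} presupposes that the limit triple $(A,B,Q)$ itself lies in ${\cal P}(C_*,C^*,C_B,C_Q)$ --- hence $Q$ continuous on $\bR_+$ --- and that $u$ is \emph{already} the causal weak solution of the limit problem; it cannot be used to manufacture that solution. The repair is cheap: either observe that continuity of $Q$ in $t$ is nowhere used in the proofs of Proposition \ref{energy} and Theorem \ref{existence} (only $\|Q\|_{L^1(\bR,{\cal B}(H))}$ enters, and the Galerkin/contraction argument works for $L^1$ kernels), so Theorem \ref{existence} applies verbatim to the multiplication operators; or keep your mollification and prove $\{u_n\}$ is Cauchy in $L^\infty([T_0,T],H)$ directly, since $u_n-u_m$ is a causal weak solution with kernel $Q_n$ and right-hand side $(R_m-R_n)[u_m]$, so the uniform energy estimate \eqref{energyest} and Young's inequality give $\|u_n-u_m\|^2(t)\le C\,\|q_n-q_m\|_{L^1(\bR_+,L^\infty)}^2\,\|u_m\|_{L^2((-\infty,t],H)}^2\rightarrow 0$, after which you pass to the limit in \eqref{weakde} as you indicate. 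A smaller soft spot is the extension of the test class to $H^1(\bR,L^2(\Omega)^k)$: for $f$ merely in $\llh$ the weak solution does not take values in $V$, so ``shifting $p(\nabla)$ onto $u$'' only makes sense after time-mollification (Proposition \ref{spacesmooth}) followed by a limit, and one also needs a cutoff in $t$ because such $\phi$ need not have compact support while $u$ is only locally square-integrable; as written this step is as loose as the corollary's own statement and deserves that explicit mollify-cutoff-and-limit sentence.
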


Theorems \ref{symmhyp:wellposed} and \ref{symmhyp:tracereg} now follow directly from Theorems \ref{smooth} and \ref{regfwd} respectively.

Multiplication by a member of $L^{\infty}(\Omega)$ defines a continuous map from $L^{\infty}(\Omega)$ to ${\cal B}(L^2(\Omega))$, so convergence of the coefficients $a,b,q$ in  $L^{\infty}(\Omega) $ is sufficient to induce uniform, hence strong, convergence of the corresponding operators, hence convergence of the solutions per Theorem \ref{strong}. However, convergence in a weaker sense is also sufficient to induce strong operator convergence. The key observation is the following result, identical to Lemma 2.8.5 in \cite[]{stolk}:

\begin{lemma}\label{mult} Let $(\Omega,{\cal A},\mu)$ be a measure space, $\{r_m\} \subset L^{\infty}(\Omega,\mu)$ with $\|r_m\|_{L^{\infty}(\Omega,\mu)} \le R \in \bR_+$, $\{f_m\} \subset L^2(\Omega,\mu)$ with $\|f_m\|_{L^2(\Omega,\mu)} \le F \in \bR_+$ for all $m \in {\bf N}$. Suppose that $r_m \rightarrow 0$ in $\mu$-measure. Then for any $g \in L^2(\Omega,\mu)$, 
\begin{equation}\label{cim-lim}
{\rm lim}_{m\rightarrow \infty} \int_E r_m f_m g d\mu = 0.
\end{equation}
\end{lemma}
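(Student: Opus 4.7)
The plan is to use a standard truncation argument, splitting the integral based on where $r_m$ is large versus small in absolute value, and to control each part by combining convergence in measure with Cauchy-Schwarz and the absolute continuity of the Lebesgue integral of $|g|^2$.

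First I would observe that since $\|f_m\|_{L^2} \le F$ and $g \in L^2$, Cauchy-Schwarz gives $|f_m g| \in L^1$ with $\|f_m g\|_{L^1} \le F\|g\|_{L^2}$, uniformly in $m$. Next, fix $\epsilon > 0$ and set $E_m = \{\bx \in \Omega : |r_m(\bx)| > \epsilon\}$. By the hypothesis that $r_m \to 0$ in measure, $\mu(E_m) \to 0$ as $m \to \infty$. Split
\[
\left| \int_\Omega r_m f_m g \, d\mu \right| \le \int_{\Omega \setminus E_m} |r_m f_m g|\, d\mu + \int_{E_m} |r_m f_m g|\, d\mu.
\]
On $\Omega \setminus E_m$ the pointwise bound $|r_m| \le \epsilon$ together with the uniform $L^1$ bound above yields a contribution at most $\epsilon F \|g\|_{L^2}$. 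On $E_m$ I use $|r_m| \le R$ and Cauchy-Schwarz to obtain $\int_{E_m} |r_m f_m g|\, d\mu \le R F \|g \mathbf{1}_{E_m}\|_{L^2}$.

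The key step is then to show that $\|g\mathbf{1}_{E_m}\|_{L^2} \to 0$. This follows from absolute continuity of the integral of $|g|^2 \in L^1$: given any $\eta > 0$, there exists $\delta > 0$ so that $\mu(A) < \delta$ implies $\int_A |g|^2 d\mu < \eta$, and eventually $\mu(E_m) < \delta$. Combining,
\[
\limsup_{m\to\infty} \left|\int_\Omega r_m f_m g\, d\mu\right| \le \epsilon F \|g\|_{L^2},
\]
and letting $\epsilon \to 0$ gives the conclusion.

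The main obstacle is rather minor, and it is really just recognizing that the piece where $|r_m|$ is large cannot be controlled simply by the $L^1$-boundedness of $f_m g$ (which is not equi-integrable a priori), but is controlled instead through Cauchy-Schwarz applied to the fixed function $g \in L^2$, exploiting the fact that the indicator $\mathbf{1}_{E_m}$ shrinks in measure. No further hypotheses on $f_m$ beyond the uniform $L^2$ bound are needed; the $L^2$-summability of $g$ supplies the equi-integrability.
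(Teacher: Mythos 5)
Your proof is correct and follows essentially the same route as the paper's: the same $\epsilon$-level-set decomposition of $\Omega$ into $\{|r_m|\le\epsilon\}$ and $E_m=\{|r_m|>\epsilon\}$, and the same bounds $\epsilon F\|g\|_{L^2}$ and $RF\left(\int_{E_m}g^2\,d\mu\right)^{1/2}$ on the two pieces. The only difference is that you justify $\int_{E_m}g^2\,d\mu\to 0$ directly via absolute continuity of the integral of $|g|^2$, whereas the paper argues by contradiction, passes to a subsequence with $\mu(E_m)\le 2^{-m}$, and invokes dominated convergence; your version is marginally more direct but the substance is identical.
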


\begin{proof}
Suppose on the contrary that such sequences $\{r_m\}, \{f_m\}$ and square-integrable $g$ exist, also an $\eta >0$, for which the left-hand side of \eqref{cim-lim} remains $\ge \eta$ along a common subsequence. Without loss of generality, renumber the subsequence so that 
\begin{equation}\label{absurd}
\left|\int_E r_m f_mg d\mu \right| \ge \eta, \,\,m\in {\bf N}.
\end{equation}
Convergence in measure of $\{r_m\}$ means that for any $\epsilon > 0$,
\[
\mu[E_{\epsilon}(r_m)] \rightarrow 0 \,\,{\rm as}\,\,m \rightarrow \infty, \,\,{\rm where} \,\,E_{\epsilon}(r_m)=\{\bx \in E: |r_m(\bx)| \ge \epsilon\}.
\]
Choose $\epsilon$ so that $\epsilon F \|g\|_{L^2(\Omega,\mu)} < \eta/2$. 

From this definition and the Cauchy-Schwarz inequality, one sees that
\[
\left|\int_E r_m f_mg d\mu \right| \le \epsilon\int_{E\setminus E_{\epsilon}(r_m)} |f_m g| d\mu+R\int_{E_{\epsilon}(r_m)} |f_m g| d\mu
\]
\[
\le \epsilon F \|g\|_{L^2(E,B,\mu)} + RF\left(\int_{E_{\epsilon}(r_m)} g^2 d\mu\right)^{\frac{1}{2}}
<\frac{\eta}{2} + RF\left(\int_{E_{\epsilon}(r_m)} g^2 d\mu\right)^{\frac{1}{2}}
\]
By passing if necessary to a further subsequence, we may assume that
\[
\mu[E_{\epsilon}(r_m)] \le 2^{-m}\,\, \Rightarrow \,\, \sum_m \mu[E_{\epsilon}(r_m)]  < \infty.
\]
Thus the characteristic functions of the sets $E_{\epsilon}(r_m)$ are almost everywhere convergent to zero as $m \rightarrow \infty$. Since $|g|^2 \in L^1(\Omega,\mu)$, it follows from the Lebesgue Dominated Convergence Theorem that for large enough $m$, 
\[
\left( \int_{E_{\epsilon}(r_m)} g^2 d\mu \right)^{\frac{1}{2}} < \frac{\eta}{2 R F}
\]
Thus the left-hand side of \eqref{absurd} can be made smaller than $\eta$, a contradiction.
\end{proof}

\begin{lemma}\label{mult-cim} Suppose that $\{a_m\}_{m=1}^{\infty} \subset L^{\infty}(\bR^n)$ converges in measure to $a \in L^{\infty}(\bR^n)$, and that $A_m, m \in {\bf N}$ and $A\in {\cal B}(L^2(\bR^n))$ are defined by  
\[
(A_m u) (\bx) = a_m(\bx) u(\bx),\,\,(A u) (\bx) = a(\bx) u(\bx),\,\,u \in L^2(\bR^n).
\]
Then $A_m \rightarrow A$ strongly. The same is true for similar sequences of operators on $L^2(\bR^n)^p$ defined by sequences of $p \times p$ matrix-valued functions whose components converge in measure.
\end{lemma}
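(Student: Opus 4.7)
The plan is to reduce the desired strong operator convergence to a single application of Lemma~\ref{mult}. Fix $u \in L^2(\bR^n)$. The target estimate is
\[
\|A_m u - A u\|_{L^2}^2 = \int_{\bR^n} (a_m - a)^2 |u|^2 \, d\bx \to 0,
\]
which I rewrite as $\int r_m f_m g \, d\bx$ with the choices $r_m := a_m - a$, $f_m := (a_m - a) u$, and $g := u$.

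First, I would check the hypotheses of Lemma~\ref{mult}: convergence in measure is preserved under subtraction of the limit, so $r_m \to 0$ in measure. Since this lemma is used exclusively inside coefficient classes of the form $M(C_*,C^*,C_B,C_Q)$ from \eqref{symmhyp:admissible} (as one sees in Theorem~\ref{symmhyp:strong}), the sequence $\{a_m\}$ is uniformly bounded in $L^\infty$, so $\|r_m\|_{L^\infty} \le R$ for some $R$ independent of $m$; this immediately yields $\|f_m\|_{L^2} \le R\|u\|_{L^2}$, so $\{f_m\}$ is bounded in $L^2$. Finally $g = u \in L^2$. Lemma~\ref{mult} then gives $\int r_m f_m g \, d\bx \to 0$, which is exactly strong convergence $A_m u \to A u$ in $L^2(\bR^n)$.

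For the matrix-valued extension, I would write componentwise
\[
(A_m u - Au)_i(\bx) = \sum_{j=1}^p \bigl((a_m)_{ij}(\bx) - a_{ij}(\bx)\bigr)\, u_j(\bx),
\]
so that
\[
\|A_m u - Au\|_{L^2(\bR^n)^p} \;\le\; \sum_{i,j=1}^p \bigl\|\bigl((a_m)_{ij} - a_{ij}\bigr) u_j\bigr\|_{L^2(\bR^n)},
\]
and each summand tends to zero by the scalar case applied entrywise.

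There is no substantive obstacle beyond verifying these hypotheses; the content has already been packaged in Lemma~\ref{mult}. The one point worth flagging in the write-up is the tacit uniform $L^\infty$-bound on $\{a_m\}$: without it the conclusion would fail (e.g.\ $a_m = m\,\mathbf{1}_{[0,1/m]}$ converges to $0$ in measure on $[0,1]$, but the associated multiplication operators do not converge strongly). Since every invocation of this lemma in the paper draws its coefficients from a class $M(C_*,C^*,C_B,C_Q)$, this caveat does not restrict any application.
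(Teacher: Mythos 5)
Your proof is correct and takes essentially the same approach as the paper: the paper also writes $\|A_m u - A u\|^2 = \int (a_m-a)\,[(a_m-a)u]\,u$ and applies Lemma~\ref{mult} with exactly your choices $r_m = a_m - a$, $f_m = (a_m-a)u$, $g = u$, with the matrix case handled entrywise. Your flag about the tacit uniform $L^\infty$ bound is apt --- the paper's proof relies on it just as silently, and in every application it is supplied by membership in a class $M(C_*,C^*,C_B,C_Q)$ (or ${\cal P}(C_*,C^*,C_B,C_Q)$), so your counterexample correctly identifies why the hypothesis cannot simply be dropped.
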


\begin{proof} In fact, the operators so defined are self-adjoint, and  
\[
\|A_m u - A u\|^2 = \int (a_m-a)[(a_m-a)u] u \rightarrow 0,
\]
as follows from Lemma \ref{mult}, taking $a_m-a$ for $r_m$, $(a_m-a)u$ for $f_m$, and $u$ for $g$ in the notation of that lemma.
\end{proof}
 
\begin{proof} of Theorem \ref{symmhyp:strong}: follows immediately from Lemma \ref{mult-cim} and Theorem \ref{strong}.
\end{proof}

One way to obtain convergence in measure is via mollification. Let $\{\eta_m: m \in \bN\}$ be a Dirac sequence of mollifiers, as in the proof of Proposition \ref{energy}. Since $a \in L^p_{\rm loc}(\bR^d)$ for any $p$, $\eta_m \ast a \rightarrow a$ pointwise almost everywhere, hence in measure. An application of this observation is the {\em finite speed of propagation} property: for solutions of \eqref{symmhyp:repeat}, support expands at finite speed. This property is well-understood for hyperbolic systems with smooth coefficients: for example, \cite{Lax:PDENotes}, Ch. 4, presents a proof of:

\begin{proposition}\label{std-sm} In addition to the hypotheses of  Theorem \ref{symmhyp:wellposed}, suppose that the coefficients in \eqref{symmhyp:repeat} are smooth (of class $C^{\infty}(\bar{\Omega})$). Then
\begin{itemize}
\item[1. ] the causal weak solution $u \in \llh$ is smooth: $u \in C^{\infty}(\bR^{n+1})^p$, and
\item[2. ] if $\phi \in C^{\infty}(\bR^n)$ satisfies
\[
a + \sum p_i \frac{\partial \phi}{\partial x_i} > 0
\]
and ${\rm supp}(f) \bigcap \{(\bx,t): \phi(\bx)  > t\}=\emptyset$, then $u(\bx,t)=0$ if $\phi(\bx) \le t$.
\end{itemize}
\end{proposition}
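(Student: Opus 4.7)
The plan is to dispatch the two parts by distinct classical techniques.

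For Part 1, I would bootstrap on Theorem \ref{reghyp}. Under the additional assumption (implicitly needed for any pointwise-smoothness conclusion) that $f$ is smooth, Theorem \ref{reghyp} already gives $u \in C^k(\bR, L^2(\Omega)^k) \cap C^{k-1}(\bR, V)$ for every $k \in \bN$. To transfer temporal regularity to full space-time regularity, apply a formal spatial multi-index derivative $\partial^{\alpha}_{\bx}$ to the equation. The function $\partial^{\alpha}_{\bx} u$ then satisfies a symmetric hyperbolic system of exactly the same form, with right-hand side $\partial^{\alpha}_{\bx} f$ plus commutator terms $[\partial^{\alpha}_{\bx}, a]\, u'$, $[\partial^{\alpha}_{\bx}, b]\, u$, $[\partial^{\alpha}_{\bx}, q\ast]\, u$, each of which is $L^2$-bounded by strictly-lower-order spatial derivatives of $u$ multiplied by sup-norms of derivatives of the coefficients --- finite by the smoothness hypothesis. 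An induction on $|\alpha|$, with repeated use of Theorem \ref{reghyp} and the energy estimate \eqref{energyest} at each step, yields $\partial^{\alpha}_{\bx} u \in C^k(\bR, L^2(\Omega)^k)$ for every $\alpha$ and every $k$. Sobolev embedding in mixed regularity then concludes $u \in C^{\infty}(\bR^{n+1})^p$.

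For Part 2, I would use the classical energy method on a lens-shaped region cut out by the space-like hypersurface $\{\phi(\bx) = s\}$. Taking the conclusion in the form "$u$ vanishes in the same region where $f$ vanishes by hypothesis" (i.e., on $\{\phi(\bx) > t\}$), fix $T > 0$ and define $R_T = \{(\bx, s) \in \Omega \times \bR : 0 \le s \le T,\; \phi(\bx) > s\}$ and $E(T) = \frac{1}{2} \int_{\{\phi > T\}} \langle u, au\rangle(\bx, T)\, d\bx$. Pair \eqref{symmhyp:repeat} with $u$ and integrate over $R_T$. Integration by parts produces three boundary contributions: the bottom $\{s=0\}$ vanishes by causality; the top slice $\{s=T\} \cap \{\phi > T\}$ contributes $E(T)$; and the slanted piece $\{\phi(\bx) = s\}$ contributes (via the surface-measure factor $|\nabla \phi|^{-1}\, dS$) an integrand proportional to $\langle u, (a + \sum_i p_i \partial_i \phi) u\rangle$, which is $\ge 0$ by the sign hypothesis on $\phi$. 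The source term vanishes because $f \equiv 0$ on $R_T$, while the zero-order $b$-term and the memory term $q \ast u$ are bounded by $C \int_0^T E(s)\, ds$ using Cauchy--Schwarz, Young's inequality, and the uniform bounds on $b$ and $q$. Moving the non-negative slanted contribution to the side of $E(T)$, one obtains $E(T) \le C \int_0^T E(s)\, ds$, whence Gronwall gives $E \equiv 0$ and therefore $u \equiv 0$ on $R_T$; letting $T \to \infty$ finishes the argument.

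The main obstacles I foresee are twofold. First, correctly tracking the outward conormal on the slanted boundary so that its integrand really is $\langle u, (a + \sum p_i \partial_i \phi) u\rangle$ rather than the sign-flipped $\langle u, (a - \sum p_i \partial_i \phi) u\rangle$; the choice of which side of the hypersurface defines $R_T$ is forced by this orientation, and it is what determines on which side of $\{\phi(\bx) = t\}$ one actually obtains $u \equiv 0$. Second, the memory term $q \ast u$ is non-local in time: $(q \ast u)(\bx, s)$ for $(\bx, s) \in R_T$ depends on values $u(\bx, \tau)$ for $\tau \le s$ that may lie outside the lens $R_T$. This is absorbed by invoking the global pointwise-in-$t$ energy estimate \eqref{energyest} to control $u$ in $L^2(\Omega)$ uniformly on bounded time intervals, then applying Young's inequality on the convolution, so that the right-hand side of the Gronwall inequality remains expressible in terms of $E$ alone.
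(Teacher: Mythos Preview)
The paper does not actually prove Proposition \ref{std-sm}: it is stated as a known result, with a citation to \cite{Lax:PDENotes}, Ch.\ 4, and the remark that the extension to systems with the memory term $q\ast u$ ``is straightforward.'' Your plan---bootstrap time-regularity to space-time regularity via commutator estimates for Part 1, and a lens-shaped energy estimate with Gronwall for Part 2---is exactly the classical argument Lax gives (in the purely differential case), so there is nothing to compare in terms of approach.

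One point in your Part 2 deserves correction, however. Your stated obstacle for the memory term is not actually an obstacle, and your proposed fix for it would not work if it were. With the lens $R_T=\{0\le s\le T,\ \phi(\bx)>s\}$, the spatial slices $\{\phi(\bx)>s\}$ are \emph{decreasing} in $s$. Hence if $(\bx,s)\in R_T$ and $0\le\tau\le s$, then $\phi(\bx)>s\ge\tau$, so $(\bx,\tau)\in R_T$ as well; together with causality ($u=0$ for $\tau<0$) this means $(q\ast u)(\bx,s)=\int_0^s q(s-\tau)u(\bx,\tau)\,d\tau$ only samples $u$ inside $R_T$. Thus the term $\int_{R_T}\langle u, q\ast u\rangle$ is controlled directly by $\int_0^T E(\sigma)\,d\sigma$ via Young's inequality, with no appeal to the global estimate \eqref{energyest}. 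By contrast, invoking \eqref{energyest} as you propose would insert a term bounded by $\|f\|_{L^2}^2$ on the right-hand side of your Gronwall inequality, and Gronwall would then yield only $E(T)\le C\|f\|_{L^2}^2$, not $E(T)=0$---so that route does not close. Your first flagged obstacle (tracking the conormal orientation to get the correct sign on the slanted boundary integrand) is real and is indeed the place where one must be careful; note also that the statement as printed appears to have the conclusion on the wrong side of the hypersurface, and the corollary that follows confirms your reading that $u$ vanishes on the same side as $f$.
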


\begin{corollary}\label{std-dd} In the setting of Proposition \ref{std-sm}, suppose that $\tau \in \bR$ satisfies 
\begin{equation}\label{dd-bd}
\tau a(\bx)+\sum_{i=1}^n p_i \xi_i  \ge 0,\,\,\bx \in \bR^n,\, |{\bf \xi}|=1.
\end{equation}
If $\bx_0 \in \bR^n, t_0 \in \bR$ satisfy
\[
f(\bx,t) = 0 \,\,{\rm if}\,\, \tau|\bx-\bx_0|+t_0-t \ge 0,
\]
then $u(\bx_0,t_0)=0$.
\end{corollary}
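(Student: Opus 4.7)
The plan is to apply Proposition \ref{std-sm}(2) to a smoothed downward-opening cone $\phi$ anchored at $(\bx_0,t_0)$. Before constructing $\phi$ I would dispose of two boundary observations: applying the matrix hypothesis \eqref{dd-bd} to both $\xi$ and $-\xi$ and adding yields $2\tau a \ge 0$, which together with the uniform positive-definiteness of $a$ forces $\tau \ge 0$; and if $\tau = 0$ the same argument gives $p_i \equiv 0$ for every $i$, so the PDE decouples into an ODE in $t$ at each fixed $\bx$, and the conclusion is immediate from causality together with the observation (below) that $f(\bx_0,t) = 0$ for every $t \le t_0$. So I assume henceforth $\tau > 0$ and fix any $\alpha \in (0, 1/\tau)$; for $\varepsilon > 0$ I set
\[
\phi_\varepsilon(\bx) := t_0 - \alpha \sqrt{|\bx-\bx_0|^2 + \varepsilon^2},
\]
which lies in $C^\infty(\bR^n)$, satisfies $|\nabla \phi_\varepsilon| \le \alpha$ everywhere, and has $\phi_\varepsilon(\bx_0) = t_0 - \alpha\varepsilon < t_0$.

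The key step is verifying the two hypotheses of Proposition \ref{std-sm}(2) for $\phi_\varepsilon$. For the matrix inequality I would write $\partial_i \phi_\varepsilon = -\alpha\rho\, \hat{n}_i$ with $\hat{n}$ a unit vector and $\rho = |\bx-\bx_0|/\sqrt{|\bx-\bx_0|^2 + \varepsilon^2} \in [0,1]$; applying \eqref{dd-bd} to $\xi = -\hat n$ gives $\sum_i p_i \hat n_i \le \tau a$ as symmetric matrices, and multiplying by the non-negative scalar $\alpha\rho$ and combining with $a$ yields
\[
a + \sum_i p_i \partial_i \phi_\varepsilon = a - \alpha\rho \sum_i p_i \hat n_i \ge (1 - \alpha\rho\tau)\,a \ge (1 - \alpha\tau)\,a > 0.
\]
For the support hypothesis, since $\phi_\varepsilon \le t_0$ the set $\{\phi_\varepsilon(\bx) > t\}$ is contained in $\{t < t_0\}$, and on this half-space the inequality $\tau|\bx-\bx_0| + t_0 - t \ge 0$ is automatically satisfied (using $\tau \ge 0$); hence the corollary's hypothesis forces $f$ to vanish there.

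Proposition \ref{std-sm}(2) then delivers $u(\bx,t) = 0$ for every $(\bx,t)$ with $\phi_\varepsilon(\bx) \le t$, and since $\phi_\varepsilon(\bx_0) = t_0 - \alpha\varepsilon < t_0$ the pair $(\bx_0,t_0)$ lies in this set, giving $u(\bx_0,t_0) = 0$. The only real obstacle I anticipate is that the geometrically natural $\phi = t_0 - \alpha|\bx-\bx_0|$ fails to be smooth at $\bx_0$; the regularization by $\sqrt{|\bx-\bx_0|^2 + \varepsilon^2}$ is the standard device for repairing this, and since every estimate above holds with $\varepsilon$-independent constants no limit procedure in $\varepsilon$ is needed.
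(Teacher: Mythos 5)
The paper gives no written proof of this corollary: it is meant to be read off from Proposition \ref{std-sm}(2) by inserting a smoothed cone, which is what you do, and the substantive part of your argument is right -- the reductions $\tau\ge 0$ and the degenerate case $\tau=0$, and the verification $a+\sum_i p_i\partial_i\phi_\varepsilon \ge (1-\alpha\rho\tau)\,a\ge(1-\alpha\tau)\,a>0$ obtained by applying \eqref{dd-bd} to $\pm\hat n$. As a derivation of the statement as printed from Proposition \ref{std-sm}(2) as printed, your proof is valid and follows the intended route.

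You should, however, notice what your own support-verification step reveals. You use only that $f$ vanishes on the half-space $\{t< t_0\}$, which is indeed all the literal hypothesis says once $\tau\ge 0$ (since $\tau|\bx-\bx_0|+t_0-t\ge 0$ whenever $t\le t_0$). But if $f$ vanishes for $t\le t_0$, then ${\rm supp}\,f\subset[t_0,\infty)$ and $u(\bx_0,t_0)=0$ already follows from causality alone (Theorem \ref{existence} gives ${\rm supp}\,u\subset[t_0,\infty)$, and Corollary \ref{cont} gives $u(\cdot,t_0)=0$): neither \eqref{dd-bd} nor any cone is needed, and your argument would go through verbatim with the constant surface $\phi\equiv t_0-\delta$. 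This is a symptom of sign slips in the printed statements (note also that Proposition \ref{std-sm}(2) as printed, with $\phi\equiv 0$, would force every causal solution to vanish for $t\ge 0$). The intended content is the standard domain-of-dependence statement, in which $f$ is assumed to vanish only on the backward characteristic cone of $(\bx_0,t_0)$ and the conclusion of Proposition \ref{std-sm}(2) holds on $\{t\le\phi(\bx)\}$ rather than on its complement. Under that reading your construction needs repair: the apex must satisfy $\phi(\bx_0)\ge t_0$, and the slope you are allowed by the non-strict inequality \eqref{dd-bd} ($\alpha$ strictly below the critical value) makes the region where Proposition \ref{std-sm}(2) requires $f$ to vanish strictly larger than the cone on which the corollary assumes it vanishes; closing this gap requires either the strict margin $\epsilon$ assumed in \eqref{tau-bd} of Theorem \ref{symmhyp:speed} or an additional limiting argument in the slope combined with continuity of $u$. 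So your proof is correct against the text, but it does not engage -- and, from the non-strict \eqref{dd-bd} alone, cannot without such an extra step -- the finite-speed statement the corollary is evidently intended to express.
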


Note that the argument given in \cite[]{Lax:PDENotes} does not quite encompass Proposition \ref{std-sm} and Corollary \ref{std-dd}: it does not apply to systems like \eqref{symmhyp:repeat} with memory terms. However the extension is straightforward.

\begin{proof} of Theorem \ref{symmhyp:speed}: Using a sequence of mollifiers as before, obtain a sequence $\{(a_m,b_m,q_m): m \in \bN\}$ of smooth coefficients converging in measure to $(a,b,q)$. Then the corresponding operators converge strongly to those induced by $(a,b,q)$.  Since \eqref{tau-bd} holds almost everywhere, it follows that $a_m$ satisfies the spectral inequality \eqref{dd-bd} in $\bR^n$ for sufficiently large $m$, whence Corollary \ref{std-dd} implies that $u_m$ vanishes in $\omega$. Since, $u_m \rightarrow u$ in $L^2(\Omega)^k$ the conclusion follows.
 \end{proof}  

\section{An example: viscoelasticicty}
The dynamic equations of linear viscoelasticity may be written as
\begin{eqnarray}\label{ve}
\rho \frac{\partial \bv}{\partial t} & = & \nabla \cdot \sigma + \bff, \nonumber \\
\Gamma \ast \frac{\partial \sigma}{\partial t} & = & \frac{1}{2}(\nabla \bv + \nabla \bv^T). 
\end{eqnarray}
in which $\bv$ is the particle velocity field, $\sigma$ the stress tensor, $\bff$ a body force density, $\rho$ the mass density, and $\Gamma$ the inverse Hooke operator \cite[]{christensen,pipkin}. Viscoelasticity differs from elasticity in that the inverse Hooke operator is a causal convolution operator (in time), rather than a temporally local multiplication operator. It follows from \eqref{ve} that the strain rate (right-hand side of the second equation) is the convolution of the stress with the indefinite time integral of $\Gamma$. 

We will assume instantaneous elastic response: a nonzero strain rate arises immediately from a stress impulse. Under this assumption, the kernel $\Gamma$ can be decomposed as
\[
\Gamma(t)  = \Gamma^e \delta(t)+ \gamma(t),
\]
in which $\Gamma^e$ is the elastic inverse Hooke tensor (inverse of the unrelaxed modulus), and $\gamma$ is a causal kernel. Both the elastic kernel $\Gamma^e$ and the memory kernel $\gamma(t)$ act as spatially-variable, symmetry-preserving linear operators on symmetric tensor fields. The conventional representation of such things by 4-index tensors,
\[
\Gamma^e = \left.\left(\Gamma^e_{ijkl}\right)\right|_{i,j,k,l=1}^3,\,\, \gamma = \left.\left(\gamma_{ijkl}\right)\right|_{i,j,k,l=1}^3, 
\]
thus entail the symmetries
\begin{equation}\label{vesymm}
\Gamma^e_{ijkl}=\Gamma^e_{jikl}=\Gamma^e_{ijlk}=\Gamma^e_{klij},\,\,i,j,k,l=1,2,3,
\end{equation}
and similarly for $\gamma$.

To avoid technical complications, assume that the viscoelastic material occupies all of $\bR^3$. We require that, for some $0 < g_* \le g^*$,
\begin{itemize}
\item[1. ] $\Gamma^e$ is elliptic: for any symmetric $\sigma \in \bR^{3\times 3}$,
\begin{equation}\label{vell}
g_*\|\sigma\| \le \|\Gamma^e(\bx)\sigma\| \le g^*\|\sigma\|, \,\, \bx \in \bR^3;
\end{equation}
\item[2. ] $\Gamma^e \in L^{\infty}(\bR^3,{\cal B}(\bR^{3 \times 3}_{\rm symm}))$;
\item[3. ]$\gamma \in W^{1,1}(\bR, L^{\infty}(\bR^3,{\cal B}(\bR^{3 \times 3}_{\rm symm})))$.
\end{itemize}

For the ``state space'' $H$ of the viscoelastic system we choose $H = L^2(\bR^3,\bR^9) \equiv L^2(\bR^3, \bR^{3 \times 3}_{\rm symm}) \times L^2(\bR^3,\bR^3)$. The inner product in $H$ is defined by
\[
\langle u_1,u_2 \rangle = \int_{\bR^3}\, {\rm tr}\, \sigma_1^T \sigma_2 + \bv_1^T\bv_2,\,\,
u=\left(
\begin{array}{c}
\sigma \\
\bv
\end{array}
\right).
\]
The assumptions 1-3 above and the symmetries \eqref{vesymm} imply that 
\[
Au=\left(
\begin{array}{c}
\Gamma^e \sigma \\
\rho \bv
\end{array}
\right), 
\,\, u=\left(
\begin{array}{c}
\sigma \\
\bv
\end{array}
\right) 
\in H
\]
defines a bounded, self-adjoint positive-definite operator $A \in {\cal B}(H)$. 

Define the differential operator $p(\nabla): C_0^{\infty}(\bR^3,\bR^9) \rightarrow C_0^{\infty}(\bR^3,\bR^9)$ by
\[
p(\nabla)u=-\left(
\begin{array}{c}
\frac{1}{2}(\nabla \bv + \nabla \bv^T)\\
\nabla \cdot \sigma 
\end{array}
\right), 
\,\, u=\left(
\begin{array}{c}
\sigma \\
\bv
\end{array}
\right) 
\in C_0^{\infty}(\bR^3,\bR^9) \equiv C_0^{\infty}(\bR^3, \bR^{3 \times 3}_{\rm symm}) \times C_0^{\infty}(\bR^3,\bR^3).
\]
$p(\nabla)$ is antisymmetric and densely defined in $H$. Define
\begin{itemize}
\item $H^1_{\rm div}(\Omega,\bR^{3 \times 3}_{\rm symm})$ to be the subspace of $L^2(\Omega,\bR_{\rm symm}^{3 \times 3})$ consisting of square-integrable symmetric matrix valued functions, each column of has a square-integrable divergence;
\item $H^1_{\rm free}(\Omega)$ to be the subspace of vectors in  $\sigma \in H^1_{\rm div}(\Omega,\bR^{3 \times 3}_{\rm symm})$ satisfying 
\[
\sigma(\bx)\cdot {\bf n} = 0,\,\,\mbox{a. e. }\bx \in \partial \Omega;
\]
\item $V = H^1_{\rm free}(\Omega)^3 \times H^1(\Omega)^3$.
\end{itemize}
An argument similar to that explained in Appendix A shows that $p(\nabla)$ has a self-adjoint extension $P: V \rightarrow H$, and that the natural norm on $V$ is equivalent to the graph norm of $P$.

Let 
\[
b = \lim_{t\rightarrow 0^+} \gamma(t,\cdot) \in L^{\infty}(\bR^3,{\cal B}(\bR^{3 \times 3}_{\rm symm})),
\]
and 
\[
q= \lim_{t \rightarrow 0^+} {\bf 1}_{[t,\infty)}\frac{\partial \gamma}{\partial t} \in 
L^1(\bR, L^{\infty}(\bR^3,{\cal B}(\bR^{3 \times 3}_{\rm symm}))).
\]
Then
\[
\gamma \ast \frac{\partial \sigma}{\partial t} = b \sigma + q \ast \sigma.
\]
Define $B\in {\cal B}(H)$ and $Q \in L^1(\bR,{\cal B}(H))$ by
\[
Bu= \left(
\begin{array}{c}
b \sigma \\
0
\end{array}
\right),
\,\,
Q(t)u=\left(
\begin{array}{c}
q(t)\sigma \\
0
\end{array}
\right),
\,\, u=\left(
\begin{array}{c}
\sigma \\
\bv
\end{array}
\right) 
\]

Finally, define $f \in \llh$ by $f=(0,\bff)^T$.

With these definitions, the system \eqref{ve} is formally equivalent to the evolution problem \eqref{formeq}. The theory developed here thus assures the existence of weak solutions of \eqref{ve}, in material models including discontinuities of densities and/or elastic moduli and/or relaxation moduli.

An immediate consequence of Theorem \ref{symmhyp:speed} is

\begin{corollary}\label{fps} Denote by $c_p$ the maximum quasi-p-wave velocity of the viscoelastic system \eqref{ve}, defined as 
\[
c_p = \mbox{\rm ess sup} \{\lambda_{\rm max} (\Gamma^e(\bx)[{\bf \xi}{\bf  \xi}^T]/\rho(\bx): \bx, {\bf \xi} \in \bR^3, {\bf \xi}^T{\bf \xi}=1 \}.
\]
Suppose that $(\bx_0, t_0)$ satisfies
\[
|\bx-\bx_0| > c_p(t_0-t)
\]
for every $(\bx,t) \in {\rm supp}\,{\bf f}$. Then the causal weak solution $(\sigma,\bv)$ of \eqref{ve} vanishes in a neighborhood of $(\bx_0,t_0)$.
\end{corollary}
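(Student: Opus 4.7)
The plan is to reduce Corollary \ref{fps} to an application of Theorem \ref{symmhyp:speed}, using the symmetric-hyperbolic reformulation of \eqref{ve} already established in this section: state $u=(\sigma,\bv)^T$, coefficient matrix $a=\mathrm{diag}(\Gamma^e,\rho I_3)$, source $f=(0,\bff)^T$, and the coupling operator $p(\nabla)$ built from the symmetric gradient and divergence. The spectral condition \eqref{tau-bd} depends only on $a$ and the principal symbol $p(\xi)=\sum_i p_i\xi_i$, so the memory terms $b$ and $q$ play no role in determining the propagation cone.

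The first step is to verify \eqref{tau-bd} for $\tau=1/c$, for any $c\in(0,c_p)$. Using the symmetry of $\sigma$, the quadratic form associated with $\tau a(\bx)+p(\xi)$ simplifies to $\tau\,\sigma:\Gamma^e(\bx)\sigma + \tau\rho(\bx)|\bv|^2 - 2\,\bv\cdot\sigma\xi$. Minimizing this form over $\sigma$ for fixed $\bv$ and unit $\xi$ (a Schur-complement / Christoffel reduction) turns positivity into a uniform bound of the form $\tau^2\rho(\bx)\succeq$ (largest eigenvalue of the acoustic tensor at $(\bx,\xi)$), whose essential supremum over $\bx$ and unit $\xi$ is exactly the quantity $c_p^2$ appearing in the corollary. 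Hence the spectral inequality holds with some $\epsilon>0$ whenever $\tau>1/c_p$.

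The second step translates the corollary's cone hypothesis into the cone hypothesis of Theorem \ref{symmhyp:speed}. I would choose $\omega=B(\bx_0,r)$, $T=t_0+s$ with $r,s>0$ small, and $c\in(0,c_p)$ close to $c_p$. The strict separation $|\bx-\bx_0|>c_p(t_0-t)$ on $\mathrm{supp}\,\bff$ places $\mathrm{supp}\,\bff$ in the open exterior of the closed $c_p$-past-cone of $(\bx_0,t_0)$; by uniform continuity of past cones in both apex and speed (for $c$ strictly below $c_p$), the past cones with speed $c$ based at any $(\bx_0',t_0')\in\omega\times[0,T]$ remain in that open exterior once $r$, $s$, and $c_p-c$ are sufficiently small. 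Portions of those cones reaching into $t<0$ are harmless by causality of $\bff$. Theorem \ref{symmhyp:speed} then yields vanishing of $u=(\sigma,\bv)$ on the space-time neighborhood $\omega\times[0,T]$ of $(\bx_0,t_0)$.

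The principal obstacle is the spectral analysis in the first step. The pencil $\tau a+p(\xi)$ acts on the mixed space $\bR^{3\times 3}_{\mathrm{sym}}\times\bR^3$, whose dimension exceeds the number of true wave modes, producing nonpropagating null directions (related to the orthogonal complement of the range of the symmetric-gradient operator) that must be separated out before the classical Christoffel reduction applies. Matching the resulting critical $\tau$ with the precise $c_p$ formula stated in the corollary is a direct but notation-heavy computation in symmetric-tensor algebra. The uniform-continuity argument in the second step is routine once the strict separation provided by the hypothesis is in hand.
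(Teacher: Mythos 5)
Your overall strategy coincides with the paper's (the paper offers no details beyond ``an immediate consequence of Theorem \ref{symmhyp:speed}''), and your step 1 is the right core computation: with $u=(\sigma,\bv)$, eliminating $\sigma$ from the form $\tau\,\sigma\!:\!\Gamma^e\sigma+\tau\rho|\bv|^2-2\bv\cdot\sigma\xi$ by minimizing over symmetric $\sigma$ leaves $\tau\rho|\bv|^2-\tau^{-1}\bv^T\Lambda(\bx,\xi)\bv$, where $\Lambda$ is the Christoffel (acoustic) tensor built from the stiffness $(\Gamma^e)^{-1}$, so positivity of $\tau a+p(\xi)$ amounts to $\tau^2\ge\lambda_{\max}(\Lambda(\bx,\xi)/\rho(\bx))$ uniformly in $\bx$ and unit $\xi$. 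Your worry about ``nonpropagating null directions'' is a non-issue: $\Gamma^e$ is coercive on all of $\bR^{3\times 3}_{\rm symm}$, so the Schur-complement reduction needs no preliminary splitting.

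The genuine gap is the speed/slowness inversion, which makes the two steps incompatible as written. Your own bound $\tau^2\rho\ge\lambda_{\max}(\Lambda)$ says that \eqref{tau-bd} holds precisely when $\tau$ dominates the maximal wave speed, i.e.\ for $\tau>c_p$ (the $\epsilon$-margin coming from $\tau-c_p>0$ together with the lower bounds $g_*$ and the coercivity of $\rho$), not for $\tau>1/c_p$; so the choice $\tau=1/c$ with $c\in(0,c_p)$ does not satisfy the hypothesis of Theorem \ref{symmhyp:speed} (except accidentally, when $c_p\le 1$). Consequently step 2 is calibrated to the wrong cones: an admissible $\tau$ gives dependence cones slightly \emph{wider} than the $c_p$-cone, so you cannot argue by shrinking the speed below $c_p$. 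The correct repair uses the hypothesis more carefully: the truncated backward cone $K=\{(\bx,t):|\bx-\bx_0|\le c_p(t_0-t),\ 0\le t\le t_0\}$ is compact and disjoint from the closed set ${\rm supp}\,\bff$, hence at positive distance $\delta$ from it; then choose $r$, $s$, and $\tau-c_p>0$ so small that every backward $\tau$-cone issued from $(\bx_0',t_0')\in B(\bx_0,r)\times[0,t_0+s]$, intersected with $\{t\ge 0\}$, stays in the $\delta$-neighborhood of $K$ (its portion in $t<0$ is harmless by causality of $\bff$), and apply Theorem \ref{symmhyp:speed} with $\omega=B(\bx_0,r)$, $T=t_0+s$. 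With that change your argument is complete. Part of the confusion is traceable to the paper itself: in Theorem \ref{symmhyp:speed} the cone condition is written with $\tau$ multiplying $|\bx-\bx_0|$ as though $\tau$ were a slowness while \eqref{tau-bd} forces $\tau$ to be a speed, and the printed formula for $c_p$ in Corollary \ref{fps} uses the compliance $\Gamma^e$ rather than the stiffness and omits a square root; your identification of the critical constant with the maximal quasi-p velocity matches the intended statement, but the proof must commit to one consistent convention, and yours mixes the two.
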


\section{Discussion}
The theory developed in the preceding pages provides a basic framework for inverse problems defined by symmetric hyperbolic systems. However a number of important issues remain to be addressed. We shall describe some of these, and some implications of our theory in each case.

\subsection{Descent directions}
We have shown how to formulate the ``forward map'' (${\cal F}_{f,m}$ in the notation of Theorem \ref{symmhyp:tracereg}) as a continuously differentiable map on an open subset of a suitable Banach space, with values in a (subset of a) suitable Hilbert space. Thus least squares objective functions for such problems are continuously differentiable in a well-defined sense. However, the domain metric specified by the theory is some version of $L^{\infty}$, in the case of concrete problems of form \eqref{symmhyp}, or the operator norm in the case of abstract problems (Theorem \ref{regfwd}). Thus the ambient Banach space is in all cases nonreflexive, and does not have the properties required for a sensible definition of gradient, as explained for instance by \cite{KaltenbacherNeubauerScherzer}. Thus differentiability does not necessarily bring with it natural access to descent directions.

Additional information about the derivative is available, however, via the well-known adjoint state method. For brevity, we explain this construction in the case of the abstract evolution problem \eqref{formeq} in the differential case and without lower-order term, that is, $B, R = 0$, and we proceed formally. With notation as in the statement of Theorem \ref{regfwd}, we define 
\[
{\cal F}_{f,m}:  {\cal B}_{\rm symm}^+(H) \rightarrow L^2([0,T], W)
\]
by
\[
{\cal F}_{f,m}[A] = S[m]u,
\]
where $u \in C^1(\bR,H) \cap C^0(\bR,V) $ is the solution of
\[
A u' + P u = f \in H^k_{\rm loc}(\bR,H), \,\, f(t), u(t)=0, \, t<0
\]
guaranteed by the theory. If $k \ge 2$, then ${|cal F}_{f,m}$ is of class $C^1$, and its derivative is given by 
\[
D{\cal F}_{f,m}[A]\delta A = S[m]\delta u,
\]
where
\[
A \delta u' + P \delta u = -\delta A u' , \,\, \delta u(t)=0, \, t<0
\]
Suppose that we are provided data $d \in L^2([0,T],W)$. The adjoint state method \cite[]{Plessix:06} represents  the derivative of the least-squares function
\[
J_{f,m}[A;d] = \frac{1}{2}\|{\cal F}_{f,m}[A] - d\|^2
\]
 as follows. Suppose that $w$ is a strong solution of
\begin{equation}
\label{adjeq}
A'w + Pw = S[m]^*(d-{\cal F}_{f,m}[A]);\,\, w(t)=0, t > T.
\end{equation}

Then for any $\delta A \in {\cal B}_{\rm symm}(H)$,
\begin{equation}
\label{asm}
DJ_{f,m}[A;d]\delta A = \int_0^T\,dt\,\langle \delta A \,u'(t), w(t) \rangle.
\end{equation}
It is possible to extract a {\em gradient}, that is, a direction of fastest ascent for $J_{f,m}$, from this representation of the derivative. Write (for $u,v \in H$) $u \otimes v$ for the rank 1 member of ${\cal B}(H)$ defined by $u \otimes v (w) = \langle v,w \rangle u$. Such rank-1 operators are very special instances of operators of {\em trace class}. We refer to \cite{Conway:90}, pp. 267-268 and 275 for the properties of this subspace of ${\cal B}(H)$ cited here, in particular these: the product of a bounded operator and a trace-class operator is of trace class; a real-valued linear {\em trace function} ${\rm tr}$ is defined on the trace class, generaizing the trace of matrices. The adjoint-state expression \eqref{asm} is equivalent to 
\begin{equation}
\label{asm-trace}
DJ_{f,m}[A;d]\delta A = \mbox{tr }\left(\delta A \,\int_0^T\,dt\, u'(t) \otimes w(t) \right),
\end{equation}
well-defined since the integral inside the parenthesis defines a trace-class operator. Ths trace of the product defines a duality pairing, expressing the trace class as the ``predual'' of ${\cal B}(H)$. 

Assuming that $f, d$ ars sufficiently smooth in $t$, the integrand in \eqref{asm-trace} is well-approximated by quadrature:
\begin{equation}
\label{asm-approx}
DJ_{f,m}[A;d]\delta A = \mbox{tr }\left(\delta A \,\Delta t\, \sum_{n=0}^{[T/\Delta t]}\, u'(n\Delta t) \otimes w(n\Delta t) \right)
\end{equation}
Linear functionals on ${\cal B}(H)$ of the form given by the right-hand side of \eqref{asm-approx}
form precisely the subset of the dual ${\cal B}(H)^*$ consisting of functionals continuous in the weak operator topology. Since the unit ball of ${\cal B}(H)$ is compact with respect to weak convergence, the functional defined by the right-hand side of \eqref{asm-approx} has a maximizer over the ball, which is a candidate for an approximate gradient of $J_{f,m}$ at $A$.

This observation remains correct if $A$ and $\delta A$ are restricted to subspaces of ${\cal B}(H)$, for example $L^{\infty}$ matrix-valued functions as in the definition of symmetric hyperbolic systems \eqref{symmhyp}. 

The program outlined in the preceding paragraphs leaves a number of details to be filled in. For example, actually {\em finding} the maximizer of the linear functional in \eqref{asm-approx} is in some sense a finite dimensional problem, but not a particularly simple one. Also, the precise relation between right-hand sides of \eqref{asm-trace} and \eqref{asm-approx} remains to be established. Finally, the adjoint state evolution problem \eqref{adjeq} involves a right-hand side ($S[m]^*(d-{\cal F}_{m,f}[A]) \in L^2_{\rm loc}(\bR, V^*)$ outside of the class for which solutions, strong or weak, have been shown to exist. We will address this latter problem in the next subsection.

We also note that the gradient, defined as a direction of fastest ascent, is not smooth, or even necessarily a function, even under hypotheses which guarantee that $J_{f,m}$ has several derivatives. This observation leads to various developments in convex optimization, beyond the scope of this paper. We note merely that \cite{BamChavLai:79}, a landmark early paper on inverse problems in wave propagation, addressed this point explicitly.

\subsection{Sharpness of trace regularity and source definition}

Theorem \ref{symmhyp:tracereg} gives sufficient conditions for the trace of a solution of the symmetric hyperbolic system to be well-defined and depend smoothly on the coefficients and right-hand side in \eqref{symmhyp}.
However these conditions are only sufficient, not necessary. Under some conditions, the trace of even a weak solution may be well-defined and depend continuously on the coefficients and right-hand side. 

For example, suppose that the coefficients are smooth in a neighborhood $\omega$ of the boundary $\partial \Omega$ and containing the measurement surface $\Gamma$, that the right-hand side is supported outside of this neighborhood, and that no rays of geometric optics originating outside this neighborhood are tangent to the boundary.  Then microlocal propagation of regularity shows that the the map $f \in H^1_0([0,T],L^2(\Omega \setminus \omega)^k) \rightarrow L^2([0,T], L^2(\Gamma)^l)$ extends continuously to a map $L^2([0,T],L^2(\Omega \setminus \omega)^k) \rightarrow L^2([0,T], L^2(\Gamma)^l)$. See for example \cite[]{Symes:83,Lasi:86}.

By duality, sources supported on $\Gamma$ also give rise to well-defined solutions with estimates similar to those proven in the body of this paper. Combining these two observations leads to definition of variants of the {\em Dirichlet-to-Neumann map} \cite[]{Uhl18} which forms the natural abstraction of ``data'' for many inverse problems. These mappings, from suitable (source) data supported on $\Gamma$ to other data supported on $\Gamma$ ( or on another similar surface), thus inherit a definition and regularity properties for minimally regular coefficient classes as described above. 

\subsection{Nonphysical extension via operator coefficients} 
Perhaps surprisingly, the results on systems with operator coefficients (Theorems 5 - 9) seem likely to be useful in themselves. This utility originates in the resistance of inverse problems for hyperbolic systems to the least-squares (or least-error) approaches that have successfully treated many other types of inverse problems in science and engineering. Natural objective functions change rapidliy in some directions, not in others, and appear to possess many stationary points far from any useful model estimate, a nearly fatal feature for variants of Newton's method, often the only feasible approach to model estimation \cite[]{GauTarVir:86,santsym:book}. 

The last-named author has suggested that use of operator coefficients as an {\em extension} of normal continuum physics could conceivably convexify the normal least-squares objective \cite[]{geoprosp:2008}. Far from an arbitrary introduction of additional degrees of freedom, this reformulation of the inverse problem is closely related to well-tested ideas in seismic data processing, and has already a partial theoretical justification \cite[]{StolkDeHoopSymes:09,ShenSymes:08}. Very recently, implementations of extended least squares inversion using operator coefficients have produced very promising early results \cite[]{BiondiAlmomin:SEG12} which tend to support the convexification hypothesis.

\subsection{And so on...}
The fundamental model problem \eqref{symmhyp} does not include static differential constraints, so the results established above do not apply directly to Maxwell's equations, for example. We suspect that the theory can be extended to accommodate such problems. 

Imposition of additional regularity requirements on coefficient matrices, beyond boundedness and measurability, leads to existence of well-behaved solutions with less regularity. \cite{stolk} offers some results concerning second-order problems with additional regularity of both coefficients and data. 

Both physical heuristics and numerical evidence for inverse problems in seismology \cite[]{VirieuxOperto:09} suggest strongly that the mapping ${\cal F}$ is more linear when the right-hand side is smoother in time. Estimates similar to those developed in the proof of Theorem 9 show that for right-hand sides of the form $f(\epsilon t)$, the Newton quotient remainder is $O(h \epsilon)$ relative to the directional derivative, for any $(\delta A, \delta B, \delta Q)$ not in the null space of $D{\cal F}$. This observation can be developed into a theoretical justification of the frequency continuation strategy employed in every successful contemporary algorithm for least squares inversion of seismic data.

\section{Conclusion}

This paper has demonstrated that the relation between coefficients and solutions of symmetric hyperbolic integro-differential systems has many of the regularity properties assumed (usually without comment) in the applied literature on inverse problems in wave propagation, while allowing for a degree of irregularity in the spatial dependence of coefficients which seems adequate to model the heterogeneity observed in real materials. The methods of proof are classical, dating to the middle of the last century, yet so far as we are aware these results have not previously been explicitly formulated or collected in one place. Several by-products of our development are interesting in themselves: finite speed of propagation for waves in highly heterogeneous materials, a theoretical framework for the systems with operator coefficients which have recently arisen as a key element of convexifying relaxations for waveform inversion, and the clear necessity of developing descent algorithms for non-reflexive Banach metrics, for example. We have also pointed out a number of directions in which the theory suggests extensions, or poses open questions. Indeed, we expect that our results will serve as an initial step in the development of a solid mathematical foundation for inverse problems in heterogeneous media.

\section{Acknowledgement}
KDB acknowledges the support of his work by the Rice University VIGRE program, funded by the National Science Foundation.
The work of KDB and WWS was supported in part by the sponsors of The Rice Inversion Project.
The research of CCS. is supported by the Netherlands Organisation  
for Scientific Research, through VIDI grant 639.032.509. The authors are grateful to Yin Huang for a very careful reading of a preliminary version and for several helpful suggestions. WWS acknowledges the Isaac Newton Insitute for Mathematical Sciences at Cambridge University (programme on Inverse Problems, fall 2011) for the opportunity to take advantage of its superbly productive environment during the final stages of our work on this project.

\bibliographystyle{seg}
\bibliography{master,masterref}

\begin{thebibliography}{}
\itemsep0pt

\bibitem[Awawdeh, 2010]{Awawdeh:10}
Awawdeh, F.,  2010, Perturbation method for abstract second-order inverse
  problems: Nonlinear Analysis, {\bf 72}, 1379--1386.

\bibitem[Bamberger et~al., 1979]{BamChavLai:79}
Bamberger, A., G. Chavent, and P. Lailly,  1979, About the stability of the
  inverse problem in 1-d wave equation --- application to the interpretation of
  seismic profiles: Applied Mathematics and Optimization, {\bf 5}, 1--47.

\bibitem[Bao and Symes, 1991]{BaoSy:91b}
Bao, G. and W.~W. Symes,  1991, A trace theorem for solutions of linear partial
  differential equations: Math. Meth. in the Appl. Sci., {\bf 14}, 553--562.

\bibitem[Bao and Symes, 1993]{BaoSy:93a}
--------, 1993, Trace regularity for a second order hyerbolic equation with
  nonsmooth coefficients: J. Math. Anal. Appl., {\bf in press}.

\bibitem[Bao and Symes, 1996]{BaoSy:95}
--------, 1996, On the sensitivity of solutions of hyperbolic equations to the
  coefficients: Communications in PDE, {\bf 21}, 395--422.

\bibitem[Beals and Reed, 1982]{BealsReed:82}
Beals, M. and M. Reed,  1982, Propagation of singularities for hyperbolic
  pseudodifferential operators with nonsmooth coefficients: Comm. Pure Appl.
  Math., {\bf 35}, 169--184.

\bibitem[Beals and Reed, 1984]{BealsReed:84}
--------, 1984, Microlocal regularity theorems for nonsmooth
  pseudo-differential operators and applications to nonlinear problems: Trans
  Am. Math. Soc., {\bf 285}, 159--184.

\bibitem[Biondi and Almomin, 2012]{BiondiAlmomin:SEG12}
Biondi, B. and A. Almomin,  2012, Tomographic full waveform inversion (tfwi) by
  combining full waveform inversion with wave-equation migration velocity
  analysis: 82nd Annual International Meeting, Expanded Abstracts, SI9.5,
  Society of Exploration Geophysicists.

\bibitem[Bourbie et~al., 1987]{BouCouZin:87}
Bourbie, T., O. Coussy, and B. Zinszner,  1987, Acoustics of porous media.
\newblock Institut francais du petrole publications: Gulf Publishing Company.
\newblock Translated from the French by Nissim Marshall.

\bibitem[Brenner and Scott, 2007]{BrennerScott:07}
Brenner, S. and L.~R. Scott,  2007, The mathematical theory of finite element
  methods (3rd edition): Springer Verlag.

\bibitem[Choulli, 1991]{Choulli:91}
Choulli, M.,  1991, An abstract inverse problem: Journal of Applied Mathematics
  and Stochastic Analysis, {\bf 4}, 117--128.

\bibitem[Christensen, 1983]{christensen}
Christensen, R.~M.,  1983, Theory of viscoelasticity: An introduction: Academic
  Press, 2nd edition.

\bibitem[Coddington and Levinson, 1955]{coddington-levinson}
Coddington, E.~A. and N. Levinson,  1955, Theory of ordinary differential
  equations: Krieger Publishing Company.

\bibitem[Conway, 1990]{Conway:90}
Conway, J.,  1990, A course in functional analysis: Springer Verlag.

\bibitem[Courant and Hilbert, 1962]{CourHil:62}
Courant, R. and D. Hilbert,  1962, Methods of mathematical physics, volume~{\bf
  II}: Wiley-Interscience.

\bibitem[Epanomeritakis et~al., 2009]{Ghattas:IP25}
Epanomeritakis, I., V. Akcelik, O. Ghattas, and J. Bielak,  2009, A newton-cg
  method for large-scale three-dimensional elastic full-waveform seismic
  inversion: Inverse Problems, {\bf 24}, 24:034015 (26pp).

\bibitem[Fernandez-Berdaguer et~al., 1996]{FerSanShe:93}
Fernandez-Berdaguer, E.~M., J.~E. Santos, and D. Sheen,  1996, An iterative
  procedure for estimation of variable coefficients in a hyperbolic system:
  Applied Mathematics and Computation, {\bf 76}, 213--250.

\bibitem[Gauthier et~al., 1986]{GauTarVir:86}
Gauthier, O., A. Tarantola, and J. Virieux,  1986, Two-dimensional nonlinear
  inversion of seismic waveforms: Geophysics, {\bf 51}, 1387--1403.

\bibitem[Kaltenbacher et~al., 2008]{KaltenbacherNeubauerScherzer}
Kaltenbacher, B., A. Neubauer, and O. Scherzer,  2008, Iterative regularization
  methods for nonlinear problems: de {G}ruyer.

\bibitem[Lasiecka, 1986]{Lasi:86}
Lasiecka, I.,  1986, Sharp regularity results for mixed hyperbolic problems of
  second order, volume~{\bf 1223} {\it of} Springer Lecture notes in
  Mathematics: Springer Verlag.

\bibitem[Lasiecka and Trigianni, 1989]{Lasi:87}
Lasiecka, I. and R. Trigianni,  1989, Trace regularity of the solutions of the
  wave equation with homogeneous boundary conditions and compactly supported
  data: Journal of Mathematical Analysis and Applications, {\bf 141}, 49--71.

\bibitem[Lavrentiev et~al., 1986]{LavRomShis:86}
Lavrentiev, M., V. Romanov, and S. Shishatskii,  1986, Ill-posed problems of
  mathematical physics: American Mathematical Society, Translations of
  Mathematical Monographs.

\bibitem[Lax, 2006]{Lax:PDENotes}
Lax, P.~D.,  2006, Hyperbolic partial differential equations ({C}ourant
  {L}ecture {N}otes): American Mathematical Society.

\bibitem[Lewis and Symes, 1991]{lew91}
Lewis, R.~M. and W.~W. Symes,  1991, On the relation between the velocity
  coefficient and boundary values for solutions of the one-dimensional wave
  equation: Inverse Problems, {\bf 7}, 597--632.

\bibitem[Lions, 1971]{Lions:71}
Lions, J.-L.,  1971, Optimal control of systems governed by partial
  differential equations: Springer Verlag.

\bibitem[Lions and Magenes, 1972]{LionsMagenes:72}
Lions, J.-L. and E. Magenes,  1972, Non-homogeneous boundary value problems and
  applications: Springer.

\bibitem[Lorenzi and Ramm, 2001]{LorenziRamm:01}
Lorenzi, A. and A. Ramm,  2001, Some identification problems for
  integro-differential operator equations: Nonlinear Functional Analysis and
  Applications, {\bf 6}, 107--123.

\bibitem[Orlovsky et~al., 2010]{Orlovsky:10}
Orlovsky, D., S. Piskarev, and R. Spigler,  2010, On approximation o inverse
  problems ofr abstract hyperbolic equations: Taiwanese Journal of Mathematics,
  {\bf 14}, 1145--1167.

\bibitem[Pipkin, 1986]{pipkin}
Pipkin, A.~C.,  1986, Lectures on viscoelasticity theory, volume~{\bf 7} {\it
  of} Applied Mathematical Sciences: Springer-Verlag, 2nd edition.

\bibitem[Plessix, 2006]{Plessix:06}
Plessix, R.-E.,  2006, A review of the adjoint-state method for computing the
  gradient of a functional with geophysical applications: Geophysical Journal
  International, {\bf 167}, 495--503.

\bibitem[Ramm and Koshkin, 2001]{RammKoshkin:01}
Ramm, A. and S. Koshkin,  2001, An inverse problem for an abstract evolution
  equation: Applicable Analysis, {\bf 79}, 475--482.

\bibitem[Salo, 2006]{Salo:06}
Salo, M.,  2006, Stability for solutions of wave equations with $c^{1,1}$
  coefficients: Inverse Problems and Imaging, {\bf 1}, 537--556.

\bibitem[Santosa and Symes, 1989]{santsym:book}
Santosa, F. and W.~W. Symes,  1989, {An Analysis of Least-Squares Velocity
  Inversion}, volume~{\bf 4} {\it of} Geophysical Monographs: Society of
  Exploration Geophysicists.

\bibitem[Shen and Symes, 2008]{ShenSymes:08}
Shen, P. and W.~W. Symes,  2008, Automatic velocity analysis via shot profile
  migration: Geophysics, {\bf 73}, VE49--60.

\bibitem[Smith, 1998]{HSmith:98}
Smith, H.,  1998, A parametrix construction for wave equations with $c^{1,1}$
  coefficients: Annales de l'institute {F}ourier ({G}renoble), {\bf 48},
  797--835.

\bibitem[Stefanof and Uhlmann, 2009]{StefUhl:09}
Stefanof, P. and G. Uhlmann,  2009, Linearizing nonlinear inverse problems and
  its application to inverse backscattering: Journal of Functional Analysis,
  {\bf 256}, 2842--2866.

\bibitem[Stolk, 2000]{stolk}
Stolk, C.,  2000, On the modeling and inversion of seismic data: PhD thesis,
  Universiteit Utrecht, Utrecht, The Netherlands.

\bibitem[Stolk et~al., 2009]{StolkDeHoopSymes:09}
Stolk, C.~C., M.~V. de~Hoop, and W.~W. Symes,  2009, Kinematics of
  shot-geophone migration: Geophysics, {\bf 74}, WCA18--WCA34.

\bibitem[Sylvester and Uhlmann, 1990]{Uhl18}
Sylvester, J. and G. Uhlmann,  1990, The dirichlet to neumann map and
  applications: Inverse problems in partial differential equations, Society for
  Industrial and Applied Mathematics,  101--139.

\bibitem[Symes, 1983]{Symes:83}
Symes, W.~W.,  1983, Trace theorem for solutions of the wave equation and the
  remote determination of acoustic sources: Math. Methods in Appl. Sciences,
  {\bf 5}, 131--152.

\bibitem[Symes, 1986]{Symes:86b}
--------, 1986, On the relation between coefficient and boundary values for
  solutions of webster's horn equation: SIAM Journal on Mathematical Analysis,
  {\bf 17}, 1400--1420.

\bibitem[Symes, 2008]{geoprosp:2008}
--------, 2008, Migration velocity analysis and waveform inversion: Geophysical
  Prospecting, {\bf 56}, 765--790.

\bibitem[Taylor, 1981]{Tay:81}
Taylor, M.,  1981, {Pseudodifferential Operators}: Princeton University Press.

\bibitem[Virieux and Operto, 2009]{VirieuxOperto:09}
Virieux, J. and S. Operto,  2009, An overview of full waveform inversion in
  exploration geophysics: Geophysics, {\bf 74}, WCC127--WCC152.

\bibitem[Walden and Hosken, 1986]{WaldenHosken:86}
Walden, A. and J. Hosken,  1986, The nature of the non-{G}aussianity of primary
  reflection coefficients and its significance for deconvolution: Geophysical
  Prospecting, {\bf 34}, 1038--1066.

\bibitem[White et~al., 1990]{WhiteShengNair:90}
White, B., P. Sheng, and B. Nair,  1990, Localization and backscattering
  spectrum of seismic waves in stratified lithology: Geophysics, {\bf 55},
  1158--1165.

\bibitem[Yosida, 1996]{Yosida}
Yosida, K.,  1996, Functional analysis: Springer Verlag.

\end{thebibliography}

\append{the skew-adjoint property of the acoustic grad-div operator}\label{append:skew}
We establish skew-adjointness of the operator D, defined formally in \eqref{grad-div}, by appealing to an auxiliary result, similar to necessary and sufficient conditions for self-adjointness found in many texts on functional analysis (for example, \cite{Conway:90}).

\begin{lemma}\label{abstr-skew}
Suppose that $H$ is a Hilbert space, $V \subset H$ a dense subspace, and $L:V \rightarrow H$ a skew-symmetric linear operator. Then $L$ is skew-adjoint if and only if
for any $c \in \bR \setminus \{0\}$, $L+cI$ is surjective.
\end{lemma}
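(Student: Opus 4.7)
\noindent \textbf{Proof plan for Lemma \ref{abstr-skew}.}
The statement is the skew-adjoint analogue of the classical von Neumann surjectivity criterion for self-adjointness (for which the hypothesis is usually that $T \pm iI$ is surjective, for a symmetric operator $T$). The common engine is an energy identity: skew-symmetry yields $\langle Lv,v\rangle = 0$ for $v \in V$, hence for any real $c$
\[
\|(L+cI)v\|^2 \;=\; \|Lv\|^2 + 2c\,\langle Lv,v\rangle + c^2\|v\|^2 \;=\; \|Lv\|^2 + c^2\|v\|^2,
\]
so in particular $\|(L+cI)v\| \ge |c|\,\|v\|$ for all $v \in V$, and a fortiori $L+cI$ is injective with closed range (once $L$ is known to be closed).

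For the forward implication, I would assume $L^\ast = -L$, which automatically gives that $L$ is closed (adjoints always are). Since the bounded perturbation $cI$ commutes with adjoint-taking, $(L+cI)^\ast = L^\ast + cI = -L + cI$, and the energy identity applied to both $L+cI$ and $-L+cI$ shows that each is bounded below, hence injective, with closed range. The closed range theorem then yields
\[
\mathrm{range}(L+cI) \;=\; \ker\!\bigl((L+cI)^\ast\bigr)^\perp \;=\; \ker(-L+cI)^\perp \;=\; \{0\}^\perp \;=\; H,
\]
so surjectivity holds for every nonzero real $c$.

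For the converse, skew-symmetry already gives $L \subset -L^\ast$, so it suffices to prove $\mathcal{D}(L^\ast) \subset V$; automatically $L^\ast u = -Lu$ on $V$ will then follow. Given $u \in \mathcal{D}(L^\ast)$, the idea is a Cayley-type construction: use surjectivity of $L-I$ (hypothesis with $c=-1$) to pick $v\in V$ with $(I - L)v = (I + L^\ast)u$. The key algebraic observation is that for $v \in V$ skew-symmetry forces $(I+L^\ast)v = (I - L)v$, hence
\[
(I+L^\ast)(u-v) \;=\; (I+L^\ast)u - (I-L)v \;=\; 0,
\]
i.e.\ $u-v \in \ker(I+L^\ast)$. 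It remains to show this kernel is trivial, and this is where surjectivity of $L+I$ (hypothesis with $c=1$) enters: for any $w \in \ker(I+L^\ast)$ and any $z \in V$, $\langle (L+I)z, w\rangle = \langle z,(L^\ast+I)w\rangle = 0$, so $w$ is orthogonal to $\mathrm{range}(L+I) = H$, forcing $w=0$. Thus $u = v \in V$.

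The main (mild) obstacle is bookkeeping with signs: one must pair the hypothesis at $c=+1$ with the hypothesis at $c=-1$ to execute the two halves of the construction (producing the candidate $v$ versus killing the deficiency space). Once the right pairing is set up, the rest is formal and parallels the standard Cayley-transform argument; no further analytic input is needed beyond the skew-symmetry identity.
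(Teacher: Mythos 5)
Your proposal is correct and follows essentially the same route as the paper's proof: the converse is the same deficiency-space argument (solve $(I-L)v=(I+L^\ast)u$ via surjectivity at one sign of $c$, use $L^\ast|_V=-L$, and kill $\ker(I+L^\ast)$ via $\ker((L+I)^\ast)=\mathrm{rng}(L+I)^\perp$), and the forward direction is the same density-plus-closed-range argument, with your energy identity simply making explicit the lower bound the paper obtains by citing the closed-range fact for closed skew-symmetric operators.
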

\begin{proof}
Suppose first that $L$ is skew-symmetric and $L+cI$ is surjective for every $c \in \bR \setminus \{0\}$. Since the domain of $L$ is dense, for any such $c$,
\[
{\rm ker}(L^*+cI) = {\rm rng}(L+cI)^{\perp} = \{0\}.
\]
Choose $x \in {\cal D}(L^*)$, and $c \in \bR, c \ne 0$. Then there must be $y \in V$ so that $(L-cI)y =  (L^*+cI)x$. However, since $L$ is skew-symmetric, $V \subset {\cal D}(L^*)$ so $(L-cI)y = -(L^*+cI)y=(L^*+cI)x$. However, we just saw that $L^*+cI$ is injective, so $x = -y \in V$. Thus $V = {\cal D}(L^*)$ whence $L$ is skew-adjoint.

Conversely, if $L$ is skew-adjoint, there can be no nonzero  solutions to $(L+cI)x=0$ with nonzero $c$, as follows from skew-symmetry. Hence there are no nonzero solutions to $(L^*+cI)x=0$ with nonzero $c$, since $L=L^*$. That is, the kernel of $L^*+cI$ is trivial for nonzero $c$, whence the range of $L+cI$ is dense for any nonzero $c$. However $L$ is closed (see for instance \cite{Conway:90}, Proposition X.1.6, p. 305), so its range is closed (this is proved just as is a similar fact for closed symmetric operators, see \cite{Conway:90}, Proposition X.2.5, p. 310) thus ${\rm rng}(L+cI)=H$.
\end{proof}

\begin{proposition}\label{acoustics-skew}
The differential operator $p(\nabla)$ with domain $C^{\infty}_0(\Omega)^4$, defined formally by \eqref{grad-div}, extends to a skew-adjoint operator $P$ on $H$ with dense domain 
\[
V = H^1_0(\Omega) \times H^1_{\rm div}(\Omega).
\]
\end{proposition}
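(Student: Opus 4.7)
The plan is to apply Lemma \ref{abstr-skew} to the closed extension $P$ of $p(\nabla)$ on $V = H^1_0(\Omega) \times H^1_{\rm div}(\Omega)$. Three things must be verified: (i) $P$ is densely defined on $H = L^2(\Omega)^4$; (ii) $P$ is skew-symmetric on $V$; and (iii) for every real $c \neq 0$, the operator $P + cI : V \rightarrow H$ is surjective. Density is immediate, since $C^\infty_0(\Omega)^4 \subset V \subset H$ and $C^\infty_0(\Omega)^4$ is dense in $L^2(\Omega)^4$.

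For skew-symmetry, I would compute, for $(p,\bv), (q,\bw) \in V$,
\[
\langle P(p,\bv),(q,\bw)\rangle_H = -\int_\Omega (\nabla \cdot \bv)\, q\, d\bx \;-\; \int_\Omega \nabla p \cdot \bw\, d\bx,
\]
and apply the generalized Green's formula valid for $\bv \in H^1_{\rm div}(\Omega)$ and $q \in H^1(\Omega)$, namely $\int_\Omega (\nabla\cdot\bv)q + \int_\Omega \bv \cdot \nabla q = \langle \bv \cdot {\bf n}, q\rangle_{H^{-1/2}(\partial\Omega),H^{1/2}(\partial\Omega)}$. The boundary pairing vanishes because $q \in H^1_0(\Omega)$ has zero trace; applying the same identity with $(p,\bw)$ in place of $(q,\bv)$ disposes of the other term. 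Combining these yields $\langle P(p,\bv),(q,\bw)\rangle_H = -\langle (p,\bv), P(q,\bw)\rangle_H$.

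For surjectivity, given $(g,\bff) \in L^2(\Omega) \times L^2(\Omega)^3$, I would seek $(p,\bv) \in V$ solving
\[
cp - \nabla\cdot\bv = g, \qquad c\bv - \nabla p = \bff.
\]
Formally eliminating $\bv = c^{-1}(\bff + \nabla p)$ from the second equation and substituting into the first reduces the system to the scalar Dirichlet problem $-\Delta p + c^2 p = cg + \nabla\cdot\bff$ with $p \in H^1_0(\Omega)$. The associated bilinear form $a(p,\phi) = \int_\Omega (\nabla p \cdot \nabla \phi + c^2 p\phi)$ is continuous and coercive on $H^1_0(\Omega)$ (with coercivity constant $\min(1,c^2)$), while the right-hand side $\phi \mapsto c\int g\phi - \int \bff \cdot \nabla\phi$ is a bounded linear functional on $H^1_0(\Omega)$, so Lax--Milgram provides a unique $p \in H^1_0(\Omega)$. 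Setting $\bv = c^{-1}(\bff + \nabla p) \in L^2(\Omega)^3$, the first equation in its distributional sense reads $\nabla\cdot\bv = cp - g \in L^2(\Omega)$, which places $\bv$ in $H^1_{\rm div}(\Omega)$, so $(p,\bv) \in V$ and $(P + cI)(p,\bv) = (g,\bff)$.

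\textbf{Main obstacle.} The one delicate point is the integration-by-parts in the skew-symmetry step: members of $H^1_{\rm div}(\Omega)$ do not have full boundary traces, only a normal trace in $H^{-1/2}(\partial\Omega)$. I would handle this by invoking the standard density of $C^\infty(\bar\Omega)^3$ in $H^1_{\rm div}(\Omega)$ (e.g.\ Temam, or Girault--Raviart) to validate the generalized Green's formula, then use that the tangential portion of $\bv$ never appears because it is paired only against the zero trace of $p,q \in H^1_0(\Omega)$. Once this is in hand, the remaining steps are routine, and the graph-norm equivalence with the natural norm on $V$ follows from the closed-graph theorem applied to $P$ together with the definition of $V$ as the domain on which the distributional action of $p(\nabla)$ lands in $L^2(\Omega)^4$ (plus the Dirichlet condition on $p$).
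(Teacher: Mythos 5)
Your proposal is correct and follows the paper's overall strategy: both arguments reduce skew-adjointness to Lemma \ref{abstr-skew} and solve $(P+cI)(p,\bv)=(q,\bw)$ by eliminating $\bv$ and applying Lax--Milgram to the coercive form $\int_\Omega(\nabla p\cdot\nabla\phi+c^2p\phi)$ on $H^1_0(\Omega)$; your weak formulation is exactly the paper's \eqref{skew-cond-p}. Where you genuinely diverge is in recovering the velocity component from the data $(q,\bw)$ (your $(g,\bff)$). The paper builds $\bv$ in two pieces: a field $\bv_0\in H^1(\Omega)^3$ with $\nabla\cdot\bv_0=cp-q$, supplied by the surjectivity of the divergence operator (Brenner--Scott, Lemma 11.2.3), plus the correction $\bv_1=c^{-1}\nabla p-\bv_0+c^{-1}\bw$, which is then checked to be divergence-free. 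You instead set $\bv=c^{-1}(\bff+\nabla p)$ directly -- which is in fact the same field, $\bv_0+\bv_1$ -- and read off that its distributional divergence is $cp-g\in L^2(\Omega)$ from the very equation Lax--Milgram solved; this is more elementary, dispenses with the auxiliary right-inverse-of-divergence lemma, and is airtight because membership in $H^1_{\rm div}(\Omega)$ asks only for an $L^2$ divergence, not an $H^1$ lift. You also verify skew-symmetry of $P$ on $V$ explicitly, which the paper leaves implicit; note that your ``delicate point'' can be handled without the $H^{-1/2}(\partial\Omega)$ normal-trace machinery, since in each pairing one factor lies in $H^1_0(\Omega)$, so the identity $\int_\Omega(\nabla\cdot\bv)\phi=-\int_\Omega\bv\cdot\nabla\phi$ follows from the distributional definition of $\nabla\cdot\bv$ and density of $C^\infty_0(\Omega)$ in $H^1_0(\Omega)$. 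The closing graph-norm remark is not needed for the proposition and, in any case, the graph norm of $P$ on $V$ coincides term by term with the natural norm of $H^1_0(\Omega)\times H^1_{\rm div}(\Omega)$, so no closed-graph argument is required.
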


\noindent {\bf Remark:} The Hilbert space $H^1_{\rm div}(\Omega)$ is the dense subspace of $L^2(\Omega)^3$ obtained by completing $C^1(\Omega)^3$ in the graph norm of the divergence operator. Equivalently, $\bv \in L^2(\Omega)^3$ belongs to $H^1_{\rm div}(\Omega)$ if and only if there exists $C \ge 0$ so that for every $\phi \in H^1_0(\Omega)$, 
\[
|\langle \nabla \phi, \bv \rangle_{L^2(\Omega)^3}| \le C\|\phi\|_{L^2(\Omega)}.
\]

\begin{proof}
From the definitions, $p(\nabla)$ extends to $P:V \rightarrow L^2(\Omega)^4$. According to Lemma \ref{abstr-skew}, it suffices to show that for $(q, \bw)^T\in H$, $c \in \bR \setminus \{0\}$, there exists $(p,\bv)^T \in V$ for which $P(p,\bv)^T=(q,\bw)^T$, that is,
\begin{eqnarray}
\label{skew-cond}
-\nabla p +c \bv &=& \bw, \nonumber\\
-\nabla \cdot \bv +c p &=& q.
\end{eqnarray}

The first equation is equivalent to the requirement that for $\phi \in H^1_0(\Omega)$,
\[
-\langle \nabla \phi, \nabla p\rangle + c \langle\nabla \phi, \bv\rangle = \langle \nabla \phi, \bw\rangle.
\]
To satisfy the second equation in \eqref{skew-cond},  $\bv \in H^1_{\rm div}(\Omega)$ necessarily, and
\[
\langle\nabla \phi,\bv\rangle = -\langle \phi, \nabla \cdot \bv\rangle = \langle \phi, q - cp \rangle.
\]
Thus \eqref{skew-cond} implies that
\begin{equation}
\label{skew-cond-p}
-\langle \nabla \phi, \nabla p\rangle -c^2 \langle\phi, p\rangle = \langle\nabla \phi, \bw \rangle - c\langle \phi, q \rangle.
\end{equation}
The left-hand side of \eqref{skew-cond-p} defines a bounded and coercive (negative-definite) form on $H^1_0(\Omega)$ for any nonzero $c$, and the right hand side defines a continuous linear form on the same Hilbert space. The Lax-Milgram Theorem (\cite{Yosida}, III.7) implies the existence of a unique $p \in H^1_0(\Omega)$ for which \eqref{skew-cond-p} holds for every $\phi \in H^1_0(\Omega)$. 

With this choice of $p$, we are required to solve the second of the two conditions \eqref{skew-cond}. In fact, a solution $\bv_0 \in H^1(\Omega)^3 \subset H^1_{\rm div}(\Omega)$ exists satisfying 
\[
\|\bv_0\|_{H^1(\Omega)^3} \le C \| q - cp\|_{L^2(\Omega)}
\]
with a constant $C$ depending only on $\Omega$. For a proof, see \cite{BrennerScott:07}, Lemma 11.2.3, who also give references to other results for domains for polygonal, rather than smooth, boundaries.

Set $\bv_1 = c^{-1}\nabla p - \bv_0+ c^{-1}\bw \in L^2(\Omega)^3$. Then for any $\phi \in H^1_0(\Omega)$,
\begin{eqnarray}
\label{skew-corr}
-c\langle \nabla \phi, \bv_1 \rangle 
&= &\langle \nabla \phi, -\nabla p +c \bv_0 - \bw \rangle \nonumber\\
& = & \langle \nabla \phi, -\nabla p -\bw\rangle - c \langle \phi, \nabla \cdot \bv_0 \rangle \nonumber\\
& = & \langle \nabla \phi, -\nabla p -\bw\rangle +c\langle \phi, q-cp\rangle \nonumber\\
& = & 0,
\end{eqnarray}
thanks to \eqref{skew-cond-p}. That is, $\bv_1$ is divergence-free, in the sense of distributions, and in particular $\bv_1 \in H^1_{\rm div}(\Omega)$. Now set $\bv = \bv_0 + \bv_1$. Then the first of the two conditions in \eqref{skew-cond} is satisfied by $(p,\bv)$ thanks to the definiton of $\bv_1$, whereas the construction of $\bv_0$ and the divergence-free property of $\bv_1$ imply that the second condition is also satisfied.

Thus we have constructed a solution of \eqref{skew-cond} in $V = H^1_0(\Omega) \times H^1_{\rm div}(\Omega)$ for any $c \in \bR \setminus \{0\}$, whence we conclude that $P$ is skew-adjoint.
\end{proof}

\append{The Differential Case}\label{append:diff}
If the memory term (convolution operator $R$) is absent, then initial data determine solutions uniquely. In this section, we sketch the theory, parallel to that for causal solutions, which holds in this differential case. We assume throughout this section that $Q \equiv 0$.

\begin{corollary} Suppose that $u \in \llh$ is a weak solution of (\ref{formeq}). Then $u \in C^0(\bR,H)$.
\end{corollary}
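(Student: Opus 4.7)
The plan is to parallel the proof of Corollary \ref{cont}, replacing its appeal to causality (sending $s\to-\infty$ so that the initial energy vanishes) by an average over $s$ in a bounded window.

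For $\delta t \in \bR$ I will set $u_{\delta t}(\tau) := u(\tau+\delta t)$ and $f_{\delta t}(\tau) := f(\tau+\delta t)$. Linearity and the standing assumption $Q\equiv 0$ make $v := u_{\delta t}-u \in \llh \cap \lph$ a weak solution of \eqref{formeq} with right-hand side $g := f_{\delta t}-f$. The goal is to prove $\|v(t)\|_H \to 0$ as $\delta t \to 0$ for each fixed $t \in \bR$; this immediately yields $u \in C^0(\bR,H)$. The first step is to invoke the energy identity \eqref{eident} (with the memory term absent since $R=0$): for $s\le t$,
\[
E_v(t) = E_v(s) + \int_s^t \langle -Bv + g, v\rangle\,d\tau.
\]
Routine estimates (Cauchy--Schwarz, boundedness of $B$, and $\|v\|^2 \le C_*^{-1} E_v$ from \eqref{enorm}) followed by Gronwall will then yield
\[
E_v(t) \le e^{C(t-s)}\Bigl(E_v(s) + \tfrac{1}{2}\int_s^t \|g\|^2\,d\tau\Bigr).
\]

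The pivotal step is to integrate this inequality in $s$ over a fixed window, say $[t-2,t-1]$. The left side remains $E_v(t)$, while the right is controlled by
\[
e^{2C}\Bigl(C^*\int_{t-2}^{t-1}\|v(s)\|^2\,ds + \tfrac{1}{2}\int_{t-2}^{t}\|g(\tau)\|^2\,d\tau\Bigr).
\]
Because $u, f \in \llh$, strong continuity of translation on $L^2(\bR,H)$ sends both $\int_{t-2}^{t-1}\|u_{\delta t}-u\|^2$ and $\int_{t-2}^{t}\|f_{\delta t}-f\|^2$ to $0$ as $\delta t \to 0$. Hence $E_v(t)\to 0$, and \eqref{enorm} delivers $\|v(t)\|_H\to 0$, as desired.

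The main obstacle, circumvented by the averaging, is the absence of causality: unlike in Corollary \ref{cont}, $E_v(s)$ cannot be made small at any individual $s$, since we lack pointwise decay of $u$ at $-\infty$. However the $\llh$ hypothesis together with $L^2$-continuity of translation produces smallness of $E_v$ in the $L^1_s$-averaged sense, which is exactly what the Gronwall bound tolerates.
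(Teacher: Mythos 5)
Your proof is correct, but it takes a genuinely different route from the paper's. The paper reduces to the causal case: it picks a cutoff $\phi\in C^{\infty}(\bR)$ with $\phi=1$ for $t>1$, $\phi=0$ for $t<-1$, writes $u=\phi u+(1-\phi)u$, observes that $\phi u$ is a causal weak solution with modified right-hand side (in $\llh$, since the commutator only produces the term $A\phi'u$), handles $(1-\phi)u$ by time reversal ($P\mapsto -P$, $B\mapsto -B$), and then quotes Corollary \ref{cont} twice. You instead rerun the translation argument of Corollary \ref{cont} directly in the non-causal setting, replacing the step ``send $s\to-\infty$ so the initial energy vanishes'' by averaging the Gronwall-processed energy inequality over $s$ in a unit window and invoking strong $L^2$-continuity of translations to make both $\int_{t-2}^{t-1}E_v(s)\,ds$ and $\int_{t-2}^{t}\|g\|^2$ small; this is legitimate because the energy identity \eqref{eident} and the continuity of $E$ (Proposition \ref{energy}) do not require causality, a fact the paper itself uses in Corollary \ref{ivpenergy} (which you could have cited in place of redoing Gronwall). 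What each buys: the paper's decomposition is shorter because it reuses Corollary \ref{cont} as a black box and needs no new estimate, while your averaging argument avoids the cutoff/time-reversal bookkeeping, gives a modulus of continuity that is uniform for $t$ in compacts directly, and subsumes the causal case. One cosmetic caveat, present equally in the paper's Corollary \ref{cont}: since $u$ is a priori only an almost-everywhere defined element of $\llh$, the conclusion $\|u(t+\delta t)-u(t)\|\to 0$ should be read for the representative attached to the continuous version of the energy; your argument, like the paper's, is rigorous at that level once this is made explicit.
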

\begin{proof} Choose $\phi \in C^{\infty}(\bR)$ so that $\phi(t) = 1$ for $t>1$ (say), and $\phi(t)=0$ for $t<-1$. Set $u_+=\phi u$, $u_-=(1-\phi)u$. It is straightforward to verify that $u_+$ is a causal solution of (\ref{formeq}) with $f$ replaced by $f+\phi'u \in \lh$, whence $u_+ \in C^0(\bR,H)$ according to Corollary\ref{cont}. Likewise $t \mapsto u_-(-t)$ is also a causal solution of (\ref{formeq}) with $D$, $B$ replaced by $-P$, $-B$, and $f$ replaced by $t \mapsto -f(-t) + \phi' u(-t)$, which also belongs to $\lh$. Thus $u_-$ is also continuous, but $u=u_++u_-$.
\end{proof}

\begin{corollary}\label{ivpenergy} Suppose that $u$ is a weak solution of \eqref{formeq}. Then for any $s \le t \in \bR$,
\[
E(t) \le E(s)+C \int_s^t \|f\|^2.
\]
\end{corollary}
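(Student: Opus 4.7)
The plan is to specialize the energy identity \eqref{eident} to the present case $Q \equiv 0$, $R \equiv 0$, which gives, for any weak solution $u$ and any $s \le t$,
\[
E(t) - E(s) = \int_s^t \langle -Bu(\tau) + f(\tau),\, u(\tau)\rangle\, d\tau.
\]
This identity was established in the body of the paper without any causality hypothesis on $u$: its derivation rested on Proposition \ref{spacesmooth} (smoothing in $t$ produces a strong solution) and the continuity half of Proposition \ref{energy}, neither of which required a common left-support. It therefore applies verbatim in the present setting.

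Next I would estimate the integrand in the standard Cauchy--Schwarz/Young manner: $|\langle Bu,u\rangle| \le \|B\|\,\|u\|^2$, and $|\langle f,u\rangle| \le \tfrac{1}{2}\|f\|^2 + \tfrac{1}{2}\|u\|^2$. Invoking the equivalence $C_*\|u\|^2 \le E \le C^*\|u\|^2$ from \eqref{enorm}, both contributions are absorbed into $CE(\tau) + C\|f(\tau)\|^2$ with the generic constant $C$ depending only on $C_*$, $C^*$, and $\|B\|_{{\cal B}(H)}$, yielding
\[
E(t) \le E(s) + C\int_s^t E(\tau)\, d\tau + C\int_s^t \|f(\tau)\|^2\, d\tau.
\]
An application of Gronwall's inequality then delivers the stated bound, with the resulting exponential factor absorbed into the generic constant $C$ in the paper's notation (which may consequently depend on $t - s$ alongside $C_*,C^*,\|B\|_{{\cal B}(H)}$).

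The hard part is really just the first paragraph --- verifying that the energy identity \eqref{eident} is genuinely available in the absence of causality --- and this is essentially a bookkeeping exercise: one rereads the proof of Proposition \ref{energy} and notes that causality was invoked only at the very end, when $s \to -\infty$ was taken to produce \eqref{energyest}, not for the identity \eqref{eident} itself. Everything thereafter is a textbook Gronwall argument, and no new analytic machinery is required.
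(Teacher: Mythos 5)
Your argument is essentially the paper's own proof: the paper likewise invokes the energy identity \eqref{eident} (noting it "applies to weak solutions, causal or not"), sets $R=0$, uses the boundedness of $B$ and the equivalence \eqref{enorm}, and concludes with Gronwall's inequality. Your extra bookkeeping about where causality entered the proof of Proposition \ref{energy}, and the remark that the generic constant then depends on $t-s$, are consistent with the paper's conventions and add nothing that diverges from its route.
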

\begin{proof}
The energy identity \eqref{eident} applies to weak solutions, causal or not. Take into account $R=0$, and use Gronwall's inequality, the boundedness of $B$, and the equivalence \eqref{enorm} of the energy with the norm in $H$.
\end{proof}

\begin{corollary} If $u_1$ and $u_2$ are weak solutions of \eqref{formeq} (with the same right-hand side $f \in \lh$), and $u_1(s) =u_2(s)$ for some $s \in \bR$, then $u_1 \equiv u_2$.
\end{corollary}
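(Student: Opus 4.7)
\medskip
\noindent\textbf{Proof plan.} The plan is to reduce to the difference $v = u_1 - u_2$, which is a weak solution of \eqref{formeq} with right-hand side $f \equiv 0$ (and with $Q \equiv 0$, as we are in the differential setting of this appendix), and to extract the conclusion from the energy identity combined with Gronwall's inequality applied in both time directions starting from the pivot $t = s$.

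First I would invoke the energy identity \eqref{eident} (the Corollary immediately following Proposition \ref{energy}), which holds for any weak solution without a causality assumption, to write
\[
E_v(t) - E_v(s) = -\int_s^t \langle B v(\tau), v(\tau)\rangle\,d\tau
\]
for every $t \in \bR$, where $E_v(t) = \frac{1}{2}\langle v(t), A v(t)\rangle$. Combining this with the equivalence of $\sqrt{E_v}$ and $\|v\|$ from \eqref{enorm} and with the bound $|\langle B v, v\rangle| \le \|B\|_{{\cal B}(H)} \|v\|^2$ produces
\[
|E_v(t) - E_v(s)| \le \frac{\|B\|_{{\cal B}(H)}}{C_*} \left|\int_s^t E_v(\tau)\,d\tau\right|.
\]
Since $v(s) = 0$ forces $E_v(s) = 0$, Gronwall's inequality applied on $[s,\infty)$ yields $E_v(t) = 0$ for every $t \ge s$, and by \eqref{enorm} we conclude $v(t) = 0$ for $t \ge s$.

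For the backward direction I would time-reverse: set $w(t) = v(s-t)$ and check that $w$ is a weak solution of an equation of the form \eqref{formeq} in which $P$ is replaced by $-P$ (still skew-adjoint, with the same domain $V$, so the graph norms coincide) and $B$ by $-B$ (still in ${\cal B}(H)$), with vanishing right-hand side and $w(0) = 0$. This substitution goes through cleanly precisely because $R \equiv 0$, so no memory term is present whose causal structure would be spoiled by reversing time. The argument of the previous paragraph then applies to $w$ on $[0,\infty)$, giving $w \equiv 0$ there, i.e.\ $v \equiv 0$ on $(-\infty,s]$. Combined with the forward conclusion, this gives $v \equiv 0$ on all of $\bR$, hence $u_1 = u_2$.

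The only mildly subtle point is the time reversal step, so that is where I would pay attention in a full writeup; but it is essentially a bookkeeping check on the definition of weak solution, since replacing $u(t)$ by $u(s-t)$ in \eqref{weakde} and changing the variable of integration simply negates the action of $P$ and $B^\ast$ against the test function. With that verified, the rest is standard energy-estimate machinery.
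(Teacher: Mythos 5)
Your proof is correct and follows essentially the same route as the paper: reduce to the difference $v=u_1-u_2$, a weak solution with zero right-hand side vanishing at $t=s$, and conclude via the energy identity \eqref{eident} together with Gronwall's inequality and the equivalence \eqref{enorm}. The paper simply cites its forward-in-time energy inequality (Corollary \ref{ivpenergy}) and leaves the backward direction implicit, so your explicit time-reversal step (legitimate precisely because $Q\equiv 0$, with $-P$ still skew-adjoint and $-B$ still bounded) is a point of extra care rather than a genuinely different argument.
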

\begin{proof}
Set $u=u_1-u_2$: $u$ is a weak solution with right-hand side $f=0$, and $u(s)=0$. The result follows immediately from Corollary \ref{ivpenergy}.
\end{proof}

\begin{theorem} Suppose that $T_0 \in \bR$ and $u_0 \in H$. Then there exists a unique weak solution of \eqref{formeq} for which $u(T_0)=u_0$.
\end{theorem}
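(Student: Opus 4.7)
The plan is as follows. Uniqueness has already been established in the preceding corollary, so only existence needs to be addressed; I would prove it by adapting the Galerkin construction from the proof of Theorem \ref{existence} to the initial value problem directly, without first splitting off a causal part. Concretely, fix an $H$-orthonormal family $\{\psi_k\}_{k=1}^\infty \subset V$ whose finite linear combinations are dense in $V$ (hence in $H$), and assemble the matrices $A^n = (\langle A\psi_i,\psi_j\rangle)$, $P^n = (\langle P\psi_i,\psi_j\rangle)$, $B^n = (\langle B\psi_i,\psi_j\rangle)$, and the $\bR^n$-valued function $F^n$ with $F^n_k(t) = \langle f(t),\psi_k\rangle$, exactly as in that proof. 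Set $u_0^n = \sum_{k=1}^n \langle u_0,\psi_k\rangle\psi_k$, so that $u_0^n \to u_0$ in $H$. For each $n$ the linear ODE system
\[
A^n (U^n)' + P^n U^n + B^n U^n = F^n, \qquad U^n_k(T_0) = \langle u_0,\psi_k\rangle \quad (1\le k\le n),
\]
has a unique globally defined solution $U^n \in H^1_{\rm loc}(\bR,\bR^n)$ (since $A^n$ is symmetric and positive-definite and $F^n$ is locally square-integrable), and I would set $u^n := \sum_{k=1}^n U^n_k \psi_k$.

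Next I would establish uniform bounds and pass to the weak limit. The standard energy identity $\tfrac{1}{2}\tfrac{d}{dt}\langle Au^n, u^n\rangle = -\langle B u^n, u^n\rangle + \langle \Pi_n f, u^n\rangle$, where $\Pi_n$ is orthogonal projection onto $\mathrm{span}\{\psi_1,\dots,\psi_n\}$ in $H$, combined with Gronwall's inequality and the coercivity of $A$, yields
\[
\|u^n(t)\|_H^2 \le C\, e^{C|t-T_0|}\left(\|u_0\|_H^2 + \int_{\min(t,T_0)}^{\max(t,T_0)}\|f(s)\|^2\, ds\right),
\]
with $C$ depending only on $C_*, C^*, \|B\|_{{\cal B}(H)}$, uniformly in $n$. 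Hence $\{u^n\}$ is bounded in $L^2_{\rm loc}(\bR,H)$, and a subsequence (still denoted $\{u^n\}$) converges weakly in $L^2_{\rm loc}(\bR,H)$ to some $u$. Testing the Galerkin equation against $\phi(t) = \phi_k(t)\psi_k$ with $\phi_k \in C_0^\infty(\bR)$ and $n \ge k$, integrating by parts (exploiting the skew-symmetry of $P$), and letting $n \to \infty$ reproduces the weak form \eqref{weakde} (with $R \equiv 0$) for all such $\phi$. Density of their finite linear combinations in $C_0^\infty(\bR,V)$ then shows that $u$ is a weak solution of \eqref{formeq}, and the first corollary of this appendix delivers $u \in C^0(\bR,H)$.

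The main obstacle is recovering the pointwise initial condition $u(T_0) = u_0$ from merely $L^2_{\rm loc}$-weak convergence. I would handle this by an Arzela-Ascoli argument. For each fixed $k \in \bN$ and all $n \ge k$, taking the $\psi_k$-component of the Galerkin equation (noting $\Pi_n \psi_k = \psi_k$) gives
\[
\frac{d}{dt}\langle A u^n(t), \psi_k\rangle = \langle u^n(t),\, P\psi_k - B^*\psi_k\rangle + \langle f(t),\psi_k\rangle,
\]
whose right-hand side is uniformly bounded in $L^2_{\rm loc}(\bR)$ independently of $n$ (by the energy bound together with $\psi_k \in V$). Cauchy-Schwarz then makes the family $\{t \mapsto \langle Au^n(t),\psi_k\rangle\}_{n\ge k}$ uniformly equicontinuous on compact intervals, while the energy estimate makes it uniformly bounded; after extracting a further diagonal subsequence, Arzela-Ascoli produces a uniform-on-compacts limit for each $k$, which the established $L^2_{\rm loc}$-weak convergence forces to coincide with $t \mapsto \langle A u(t),\psi_k\rangle$. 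Evaluating at $t = T_0$ gives
\[
\langle Au(T_0), \psi_k\rangle = \lim_n \langle A u_0^n, \psi_k\rangle = \langle A u_0,\psi_k\rangle \qquad \text{for every } k \in \bN;
\]
the totality of $\{\psi_k\}$ in $H$ and the invertibility of $A$ then yield $u(T_0) = u_0$, completing the construction.
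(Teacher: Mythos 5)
Your proof is correct and takes essentially the same route as the paper, whose own ``proof'' is only a one-sentence sketch declaring the argument precisely analogous to the Galerkin construction of Theorem \ref{existence} together with the energy estimate of Corollary \ref{ivpenergy}. Your extra care in recovering the initial condition $u(T_0)=u_0$ --- via uniform equicontinuity of the components $t \mapsto \langle A u^n(t), \psi_k \rangle$ and an Arzela--Ascoli/diagonal argument --- soundly fills in a detail the paper leaves implicit.
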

\begin{proof} A proof of this result is precisely analogous to the proof of Theorem \ref{existence}: the solution is approximated by a Galerkin procedure and the solution of systems of ordinary differential equations, and the energy in the error estimated (in this instance, via Corollary \ref{ivpenergy}). 
\end{proof}

Results precisely analogous to those established in the last section hold concerning regular dependence on the coefficient operators for weak solutions with specified initial data and right-hand side. We leave the reader to formulate these results, whose proofs are minor variants of those given above.

\end{document}